\definecolor{lilla}{HTML}{750787}
\renewcommand*{\backref}[1]{}
\renewcommand*{\backrefalt}[4]{\ifcase #1\or [p.~#2.]\else [pp.~#2.]\fi }
\newcommandx{\set}[2][1=1]{\ensuremath{\{#1,\ldots,#2\}}}
\newcommandx{\tlog}[3][1=,3=]{\log_{#1}^{#3}(#2)}
\newcommandx{\ith}[2][1=th]{#2\nobreakdash-#1}
\newtheorem{lemma}{Lemma}
\newtheorem{theorem}{Theorem}
\newtheorem{hypothesis}{Hypothesis}
\newtheorem{proposition}{Proposition}
\newtheorem{observation}{Observation}
\theoremstyle{definition}
\newtheorem{problem}{Problem}
\newtheorem{construction}{Construction}
\newcommand{\cqed}{\hfill$\diamond$}
\crefname{observation}{Observation}{Observations}
\crefname{rrule}{Reduction Rule}{Reduction Rules}
\crefname{construction}{Construction}{Constructions}
\crefname{theorem}{Theorem}{Theorems}
\Crefname{theorem}{Thm.}{Thms.}
\crefname{corollary}{Corollary}{Corollaries}
\crefname{lemma}{Lemma}{Lemmata}
\Crefname{corollary}{Cor.}{Cors.}
\crefname{proposition}{Proposition}{Propositions}
\Crefname{proposition}{Prop.}{Props.}
\newcommand{\ctw}{C_{12}}
\newcommand{\cei}{C_{8}}
\newcommand{\cp}{\tilde{P}}
\DeclareMathOperator{\interior}{int}
\DeclareMathOperator{\RNG}{RNG}
\DeclareMathOperator{\RCG}{RCG}
\DeclareMathOperator{\GAB}{GG}
\DeclareMathOperator{\DT}{DT}
\crefname{problem}{Problem}{Problems}
\Crefname{problem}{Prob.}{Probs.}
\newcommandx{\decprob}[6][3=Input,5=Question]{\begin{samepage}
  \begingroup
\begin{problem}\label{prob:#2}{\setlength{\fboxsep}{1pt}\colorbox{gray!17!white}{\textsc{#1}}}
  \nopagebreak[4]\end{problem}\nopagebreak[4]\vspace{-0.6em}
  \par\noindent\hangindent=\parindent\textbf{#3}:  #4\nopagebreak[4]
  \par\noindent\hangindent=\parindent\textbf{#5}:  #6
  \par\medskip
  \endgroup
  \end{samepage}
}
\newcommand{\N}{\mathbb{N}}
\newcommand{\Nzero}{\mathbb{N}_0}
\newcommand{\Z}{\mathbb{Z}}
\newcommand{\R}{\mathbb{R}}
\newcommand{\bigO}{\mathcal{O}}
\renewcommand{\O}{\bigO}
\newcommand{\fvs}{\varphi}
\newcommand{\ds}{\gamma}
\newcommand{\is}{\alpha}
\newcommand{\prob}[1]{{\normalfont\textsc{#1}}}
\newcommand{\vcTsc}{\prob{Vertex Cover}}
\newcommand{\vcAcr}{\prob{VC}}
\newcommand{\fvsTsc}{\prob{Feedback Vertex Set}}
\newcommand{\fvsAcr}{\prob{FVS}}
\newcommand{\tcTsc}{\prob{3-Colorability}}
\newcommand{\tcAcr}{\prob{3-Col}}
\newcommand{\isTsc}{\prob{Independent Set}}
\newcommand{\isAcr}{\prob{IS}}
\newcommand{\dsTsc}{\prob{Dominating Set}}
\newcommand{\dsAcr}{\prob{DS}}
\newcommand{\hcTsc}{\prob{Hamiltonian Cycle}}
\newcommand{\hcAcr}{\prob{HC}}
\newcommand{\cocl}[1]{\ensuremath{\operatorname{#1}}}
\newcommand{\NP}{\cocl{NP}}
\newcommand{\ETHbreaks}{the ETH fails}
\newcommand{\calC}{\mathcal{C}}
\newcommand{\emb}{\operatorname{emb}}
\newcommand{\tref}[1]{{\scriptsize(\Cref{#1})}}
\newcommand{\etal}{et~al.}
\newcommand{\ceq}{\ensuremath{\coloneqq}}
\newcommand{\eps}{\varepsilon}
\newcommand{\cball}[1]{\overline{B}_{#1}}
\newcommand{\tu}{\tilde{u}}
\DeclareMathOperator{\dist}{d}
\DeclarePairedDelimiter{\abs}{\lvert}{\rvert}
\newcommand{\wilog}{without loss of generality}
\newcommand{\Wilog}{Without loss of generality}
\newcommand{\thetitle}{Most Classic Problems Remain NP-hard on Relative Neighborhood Graphs and their Relatives\footnote{This work is based on the first author's master thesis.}}
\date{}
\title{\thetitle} 
\author{Pascal Kunz\footnote{Partially supported by DFG Research Training Group 2434 ``Facets of Complexity''.}}
\author{Till Fluschnik\footnote{Supported by DFG, project TORE, NI~369/18.}}
\author{Rolf Niedermeier}
\author{Malte Renken\footnote{Supported by DFG, project MATE, NI~369/17}}
\affil{Technische Universit\"at Berlin, Algorithmics and Computational Complexity, Berlin, Germany \\ \symbol{123}p.kunz.1, till.fluschnik, rolf.niedermeier, m.renken\symbol{125}@tu-berlin.de}
\begin{document}
\maketitle

\begin{abstract}
Proximity graphs have been studied for several decades,
motivated by applications in computational geometry,
geography, data mining, and many other fields.
However, the computational complexity of classic graph problems on proximity graphs mostly remained open.
We now study
\tcTsc{},
\dsTsc{},
\fvsTsc{},
\hcTsc{},
and \isTsc{} 
on the proximity graph classes
relative neighborhood graphs,
Gabriel graphs, 
and relatively closest graphs.
We prove that all of the problems remain \NP-hard on these graphs,
except for \tcTsc{} and \hcTsc{} on relatively closest graphs,
where the former is trivial and the latter is left open.
Moreover,
for every \NP-hard case we additionally show
that no~$2^{o(n^{1/4})}$-time algorithm exists
unless~\ETHbreaks,
where~$n$ denotes the number of vertices.
\end{abstract}

\section{Introduction}
\label{sec:intro}

Proximity graphs describe the distance relationships between points in the plane
or higher-dimensional structures.
They are mostly studied in computational geometry, 
yet arise in several fields of science and engineering~\cite{Jaromczyk1992,Toussaint2014}:
most obviously in geography, less obviously in data mining \cite{Vries2016,Gyllensten2015}, 
computer vision \cite{Xu2019}, 
the design of mobile ad-hoc networks \cite{Baccelli2010, Papageorgiou2016}, 
the design of crowd-movement sensors \cite{chilipirea2015}, 
analyzing road traffic \cite{Watanabe2010}, and
describing the spread of a species of mold \cite{Adamatzky2009}. 
In this paper,
we study the computational complexity of classic NP-complete problems on three specific proximity graphs:
relative neighborhood graphs (RNGs)~\cite{Toussaint1980}, 
Gabriel graphs (GGs)~\cite{Gabriel1969}, 
and relatively closest graphs (RCGs)~\cite{Lankford1969}.
All three are subgraphs of the better-known Delaunay triangulation (DT).
For DTs, the restrictions of some classic \NP-complete graph problems have already been studied~\cite{Cimikowski1990,Dillencourt1996} and we extend this research to these three classes.

RNGs, GGs, and RCGs
are examples of empty region graphs~\cite{Cardinal2009}. 
Every pair of points is associated with a region in the plane,
their region of influence,
and is connected by an edge if there is no other point in that region
(see~\cref{fig:allthree}). 
\begin{figure}
 \centering
 \begin{tikzpicture}
  \def\sc{0.475}
  \def\xr{1*\sc}
  \def\yr{1*\sc}
  \tikzpreamble{}
  
  \newcommand{\thepoints}{\node (p1) at (0,2*\yr)[xnode]{};
    \node (p2) at (2*\xr,0*\yr)[xnode]{};
    \node (p3) at (2*\xr,3*\yr)[xnode]{};
    \node (p4) at (5*\xr,3/2*\yr)[xnode]{};
  }
  \newcommand{\theregion}[1]{\clip(#1*\xr,-0.4*\yr) rectangle (5.5*\xr,3.4*\yr);
  }
  \newcommand{\mylab}[1]{\node at (-0.3*\xr,3.5*\yr)[]{#1};}
  
  \begin{scope}
   \mylab{(a)};
   \thepoints{}
   \node at (p1.south)[anchor=north,align=right]{$p_1$};
   \node at (p2.south)[anchor=north,align=right]{$p_2$};
   \node at (p3.south)[anchor=north,align=right]{$p_3$};
   \node at (p4.south)[anchor=north,align=right]{$p_4$};
  \end{scope}
  
  \begin{scope}[xshift=7.25*\xr cm]
  \mylab{(b)};
   \thepoints{}
   \draw[xedge] (p2) to (p1) to (p3);
   \theregion{1.5}
   \draw[red, thick,dashed] (p2) circle (3.354102*\sc cm);
\draw[thick, green!50!black,dashdotted] (p4) circle (3.354102*\sc cm);
\end{scope}
  
  \begin{scope}[xshift=14.5*\xr cm]
  \mylab{(c)};
   \thepoints{}
   \draw[xedge] (p2) to (p1) to (p3);
   \draw[xedge] (p2) to (p4) to (p3);
   \theregion{-0.5}
   \draw[red, thick,dashed] (p2) circle (3*\sc cm);
   \draw[thick, blue,dotted] (p3) circle (3*\sc cm);
  \end{scope}
  
  \begin{scope}[xshift=21.75*\xr cm]
  \mylab{(d)};
   \thepoints{}
   \draw[xedge] (p2) to (p1) to (p3);
   \draw[xedge] (p2) to (p4) to (p3);
   \draw[xedge] (p2) to (p3);
   \theregion{-0.5}
\draw[magenta!70!black, thick,dashdotdotted] (2.5*\xr,1.75*\yr) circle (2.5*\sc cm);
  \end{scope}

\end{tikzpicture}
 \caption{(a) Set~$P$ of four points~$p_1=(0,2)$, $p_2=(2,0)$, $p_3=(2,3)$, and~$p_4=(5,3/2)$.
 (b) RCG on~$P$. 
 Point~$p_4$ is not adjacent with~$p_2$,
since~$p_3$ 
lies in the region of influence indicated with green (dashdotted) and red (dashed). 
(c) RNG on~$P$.
 Points~$p_2$ and~$p_3$ are not adjacent since~$p_1$ is their region of influence indicated with red (dashed) and blue (dotted). 
 (d) GG on~$P$.
 Points~$p_1$ and~$p_4$ are not adjacent since~$p_2$
 is in their region of influence indicated with magenta (dashdotted).}
 \label{fig:allthree}
\end{figure}
In RNGs (RCGs),
two points' region of influence is the intersection of open (closed) circles centered on each of the points with a radius equal to their distance.
In a GG, 
two points' region of influence is a circle whose center is midway between them 
and whose diameter is their distance.

\paragraph*{Motivation.}

In a railway network,
it may make sense to build a track directly
from one city to another,
if there is no third city in between.
If we interpret the area between two cities as a region of influence, then
this makes proximity graphs such as RNGs
plausible models for such networks.
One might want to build as few 
maintenance facilities for the network as possible
such that every track has a facility at one of its endpoints.
This is an instance of the \vcTsc{} (\vcAcr{}) problem
(closely related to \isTsc{} (\isAcr{})).
While \vcAcr{} is \NP-hard on general graphs,
one wonders whether it might be easier on proximity graphs. 
We will show that the problem remains \NP-hard on RCGs,
RNGs,
and GGs.

\paragraph*{Related Work.}

Existing combinatorial results on the three graph classes include listing forbidden subgraphs, permitted graphs, and bounds on the edge density~\cite{Bose2012,Cimikowski1992,Jaromczyk1992,Matula1980,Urquhart1983}.
Much algorithmic research on proximity graphs has focused on devising algorithms that efficiently compute the proximity graph from a point set (see \cite{Mitchell2017} for an overview).
On Delaunay triangulations, \hcTsc{} is \NP-hard~\cite{Dillencourt1996},
whereas \tcTsc{} is polynomial-time solvable~\cite{Cimikowski1990}.
Cimikowski conjectured \tcTsc{} to be NP-hard on RNGs and GGs~\cite{Cimikowski1989}.
Furthermore, he proposed a heuristic for coloring GGs and a linear-time algorithm for computing a~$4$-coloring in RNGs,~\cite{Cimikowski1990} but the latter has some issues, which we will discuss in \cref{sec:tc}.

\paragraph*{Our Contributions.}

\cref{tab:results} summarizes our results.
\begin{table}
 \centering
\renewcommand{\arraystretch}{1.05}
 \caption{Overview of our results.
Herein,
 $\Delta$ denotes the maximum vertex degree.
 \\
 $^\dagger$ no~$2^{o(n^{1/4})}$-time algorithm exists unless~\ETHbreaks{}, 
 where $n$ denotes the number of vertices}
 \label{tab:results}
 \begin{tabular}{@{}rp{0.18\textwidth}p{0.18\textwidth}p{0.19\textwidth}@{}}\toprule
  & RCGs & RNGs  &  GGs
  \\\midrule\midrule
  \hyperref[sec:tc]{\tcTsc{} (\tcAcr{})}  & trivial  & \multicolumn{2}{c}{\cellcolor{lightgray!15!white}{\NP-hard$^\dagger$, even if~$\Delta= 7$~~\tref{thm:tcNP}}} 
  \\
  \hyperref[sec:ds]{\dsTsc{} (\dsAcr{})} & \multicolumn{3}{c}{\cellcolor{lightgray!25!white}{\NP-hard$^\dagger$, even if~$\Delta= 4$~~\tref{thm:dsNP}}}
  \\
  \hyperref[sec:fvs]{\fvsTsc{} (\fvsAcr)} & \multicolumn{3}{c}{\cellcolor{lightgray!15!white}{\NP-hard$^\dagger$, even if~$\Delta= 4$~~\tref{thm:fvsNP}}}
  \\
  \hyperref[sec:hc]{\hcTsc{} (\hcAcr)} & \emph{open} & \multicolumn{2}{c}{\cellcolor{lightgray!25!white}{\NP-hard$^\dagger$, even if~$\Delta=4$~~\tref{thm:hcNP}}} 
  \\
  \hyperref[sec:is]{\isTsc{} (\isAcr)} & \multicolumn{3}{c}{\cellcolor{lightgray!15!white}{\NP-hard$^\dagger$, even if~$\Delta= 4$~\tref{thm:isNP}}}
  \\
  \bottomrule
 \end{tabular}
\end{table}
We prove that 
\tcTsc{} (\tcAcr{}), \dsTsc{} (\dsAcr{}), \fvsTsc{} (\fvsAcr), \hcTsc{} (\hcAcr), and \isTsc{} (\isAcr{})
are \NP-hard on RNGs and GGs,
in particular confirming the aforementioned conjecture by Cimikowski~\cite{Cimikowski1989}.
On RCGs \tcAcr{} is trivial, but
we prove that~\dsAcr{}, \fvsAcr{}, and \isAcr{}
remain \NP-hard.
All our \NP-hardness results hold true even
for graphs of fairly small maximum degree 
(at most seven in the case of \tcAcr{},
and four in all other cases).
We complement each \NP-hardness result
with a running-time lower bound of~$2^{o(n^{1/4})}$
based on the Exponential-Time Hypothesis,
where~$n$ is the number of vertices.

\paragraph*{Our Technique.}

In our NP-hardness proofs (see~\cref{tab:results}),
we give polynomial-time many-one reductions 
from each problem's restriction
to planar graphs with maximum degree three or four.
We proceed as follows 
(see \cref{fig:exgraph} for an illustration).
We exploit the fact that for any planar graph with maximum degree at most four,
we can compute in polynomial~time a $2$-page book embedding~\cite{Bekos2016},
a very structured representation of the input graph.
Then,
we translate the book embedding's structure into a grid-like structure.
Each reduction uses three types of gadgets: 
to represent vertices, 
to represent edges, 
and to fill the space between them
in order to prevent the appearance of unwanted edges between the other gadgets.

\begin{figure}
 \centering
 \begin{tikzpicture}
  \def\xr{1}
  \def\yr{1}
  \tikzpreamble{}
  \def\nsc{0.5}
  
  \begin{scope}
   \node at (-0.25*\xr,2*\yr)[]{(a)};
   \node (v1) at (0,0)[xnode,label=-90:{$v_1$}]{};
   \node (v2) at (0,1*\yr)[xnode,label=90:{$v_2$}]{};
   \node (v3) at (1*\xr,1*\yr)[xnode,label=90:{$v_3$}]{};
   \node (v4) at (1*\xr,0)[xnode,label=-90:{$v_4$}]{};
   \draw[xedge] (v1) to node[midway,left]{$e_1$}(v2);
   \draw[xedge] (v1) to node[midway,below]{$e_2$}(v4);
   \draw[xedge] (v2) to node[midway,above]{$e_3$}(v3);
   \draw[xedge] (v2) to node[midway,right]{$e_4$}(v4);
  \end{scope}
  
  \begin{scope}[xshift=2*\xr cm]
  \node at (-0.25*\xr,2*\yr)[]{(b)};
   \theexgraph{}
  \end{scope}
  
  \tikzstyle{znode}=[fill=black,minimum width=\xr*0.66 cm,minimum height=\yr*0.33 cm];
  \tikzstyle{zpath}=[-,color=gray!66!white,opacity=1,line width=0.19*\xr cm];
  \newcommandx{\theconnect}[7][5=-0.1,7=0.1]{\draw[zpath] ($(#1.north#4)+(#5*\xr,0)$) to ($(#1.north#4)+(#5*\xr,0)+(0,#3*\yr)$) to ($(#2.north#6)+(#7*\xr,0)+(0,#3*\yr)$) to ($(#2.north#6)+(#7*\xr,0)$);
  }
  \newcommand{\theconnectC}[3]{\draw[zpath] (#1.north) to ($(#1.north)+(0,#3*\yr)$) to ($(#2.north)+(0,#3*\yr)$) to (#2.north);
  }
  \begin{scope}[xshift=6*\xr cm]
   \node at (-0.25*\xr,2*\yr)[]{(c)}; 
   \draw[fill=gray!22!white,draw=none] (-0.33*\xr,-0*\yr) rectangle (3.33*\xr,1.75*\yr);
   \node (v1) at (0,0)[znode]{};
   \node (v2) at (1*\xr,0)[znode]{};
   \theconnect{v1}{v2}{0.5}{ east}{ west}
   \node (v3) at (2*\xr,0)[znode]{};
   \theconnect{v2}{v3}{0.5}{ east}{}[0]
   \node (v4) at (3*\xr,0)[znode]{};
   \theconnect{v2}{v4}{1}{}[0]{ west}
   \theconnect{v4}{v1}{1.5}{ east}{ west}
  \end{scope}
  
  \tikzstyle{znode}=[fill=black,minimum width=\xr*0.4 cm,minimum height=\yr*0.33 cm];
  \begin{scope}[xshift=10*\xr cm]
   \node at (-0.25*\xr,2*\yr)[]{(d)};
   \draw[fill=gray!22!white,draw=none] (-0.2*\xr,-0*\yr) rectangle (3.2*\xr,1.75*\yr);
   \node (v1) at (0,0)[znode]{};
   \node (v2) at (1*\xr,0)[znode]{};
   \theconnect{v1}{v2}{0.5}{}[0]{}[0]
   \node (v3) at (2*\xr,0)[znode]{};
   \theconnect{v2}{v3}{0.5}{}[0]{}[0]
   \node (v4) at (3*\xr,0)[znode]{};
   \theconnect{v2}{v4}{1}{}[0]{}[0]
   \theconnect{v4}{v1}{1.5}{}[0]{}[0]
  \end{scope}
  
 \end{tikzpicture}
 \caption{(a) A graph~$G$ with vertex set~$\{v_1,\dots,v_4\}$. 
 (b) A 1-page book embedding of~$G$.
 (c) and (d) Illustration of our technique,
 where black rectangles correspond to vertex gadgets,
 thick gray lines to edge/connector gadgets,
 and light gray areas indicate filler gadgets.}
 \label{fig:exgraph}
\end{figure}

\section{Preliminaries}
\label{sec:prelim}

Let~$\N \coloneqq \{1, 2, 3, \dots\}$ and $\Nzero \coloneqq \{0\} \cup \N$.
We use basic notions from graph theory~\cite{Diestel2017}.

\paragraph*{Proximity graphs.}
Let~$\dist \colon \mathbb{R}^2 \times \mathbb{R}^2 \rightarrow \mathbb{R}$ denote the Euclidean distance between two points.
The \emph{open} and \emph{closed ball} with radius $r>0$ and center $p\in \R^2$ are~$B_r(p)\coloneqq \{q \in \R^2 \mid \dist(p,q) < r\}$ and~$\cball{r}(p) \coloneqq \{ q \in \mathbb{R}^2 \mid \dist(p,q) \leq r\}$.
The Delaunay triangulation of $P\subseteq \R^2$ (see, e.g.,~\cite[Ch.~9]{DeBerg2008}) is denoted by $\DT(P)$.

A \emph{template region}~\cite{Cardinal2009}
is a function~$R\colon \binom{\R^2}{2} \rightarrow 2^{\R^2}$ that
assigns a region of the plane, called the \emph{region of influence},
to each pair of points in the plane.
Given a template region~$R_{\calC}$ and points $p_1,p_2 \in \R^2$, a third point $p_3 \in \R^2 \setminus \{p_1,p_2\}$ is a \emph{$\calC$-blocker} for $\{p_1,p_2\}$ if~$p_3 \in R_{\calC}(p_1,p_2)$.
For a finite set of points $P = \{p_1,\ldots, p_n\} \subseteq \R^2$, 
the $\calC$-graph \emph{induced} by $P$ is $\calC(P) \coloneqq (\{v_1,\ldots,v_n\},E_{\calC(P)})$ with $E_{\calC(P)}\coloneqq \{\{v_i,v_j\} \mid (R_{\calC}(p_i,p_j) \cap P) \setminus \{p_i,p_j\} = \emptyset \text{ and } i\neq j\}$.
The class $\calC$ contains a graph~$G$ if there is a finite set of points $P\subseteq \R^2$ with~$\calC(P) = G$.

We are interested in three template regions and the graph classes defined by them:
\begin{description}
	\item[\emph{Relatively closest graphs (RCGs)}:] Defined by
	\begin{align*}
		R_{\RCG}(p_1,p_2) & \coloneqq \{ p_3\in \R^2 \mid \dist(p_1,p_2) \geq \max\{\dist(p_1,p_3),\dist(p_2,p_3)\}\} \\
		& = \cball{\dist(p_1,p_2)}(p_1) \cap \cball{\dist(p_1,p_2)}(p_2).
	\end{align*}
	\item[\emph{Relative neighborhood graphs (RNGs)}:] Defined by
	\begin{align*}
		R_{\RNG}(p_1,p_2) & \coloneqq \{ p_3 \in \R^2 \mid \dist(p_1,p_2) > \max\{\dist(p_1,p_3),\dist(p_2,p_3)\}\}\\
		&= B_{\dist(p_1,p_2)}(p_1) \cap B_{\dist(p_1,p_2)}(p_2).
	\end{align*}
	\item[\emph{Gabriel graphs (GGs)}:] Defined by
	\begin{align*}
		R_{\GAB}(p_1,p_2) & \coloneqq \{p_3 \in \R^2 \mid \dist(p_1,p_2)^2 \geq \dist(p_1,p_3)^2 + \dist(p_2,p_3)^2\}\\
		 &=\cball{\dist (p_1,p_2)/2}(q),
	\end{align*}
	where $q$ is the midpoint between $p_1$ and $p_2$.
\end{description}
A \emph{$\calC$-embedding} of a graph~$G=(V,E)$ 
is a map~$\emb\colon V \rightarrow \R^2$ such that $\calC(\emb(V)) = G$.\footnote{To simplify notation, 
we write~$v$ instead of~$\emb(v)$ to refer to the point 
at which vertex~$v$ is embedded.}
Of course, $G\in \calC$ if and only if $G$~admits a~$\calC$-embedding.

For any finite point set $P$, 
it holds that
$E_{\RCG(P)} \subseteq E_{\RNG(P)} \subseteq E_{\GAB(P)} \subseteq \DT(P)$~\cite{Cimikowski1992}.
RCGs cannot contain~$C_3$ as a subgraph and none of the three can contain $K_4$ or $K_{2,3}$~\cite{Cimikowski1992,Matula1980,Urquhart1983}.
Moreover,~$p_3 \in R_{\GAB}(p_1,p_2)$ if and only if the angle at $p_3$ formed by the lines to $p_1$ and $p_2$ is at least~$90\degree$~\cite{Matula1980}.
Finally, the following lemma will be used to prove that graphs are in each graph class:
\begin{lemma}[\cite{Matula1980}]
	\label{lemma:planar}
	Let $P$ be a set of points in the plane and $G$ the RCG, RNG or GG induced by $P$.
	Then, the straight-line drawing of~$G$ induced by~$P$ is planar.
\end{lemma}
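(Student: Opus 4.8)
The plan is to prove the statement only for the Gabriel graph and then transfer it to the other two classes for free. For any point set~$P$ we have $E_{\RCG(P)} \subseteq E_{\RNG(P)} \subseteq E_{\GAB(P)}$, and in all three straight-line drawings the vertex~$v_i$ is placed at the same point~$p_i$. Hence the drawings of~$\RCG(P)$ and~$\RNG(P)$ are obtained from the drawing of~$\GAB(P)$ by deleting edges, and deleting edges from a planar straight-line drawing keeps it planar; so it suffices to show that the straight-line drawing of~$\GAB(P)$ is planar. Concretely, I have to rule out two things: two edges crossing in a common interior point, and a vertex lying in the interior of a non-incident edge.

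For the crossing, suppose for contradiction that the segments~$p_1 p_2$ and~$p_3 p_4$ of two edges $\{v_1,v_2\}, \{v_3,v_4\} \in E_{\GAB(P)}$ meet in a common interior point. Then their four endpoints are distinct and in convex position, and $p_1 p_3 p_2 p_4$ is a convex quadrilateral whose diagonals are exactly these two segments. First I would invoke the elementary fact that its four interior angles $\angle p_4 p_1 p_3$, $\angle p_1 p_3 p_2$, $\angle p_3 p_2 p_4$, and $\angle p_2 p_4 p_1$ sum to~$360\degree$; by averaging, at least one of them is at least~$90\degree$.

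Now I would apply the characterization recorded above, namely that $p \in R_{\GAB}(q,r)$ precisely when the angle at~$p$ subtended by~$q$ and~$r$ is at least~$90\degree$. Each of the four angles, if at least~$90\degree$, produces a blocker for one of the two edges: e.g.\ $\angle p_1 p_3 p_2 \geq 90\degree$ places~$p_3$ in $R_{\GAB}(p_1,p_2)$, contradicting $\{v_1,v_2\} \in E_{\GAB(P)}$, while $\angle p_4 p_1 p_3 \geq 90\degree$ places~$p_1$ in $R_{\GAB}(p_3,p_4)$, contradicting $\{v_3,v_4\} \in E_{\GAB(P)}$; the remaining two cases are symmetric. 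This rules out interior crossings. The degenerate case of a vertex~$p_3$ lying in the interior of a non-incident segment~$p_1 p_2$ falls to the same criterion: there $\angle p_1 p_3 p_2 = 180\degree \geq 90\degree$, so~$p_3$ blocks~$\{p_1,p_2\}$, again a contradiction.

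The only genuinely geometric step is the first one: verifying that two segments crossing in their interiors force their endpoints into convex position, so that the quadrilateral angle sum is~$360\degree$ and the averaging argument applies. Once that is in place, everything reduces to the Gabriel angle criterion together with the edge-inclusion chain, and I expect no further obstacle.
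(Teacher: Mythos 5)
The paper does not prove this lemma at all—it is imported verbatim from Matula and Sokal~\cite{Matula1980}—so there is no in-paper argument to compare against. Your proof is correct and is essentially the classical one: reduce to GGs via the inclusion chain $E_{\RCG(P)}\subseteq E_{\RNG(P)}\subseteq E_{\GAB(P)}$ on a fixed point placement, then kill a hypothetical crossing by observing that the four interior angles of the convex quadrilateral formed by the endpoints sum to $360\degree$, so one is at least $90\degree$ and the corresponding point is a GG blocker by the angle criterion; the collinear/degenerate case is absorbed by the $180\degree$ observation. No gaps.
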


\paragraph*{Book embeddings.}
 Our \NP-hardness proofs use 2-page book embeddings.
 A \emph{$k$-page book embedding} of a graph~$G=(V,E)$ consists of
(i) an edge partition~$E=E_1\uplus\dots\uplus E_k$, and
(ii) for every~$i\in\set{k}$, a planar embedding~$\emb_i$ of~$(V,E_i)$ in~$\R\times\R_{\geq 0}$, where~$\emb_i(v)=\emb_j(v)\in\R\times\{0\}$ for every~$v\in V$, $i,j\in\set{k}$.
The following result due to Bekos \etal~\cite{Bekos2016} will play an important role in this work:

\begin{theorem}[\cite{Bekos2016}]
	\label{thm:prelim-compute-be}
	Every planar graph with maximum degree at most four admits a 2-page book embedding.
	Such an embedding can be computed in quadratic time.
\end{theorem}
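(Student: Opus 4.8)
The plan is to prove the statement through the classical equivalence between 2-page book embeddings and subhamiltonicity: a graph admits a 2-page book embedding if and only if it is a subgraph of some planar Hamiltonian graph. Indeed, given a Hamiltonian supergraph $G^+\supseteq G$ with Hamiltonian cycle $C$, one reads the spine order off $C$; since $G^+$ is drawn planarly with $C$ a Hamiltonian cycle, $C$ is a closed curve splitting the plane into an inside and an outside, and every remaining edge of $G^+$ (hence of $G$) is a chord lying entirely on one side. Routing the inside chords on one page and the outside chords on the other yields a valid 2-page embedding, since planarity forbids two chords on the same side from crossing. Thus it suffices to show that every planar graph $G$ of maximum degree at most four can be augmented, by adding edges while preserving planarity, to a planar Hamiltonian graph $G^+$. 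Note that $G^+$ itself need not respect the degree bound---only the spine order and the side assignment of $G$'s own edges matter---which grants considerable freedom in the augmentation.

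First I would fix a planar embedding of $G$ and reduce the construction along its connectivity structure. Decompose $G$ into its blocks via the block--cut tree and each block into triconnected components via the SPQR-tree; the task then becomes to assemble a Hamiltonian cycle compatible with the embedding by recursively producing Hamiltonian paths (with prescribed endpoints at the poles of each split pair) inside the components and concatenating them across separation pairs and cut vertices. The base case is provided by the 4-connected pieces: here I would invoke Tutte's theorem that every 4-connected planar graph is Hamiltonian, together with its strengthening to Hamiltonian-connectivity (a Hamiltonian path exists between any prescribed pair of vertices), which delivers exactly the paths the recursion needs. For the series, parallel, and remaining rigid components I would glue the recursively obtained subpaths while respecting the cyclic order of the embedding at each split pair.

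The hard part will be this gluing step and its interaction with the degree bound. Two obstacles intertwine: (i) Tutte's theorem requires 4-connectivity, whereas a generic maximum-degree-four planar graph is usually only 2- or 3-connected---and Tutte's own cubic example shows that 3-connectivity alone does not force Hamiltonicity---so I must locally augment the rigid components toward 4-connectivity without violating planarity; and (ii) when stitching two component paths at a separating pair $\{u,v\}$, the augmenting edges must attach at $u$ and $v$, which may already be saturated in $G$. This is precisely where the hypothesis $\Delta\le 4$ is used: it guarantees that at each relevant vertex a free planar slot remains to route an augmenting edge or to thread the Hamiltonian path through. Establishing that a globally consistent choice always exists requires a case analysis over the local configurations at split pairs (parallel bundles, series chains, and the boundary cycle of a rigid component); this analysis is the technical core and the main difficulty.

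Finally, to obtain the claimed bound I would keep every step constructive and local. The block--cut tree and the SPQR-tree are computable in linear time; Hamiltonian cycles and paths in 4-connected planar graphs can be extracted in linear time by the algorithm of Chiba and Nishizeki; and each augmentation or concatenation touches only a bounded neighborhood of one split pair. Summing over the linearly many components, together with the bookkeeping for recombination along the decomposition tree, yields an overall quadratic-time algorithm, matching the stated running time.
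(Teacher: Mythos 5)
First, note that the paper does not prove this statement at all: it is imported as a black box from Bekos, Gronemann, and Raftopoulou~\cite{Bekos2016}, so there is no internal proof to compare against. Judged on its own terms, your proposal correctly sets up the classical Bernhart--Kainen equivalence between 2-page book embeddability and subhamiltonicity, and the derivation of a 2-page embedding from a planar Hamiltonian supergraph is fine. But the entire content of the theorem is the claim that every planar graph with $\Delta\le 4$ is subhamiltonian, and for that claim your argument has a genuine gap: you explicitly defer ``the technical core and the main difficulty,'' namely the case analysis at split pairs that would make the SPQR recursion go through. That case analysis \emph{is} the theorem --- in \cite{Bekos2016} it occupies essentially the whole paper --- so what you have is a plausible plan, not a proof.

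Two of the mechanisms you rely on are moreover unsubstantiated or wrong as stated. (i) ``Locally augment the rigid components toward 4-connectivity without violating planarity'' is not a routine step: planarity caps the number of addable edges, and eliminating 3-separators of a 3-connected planar component by edge additions alone is in general impossible (a triangulation with separating triangles is already edge-maximal planar yet not 4-connected), so the base case of your recursion is not secured. (ii) The assertion that $\Delta\le 4$ ``guarantees that at each relevant vertex a free planar slot remains'' is false on its face --- a degree-4 pole of a split pair has no free incidences --- and the degree bound cannot be entering through such a soft counting argument: general planar graphs do \emph{not} all admit 2-page embeddings, and even the cases $\Delta=5,6$ required substantially new ideas, so the hypothesis must be exploited in a structurally delicate way that your sketch does not supply. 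Until the gluing analysis is actually carried out and the augmentation step is shown to be feasible under these constraints, the proof is incomplete.
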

\noindent

The following terminology will be useful in our NP-hardness proofs.
Consider a graph~$G=(V,E)$ and a 2-page book embedding of~$G$.
Let~$v_1,\ldots,v_n$ be the vertices of the graph ordered in such a way that $\emb_i(v_j) < \emb_i(v_{j+1})$ for $i\in\{1,2\}$ and every $j\in\{1,\ldots,n-1\}$.
We will say that~$v_1,\ldots,v_n$ is the \emph{order in which the vertices appear on the spine}.
Let~$r\in\{1,2\}$.
We will use~$N_r(v)\coloneqq\{v'\,|\,\{v,v'\}\in E_r\}$ 
to denote~$v$'s~\textit{$E_r$-neighborhood}
and~$\deg_r(v)\coloneqq|N_r(v)|$ to denote~$v$'s~\textit{$E_r$-degree}. 
For an edge~$e=\{v_i,v_j\}$,~$i<j$, 
define its \textit{length} as $\ell(e)\coloneqq j-i$.
The \textit{interior} of~$e\in E_r$ is
\begin{align*}
	\interior(e)\coloneqq\{e'=\{v_{i'},v_{j'}\} \in E_r \mid i\leq i'<j'\leq j, e'\neq e\}.
\end{align*}
The~\textit{$E_r$-height} of a vertex~$v$ is
\begin{align*}
	h_r(v) \coloneqq
		\max \{0, h(e)\mid v \in e \in E_r\},
		\quad\text{where}\quad
		h(e)\coloneqq
		1+ \max\{0, h(e') \mid e' \in \interior(e) \}
\end{align*}
denotes the \textit{height} of~$e$.
Let~$h_r(G)\coloneqq\max\{h_r(v_i)\,|\,i \in \{1,\ldots,n\}\}$.
Note that, because the height of any edge only depends on the height of shorter edges, edge height is well-defined.
The length and height of an edge are both in~$\bigO(n)$.
For every vertex~$v_i$, we order its incident edges in~$E_r$ as follows.
If~$N_r(v_i) = \{v_{j_1},\ldots,v_{j_k}\}$ with~$j_1<\ldots<j_c<i<j_{c+1}<\ldots<j_k$,
then the order of the edges
is~$\{v_i,v_{j_c}\}<\ldots<\{v_i,v_{j_1}\}<\{v_i,v_{j_k}\}<\ldots<\{v_i,v_{j_{c+1}}\}$.

	\paragraph{Grid structure.}
	The graphs we build in our reductions will all have a grid-like structure.
	With the exception of the reduction for \hcTsc,
	we group their vertices into \emph{$(x, y)$-corners} with $(x, y) \in \Z^2$ and there will be a corner for every $(x,y)$ within certain bounds, which depend on the problem in question as well as the size and structure of the input graph.
	In the embedding, the vertices forming the $(x,y)$-corner will be in $\cball{r}(x,y)$ for a suitable $r>0$.
	Some vertices are not part of any corner and are called \emph{intermediate vertices}.
	They are usually located midway between two corners.
	Each corner can have one or multiple dedicated right, top, left, and bottom \emph{connecting vertices}.
	If a corner consists of a single vertex, that vertex always simultaneously acts as the right, top, left and bottom connecting vertex of that corner.
	The connecting vertices of a corner are the only ones that may have neighbors outside of that corner.
For any $(x,y)\in\Z^2$,
	we say that the vertices in the $(x,y)$,~$(x+1,y)$,~$(x,y+1)$, and~$(x+1,y+1)$-corners
	along with any intermediate vertices that are adjacent to vertices in two of the aforementioned corners
	jointly form a \emph{grid face}.

\paragraph*{Exponential-Time Hypothesis.}

    \begin{hypothesis}[Exponential-Time Hypothesis (ETH)~\cite{Impagliazzo2001}]
    \label{hyp:eth}
    There is some fixed $c>0$ such that \prob{3-CNF-Sat} is not solvable in~$2^{cn} \cdot(n+m)^{\O(1)}$ time, 
    where $n$ and $m$ denote the numbers of variables and clauses, respectively.
    \end{hypothesis}

The ETH implies lower bounds for numerous problems.
We start by deriving one for \vcTsc{}, which is defined as:
\decprob{\vcTsc{} (\vcAcr{})}{vc}
{An undirected graph~$G=(V,E)$ and $k\in \Nzero$.}
{Is there a vertex set $X\subseteq V$ with $\abs{X} \leq k$ such that $G - X$ is edgeless?}
\noindent
We prove an ETH-based lower bound for the following restriction of \vcTsc.
\begin{lemma}
    \label{lemma:vc-ETH}
   \vcTsc{} restricted to planar graphs with maximum degree three admits no $2^{o(n^{1/2})}$-time
   algorithm where $n$ is number of vertices, unless~\ETHbreaks.
\end{lemma}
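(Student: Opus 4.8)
The plan is to reduce from \prob{3-CNF-Sat} in three stages, where the decisive point is that planarization costs a \emph{quadratic} blow-up in the number of vertices; this is exactly what converts the linear ETH lower bound for \prob{3-CNF-Sat} into a square-root bound for the target problem. First I invoke \cref{hyp:eth} together with the Sparsification Lemma~\cite{Impagliazzo2001}: these imply that \prob{3-CNF-Sat} admits no $2^{o(n+m)}$-time algorithm, and in particular that on instances with $m=\O(n)$ clauses there is no $2^{o(n)}$-time algorithm, where $n$ is the number of variables. I take such a sparsified instance as the starting point.

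Second, I would apply the classical reduction from \prob{3-CNF-Sat} to \vcTsc{}: a variable edge for each variable, a triangle for each clause, and one edge joining each clause-triangle vertex to the corresponding literal vertex, with budget $k=n+2m$. Since $m=\O(n)$, this produces a graph with $\O(n)$ vertices and $\O(n)$ edges, and the formula is satisfiable if and only if this graph has a vertex cover of size $k$. Hence \vcTsc{} has no $2^{o(n)}$-time algorithm. Because the total degree $\sum_v \deg(v)$ equals twice the number of edges and is therefore $\O(n)$, a standard size-linear degree-reduction gadget (replacing each vertex of degree $d$ by a short path that carries one original edge at each of its vertices, shifting the budget by a fixed amount per replaced vertex) yields an equivalent instance of maximum degree three with still $\O(n)$ vertices. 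Thus \vcTsc{} on graphs of maximum degree three admits no $2^{o(n)}$-time algorithm.

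Third, I would planarize. Fix an arbitrary drawing of this maximum-degree-three instance; as it has $\O(n)$ edges, the drawing has at most $\binom{\O(n)}{2}=\O(n^2)$ crossings. Subdividing edges so that each crossing becomes the crossing of two independent two-edge paths, I replace each crossing by a constant-size planar crossing gadget that simulates two non-interacting edges for \vcTsc{} (as in the classical proof that \vcTsc{} is \NP-hard on planar graphs), choosing a gadget of maximum degree three and adjusting the budget additively per crossing. The result is a planar graph $G'$ of maximum degree three with $N=\O(n^2)$ vertices, together with a budget $k'$, such that $G'$ has a vertex cover of size $k'$ if and only if the original formula is satisfiable, and $G'$ is computable in time polynomial in $n$. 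A hypothetical $2^{o(N^{1/2})}$-time algorithm for \vcTsc{} on planar maximum-degree-three graphs would then decide $G'$, and hence the formula, in time $2^{o((n^2)^{1/2})}=2^{o(n)}$, contradicting the starting bound. This is precisely the claimed $2^{o(N^{1/2})}$ bound for the number $N$ of vertices.

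The main obstacle is the planarization step: one must exhibit a crossing gadget that (i) behaves, for \vcTsc{}, exactly like two independent edges passing through one another, (ii) has maximum degree three, and (iii) has constant size, so that the total vertex count remains $\O(n^2)$. Verifying its correctness — the additive budget bookkeeping and the equivalence between covers of the gadgeted graph and of the original — is the delicate part. By contrast, the degree reduction, the $\O(n^2)$ bound on the number of crossings, and the exponent arithmetic $\sqrt{n^2}=n$ (which is what makes the square root appear) are all routine.
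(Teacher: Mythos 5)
Your proof is correct, but it takes a genuinely different route from the paper's. You rebuild the entire lower-bound chain from \prob{3-CNF-Sat}: sparsification, the textbook reduction to \vcTsc{}, a degree reduction, and then an explicit planarization in which the $\O(n^2)$ crossings of a drawing are replaced by constant-size crossover gadgets --- so the quadratic blow-up, and hence the $2^{o(n^{1/2})}$ exponent, is paid in your own planarization step. The paper instead invokes as a black box the known result~\cite{Lokshtanov2011} that \vcTsc{} on planar graphs (without any degree bound) already admits no $2^{o(n^{1/2})}$-time algorithm under the ETH, and the entire content of its proof is the remaining step you treat as ``routine'': a careful re-verification that the Garey--Johnson degree reduction~\cite{Garey1976} (replacing each vertex $v$ by a cycle on $2\deg(v)$ vertices plus a pendant vertex, with budget shift $2|E|$) increases the vertex count only linearly, so that the square-root exponent survives. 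Your version is more self-contained but shifts the delicate verification onto the crossover gadget, whose existence and correctness you assert rather than prove; note in particular that your preliminary subdivision of edges at crossings is not innocuous for \vcTsc{} (subdividing an edge once does not change the minimum vertex cover by a fixed additive amount --- only an even number of subdivisions per edge does), so the bookkeeping must be done as in Garey--Johnson--Stockmeyer rather than by literally first subdividing and then splicing in gadgets. The paper's route avoids this entirely at the cost of relying on the cited planar lower bound; both yield the same statement with the same $\O(n^2)$-size accounting.
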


	\begin{proof}
		Unless~\ETHbreaks, 
		\textsc{Vertex Cover} on planar graphs admits no $2^{o(n^{1/2})}$-time algorithm where $n$ is the number of vertices~\cite{Lokshtanov2011}.
		There is a reduction~\cite{Garey1976} from \textsc{Vertex Cover} on planar graphs to the restriction of the same problem to planar graphs with maximum degree~three.
		We present a slightly modified version of this reduction which only changes the number of vertices linearly, 
		thereby proving the claim.
		
		Let $G=(V,E)$ be a planar graph and $k\in\N$.
		We construct a planar graph~$G'$ with maximum degree three 
		and an integer~$k'\in\N$ as follows. 
		For every~$v \in V$,
		let~$d\coloneqq\deg_G(v)$ and $w_1,\ldots,w_d$ be the neighbors of~$v$ in the cyclic order in which the corresponding edges enter~$v$.
		We replace~$v$ with a cycle consisting of
		the vertices~$u_1^v,\ldots,u^v_{2d}$ in that order 
		and add an additional vertex $z^v$.
		In addition to the edges in the cycle, 
		we add the edges~$\{z^v,u^v_1\}$ and~$\{u^v_{2i},w_i\}$ for every~$i \in\set{d}$.
		We set~$k'\coloneqq k + \sum_{v\in V}\deg_G(v) = k + 2|E|$.
		Note that the maximum degree in~$G'$ is three,
		that the planarity of~$G'$ follows from the planarity of~$G$,
		and that both $G'$ and~$k'$ can be computed in polynomial time.
		
		We must show that $G$~contains a vertex cover of size~$k$ 
		if and only if $G'$~contains a vertex cover size~$k'$. 
		If $X\subseteq V$ is a vertex cover of size at most~$k$, 
		then
		\begin{align*}
			X'\coloneqq
			\{u^v_1\mid v \in V\} & \cup 
			\{u^v_{2i+1} \mid i \in \{1,\ldots,\deg_G(v)-1\}, v \in V\setminus X\} \\
			& \cup \{u^v_{2i} \mid i \in \{1,\ldots,\deg_G(v)\}, v \in X \}  
		\end{align*}
		is a vertex cover of size
		\begin{align*}
            |V| + \sum_{v \in V \setminus X} (\deg_G(v) -1) + \sum_{v \in X} \deg_G(v) 
			&= |V| - |V\setminus X| + \sum_{v \in V}\deg_G(v) 
			\\
			&= |X| + \sum_{v \in V} \deg_G(v) \leq k'.
		\end{align*}
	
		Conversely, suppose that $X'$ is a vertex cover of size $k'$ in $G'$.
		Then, 
		let~$X \coloneqq \{v \in V \mid u^v_{2i} \in X \text{ for any } i \in \{1,\ldots,\deg_G(v)\}\}$.
		We claim that $X$~is a vertex cover in~$G$.
		If~$\{v,w\}\in E$,
		then~$u_{2i}^v$ or~$u_{2i'}^w$ for some~$i,i'$ must be in~$X'$ and, hence, $v$ or $w$ is in $X$.
		It remains to show that~$|X|\leq k$.
		We may assume that $u_1^v \in X'$ for all~$v \in V$, 
		since the edge $\{u_1^v,z^v\}$ must be covered.
		For every~$v \in V$, 
		let $X'_v\coloneqq X' \cap \{u^v_1,\ldots,u^v_{2\deg_G(v)}\}$.
		Because every edge in the cycle representing~$v$ is covered, 
		it follows that $|X'_v| \geq \deg_G(v)$.
		Since $\sum_{v\in V}\abs{X'_v} = \abs{\bigcup_{v\in V} X'_v} \leq k' = k + \sum_{v\in V}\deg_G(v)$, there are at most $k$ vertices with $\abs{X'_v} > \deg_G(v)$.
		Since $u_1^v \in X'_v$ holds,
		$\abs{X'_v} = \deg_G(v)$ 
		implies that $X'_v = \{u_1^v,u_3^v,\ldots,u_{2\deg_G(v)-1}^v\}$.
		Hence, 
		$\abs{X} \leq k$.
		
		The output graph $G'$ contains
		\begin{align*}
			\sum_{v \in V} (2\deg_G(v)+1) = |V| + 2\sum_{v \in V} \deg_G(v) = |V| + 4|E| \in \O(|V|)
		\end{align*}
		vertices.
		Thus,
		if \prob{Vertex Cover} on planar graphs with maximum degree three is solvable in~$2^{o(n^{1/2})}$~time,
		then \prob{Vertex Cover} on arbitrary planar graphs is solvable in~$2^{o(n^{1/2})}$~time.
\end{proof}

\section{3-Colorability}
\label{sec:tc}

In this section we study the following problem on RCGs, RNGs, and GGs.
\decprob{\tcTsc{} (\tcAcr{})}{tc}
{An undirected graph~$G=(V,E)$.}
{Is there a function $c\colon V \rightarrow \{1,2,3\}$ such that $c(u)\neq c(v)$ for all $\{u,v\} \in E$?}
\noindent
Every RCG is 3-colorable~\cite{Cimikowski1992}
because RCGs do not contain any~$3$-cycles and every planar graph without~$3$-cycles is~$3$-colorable by Grötzsch's theorem~\cite{Grunbaum1963}.
As a result, \tcAcr{} is trivial when restricted to RCGs.
Regarding RNGs and GGs, 
we prove the following.
\begin{theorem}
 \label{thm:tcNP}
 \tcTsc{} on RNGs and on GGs is \NP-hard,
 even if the maximum degree is seven.
 It admits no $2^{o(n^{1/4})}$-time algorithm
 where $n$ is the number of vertices
 unless~\ETHbreaks.
\end{theorem}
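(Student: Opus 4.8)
The plan is to give a polynomial-time many-one reduction from \tcAcr{} restricted to planar graphs of maximum degree four, following the grid-based template sketched in \cref{sec:intro}. Under ETH, planar \tcAcr{} admits no $2^{o(N^{1/2})}$-time algorithm in the number $N$ of vertices, and this bound survives the restriction to maximum degree four via a linear-size degree-reduction essentially in the spirit of \cref{lemma:vc-ETH}. Given such an input $G$, I would first compute a 2-page book embedding using \cref{thm:prelim-compute-be}. From the resulting spine order $v_1,\dots,v_N$ and the edge-height function I lay the vertices out as columns of a grid and route every edge $e\in E_r$ as an axis-parallel path whose vertical extent is governed by $h(e)$; since both the length and the height of every edge are $\O(N)$, the grid has $\O(N)$ columns and $\O(N)$ rows, hence $\O(N^2)$ grid faces in total.

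Each grid face is then filled by one of three \emph{gadget types}, chosen according to how many of the (at most four) edge-routes entering the face continue upward versus turn back; these are exactly the four-zero, three-one, and two-two cases. The construction assigns three roles to gadgets. A \emph{vertex gadget} sits in the column of $v_i$ and carries a single ``color token'' that may take any of the three colors, representing $c(v_i)$. A \emph{connector} (edge) gadget transports a token faithfully along an edge-route and, where two routes meet at the endpoints of an edge $\{v_i,v_j\}$, forces the two tokens to receive distinct colors. A \emph{filler} gadget occupies every remaining grid face; its purpose is to be rigidly $3$-colorable, so that it transmits a consistent reference frame of colors across the grid without interfering with the tokens. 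Concretely the filler is a fine mesh of small rigid subgadgets (the buffer structures), whose triangles force its coloring to be unique up to a global permutation of the three colors.

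For correctness I would argue both directions. \textbf{Rigidity of the filler:} I show that in any proper $3$-coloring the restriction to the filler mesh is determined up to one of the six global color permutations, so that the color at every filler-to-gadget interface is fixed relative to a single reference; this is the crux and is the analogue of the unique $3$-coloring of a patch of the triangular lattice. Granting rigidity, a $3$-coloring of $G$ lifts to one of the constructed graph by setting each vertex gadget's token to $c(v_i)$ read against the reference frame, propagating it unchanged along connectors, and coloring fillers by the reference; the inequality enforced where two routes meet is precisely the constraint $c(v_i)\neq c(v_j)$. Conversely, any proper $3$-coloring of the constructed graph induces, through the token colors, a proper $3$-coloring of $G$, since connectors preserve tokens and meeting gadgets forbid equal colors across an edge.

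It remains to certify membership in the target classes and to read off the bounds. I would give explicit coordinates for every vertex (corners at lattice points, intermediate and buffer vertices at prescribed offsets) and verify adjacency directly from the template regions: for GGs via the characterization that $p_3\in R_{\GAB}(p_1,p_2)$ iff the angle at $p_3$ is at least~$90\degree$, and for RNGs via the open-lens condition $B_{\dist(p_1,p_2)}(p_1)\cap B_{\dist(p_1,p_2)}(p_2)$. The buffer mesh's role is exactly to ensure that \emph{no} point blocks an intended edge while \emph{every} unintended pair admits a blocker, so that the induced graph is precisely the gadget graph; \cref{lemma:planar} is consistent with the planar grid layout. The total vertex count is $N'=\O(N^2)$, so $N=\Omega((N')^{1/2})$ and a $2^{o((N')^{1/4})}$-time algorithm would yield a $2^{o(N^{1/2})}$-time algorithm for planar \tcAcr{}, contradicting ETH; the \NP-hardness follows from the same reduction, and a degree count over the gadgets gives maximum degree seven. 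I expect the main obstacle to be exactly this simultaneous satisfaction of the geometric and combinatorial requirements: making each gadget rigidly $3$-colorable while keeping it realizable as an RNG, where the strict open-ball inequalities make adjacencies fragile, and ensuring the fillers tile arbitrarily large regions without ever creating a spurious edge or loosening the color rigidity.
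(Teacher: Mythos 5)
Your plan diverges from the paper's actual construction in one essential design decision, and that decision is where the gap lies. You propose a \emph{rigid} filler mesh whose proper $3$-colorings are unique up to a global permutation, serving as a reference frame against which vertex-gadget ``tokens'' are read. But you also require that each vertex gadget's token ``may take any of the three colors.'' These two requirements are in direct tension: if a token vertex is adjacent to even one vertex of a rigidly colored mesh, it loses at least one color; if it is adjacent to none, the reference frame cannot reach it and serves no purpose, and you must then explain how the inequality $c(v_i)\neq c(v_j)$ is enforced at all. You flag the rigidity lemma yourself as ``the crux'' and ``the main obstacle'' and give no gadget that achieves it, no argument that a rigid triangulated mesh can tile arbitrary regions while meeting the $K_4$-free, $K_{2,3}$-free, maximum-degree-seven, and open-ball realizability constraints, and no verification that the interfaces preserve both rigidity of the filler and freedom of the tokens. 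Since the entire correctness argument (both directions) is conditioned on ``granting rigidity,'' the proof is not complete.

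It is worth noting that the paper resolves this tension by going in the opposite direction: its fillers are deliberately \emph{non-rigid}. In \cref{constr:tc}, unused grid positions receive a single vertex joined by length-two paths to the neighboring corners, and \cref{lemma:tc}(iii) shows such an addition never constrains the coloring of the rest of the graph --- the middle vertices of the length-two paths can always absorb whatever colors the corners carry. Color propagation and the constraint $c(v_i)\neq c(v_j)$ are instead handled by \emph{coloring paths} (chains of $K_4$-minus-an-edge diamonds), which copy a color along a route by a purely local induction (\cref{lemma:tc}(i)--(ii)), with the final edge of each edge-gadget directly enforcing the inequality. No global reference frame is needed, and correctness decomposes into independent one-step lemmata. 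Your grid layout, coordinate assignment, blocker-based realizability check, the $\O(N^2)$ size bound, and the resulting $2^{o(n^{1/4})}$ ETH transfer all match the paper's scheme and are fine as stated; what is missing is the combinatorial heart of the reduction, and the particular rigid-filler route you chose would need substantial additional work (or a redesign along the paper's lines) to be made to function.
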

This confirms a conjecture by Cimikowski~\cite{Cimikowski1989}.
Our proof is based on a polynomial-time many-one reduction from
the \NP-hard~\cite{Garey1976c} \tcTsc{} on planar graphs with maximum degree four.

\begin{figure}[t]
	\centering
	\begin{tikzpicture}
	    \def\xr{1}
	    \def\yr{1}
	    \tikzpreamble{}
	    \def\nsc{0.5}
	    
	    \newcommand{\kfme}[6]{
            \node (a) at (0+#1*\xr,0+#2*\yr)[xnode,label=-90:{#3}]{};
            \node (b) at (1*\xr+#1*\xr,0.25*\yr+#2*\yr)[xnode,label=90:{#4}]{};
            \node (c) at (1*\xr+#1*\xr,-0.25*\yr+#2*\yr)[xnode,label=-90:{#5}]{};
            \node (d) at (2*\xr+#1*\xr,0+#2*\yr)[xnode,label=-90:{#6}]{};
            \draw[-,thick] (b) to (a) to (c) to (d) to (b) to (c);
	    }
	    
	    \kfme{0}{0}{$u^1_0$}{$u^2_1$}{$u^3_1$}{$u^1_1$};
	    \kfme{2}{0}{}{$u^2_2$}{$u^3_2$}{$u^1_2$};
	    \kfme{4}{0}{}{$u^2_3$}{$u^3_3$}{$u^1_3$};
\end{tikzpicture}
	\caption{A coloring path of length~$3$.}
	\label{fig:tc-cp}
\end{figure}
We will use so-called \emph{coloring paths} (see~\cref{fig:tc-cp} for an illustration),
which essentially allows us to copy the color of a vertex.
The \textit{coloring path} of length~$k$ from~$u^1_0$ to~$u^1_k$ is the graph~$\cp_k\coloneqq(V_k,E_k)$ with:
\begin{align*}
	V_k &\coloneqq \{u^1_0\} \cup \{u^1_i,u^2_i,u^3_i\,|\,i\in\{1,\ldots,k\}\} \text{ and }\\
	E_k &\coloneqq \{\{u^1_{i-1},u^2_i\},\{u^1_{i-1},u^3_i\},\{u^2_i,u^3_i\},\{u^2_i,u^1_{i}\},\{u^3_i,u^1_{i}\}\,|\,i\in\{1,\ldots,k\}\}.
\end{align*}
We will call~$u^1_i$ the~$i$-th \textit{center vertex},~$u^2_i$ the~$i$-th \textit{left vertex}, and~$u^3_i$ the~$i$-th \textit{right vertex}.

\begin{figure}[t]
	\centering
	\includegraphics{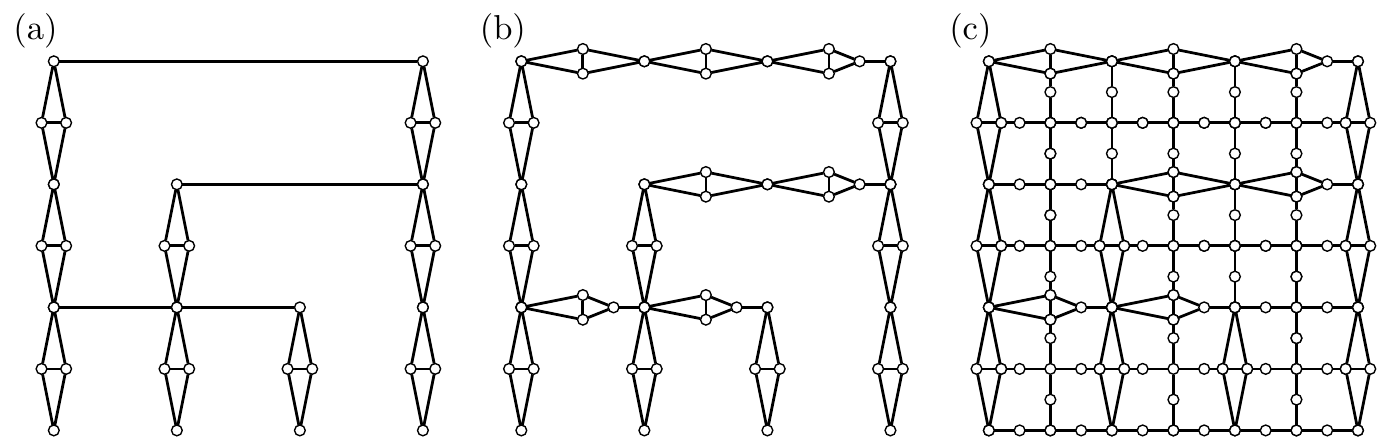}
	\caption{\cref{constr:tc}
		applied to the example graph in \cref{fig:exgraph}.
		(a)~The graph after steps 1--2. 
		(b)~The graph after steps 1--3.
		(c)~The final graph.}
	\label{fig:tc}
\end{figure}

\begin{construction}
 \label{constr:tc}
 Let~$G=(V,E)$ be an undirected planar graph of maximum degree four.
 We will construct a graph $G'=(V',E')$
 and subsequently show that $G'$~is both an RNG and a GG and that~$G$ is $3$-colorable if and only if $G'$ is (see \cref{fig:tc} for an illustration).
 The vertex set of~$G'$ will mostly consist of groups of vertices called $(x,y)$-\emph{corners}
 where $2\leq x \leq 2n$ and $-2h_2(G) \leq y \leq 2h_1(G)$.
 Each corner consists of either a single vertex or of a pair of adjacent vertices.
 Corners can have dedicated top, left, right, and bottom \emph{connecting vertices}, some of which may coincide.
 For example, if a corner consists of a single vertex, that vertex simultaneously forms all four connecting vertices.
 Finally, there will be some \emph{intermediate vertices} that are not part of any corner. We start with $G'\coloneqq G$.
 
 \textbf{Step 1:}
 Compute a 2-page book embedding of~$G$. 
 Let~$v_1,\ldots,v_n$
 be the vertices of~$G$ enumerated in the order in which they appear on the spine,
 and let~$\{E_1,E_2\}$ denote the partition of~$E$.
 
 \textbf{Step 2:} Replace every vertex $v_i$ with a coloring path of length $h_1(v_i) + h_2(v_i)$.
 Every edge~$e$ of~$G$ incident to~$v_i$ is now instead attached to the $(h_2(v_i)+h(e))$-th center vertex, if~$e \in E_1$, or the $(h_2(v_i)-h(e))$ if~$e \in E_2$.
 For $r=0,\ldots,h_1(v_i) + h_2(v_i)$, the $r$-th center vertex of that path forms the $(2i, 2r - 2h_2(v_i))$-corner.
 For $r=1,\ldots,h_1(v_i)$, the $r$-th left and right vertices jointly form the $(2i,2r-1 - 2h_2(v_i))$-corner.
 The left vertex is the left connecting vertex of this corner and the right vertex is the right connecting vertex.

 \textbf{Step 3:} For every edge $e=\{v_i,v_j\} \in E_1$, $i<j$, replace the corresponding edge of $G'$ with a coloring path of length $\ell(e)$.
 Identify the first vertex of that path with the $(2i, 2h(e))$-corner (which consists of a single vertex).
 Denote the last vertex of that path by~$w$.
 Add an edge from $w$ to the $(2j, 2h(e))$-corner.
 For $r=1,\ldots,\ell(e)-1$, the $r$-th center vertex of that path is the $(2i+2r,2h(e))$-corner.
 The vertex $w$, which is the~$\ell(e)$-th center vertex, is an intermediate vertex.
 For $r=1,\ldots,\ell(e)$, the $r$-th left and right vertices jointly form the $(2i+2r-1,2h(e))$-corner.
 The left vertex is the top connecting vertex of this corner and the right vertex is the bottom connecting vertex.
 
 \textbf{Step 4:} For every $(x,y)$ with $2\leq x \leq 2n$ and $0 \leq y \leq 2h_1(G)$,
 if an $(x,y)$-corner was not added in one of the previous two steps, then add a single vertex, which becomes the~$(x,y)$-corner, to $G'$.
 In that case add an edge from the $(x,y)$-corner to the top connecting vertex of the $(x,y-1)$-corner, to the left connecting vertex of the $(x+1,y)$-corner, and so on.
 Subdivide\footnote{If $G=(V,E)$ is a graph and $e =\{u,v\} \in E$, then the graph obtained by subdividing $e$ a total of $k$ times is the graph $G'=(V\uplus \{w_1,\ldots,w_k\},E')$ with $E'= (E \setminus \{\{u,v\}\}) \cup \{\{u,w_1\},\{w_1,w_2\},\ldots,\{w_{k-1},w_k\},\{w_k,v\}\}$.} each of these edges once, introducing four new intermediate vertices.
 
 Note that steps~3~and~4 take only~$E_1$ into account.
 These steps must be repeated analogously for~$E_2$, using negative $y$-coordinates.
 \cqed
\end{construction}

\begin{lemma}
	Let $G=(V,E)$ and $G'$ be graphs.
	\begin{enumerate}[(i)]
		\item If $G'$ is obtained from $G$ by replacing a vertex~$v \in V$ 
		with a coloring path of any length and connecting each~$u \in N_G(v)$
		with one of the center vertices of that coloring path, then $G$ 
		is~$3$-colorable if and only if $G'$ is.
		\item If $G'$ is obtained from $G$ by replacing an edge $e=\{u,v\} \in E$ with a coloring path of any length,
		identifying the first vertex of that path with~$u$ and connecting the last vertex of that path to~$v$,
		then $G$ is $3$-colorable if and only if $G'$ is.
		\item If $v_1,\ldots,v_k \in V$ and~$G'$ is obtained from~$G$ by adding~$k+1$ vertices~$w,u_1,\ldots,u_k$ and the edges~$\{v_i,u_i\}$ and~$\{u_i,w\}$ for every~$i=1,\ldots,k$, then,~$G$ is~$3$-colorable if and only if~$G'$ is.
		\item If $G'$ is obtained from $G$ by applying \cref{constr:tc} and $n$ is the number of vertices in $G$, then $G'$ contains $\bigO(n^2)$ vertices.
	\end{enumerate}
	\label{lemma:tc}
\end{lemma}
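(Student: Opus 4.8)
The common mechanism behind parts (i)--(iii) is that a coloring path behaves as a ``color copier,'' so the plan is to first isolate this as a single observation and then apply it three times. The observation is: in any proper $3$-coloring $c$ of $\cp_k$, all center vertices receive the same color, i.e.\ $c(u^1_0) = c(u^1_1) = \dots = c(u^1_k)$. I would prove this block by block: for each $i$ the vertices $u^2_i,u^3_i$ are adjacent and both are adjacent to $u^1_{i-1}$, so $u^1_{i-1}$ is forced to the unique color not used by $\{c(u^2_i),c(u^3_i)\}$; since $u^1_i$ is also adjacent to both $u^2_i$ and $u^3_i$, it is forced to that same color, giving $c(u^1_{i-1})=c(u^1_i)$. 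Conversely, for any prescribed color $a$ the path extends to a proper coloring: set every center vertex to $a$ and, in each block, assign $u^2_i$ and $u^3_i$ the two remaining colors (in either order). Thus the only constraint a coloring path imposes is that its two endpoints share a color that may be chosen freely.

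For (i), the forward direction extends a $3$-coloring $c$ of $G$ by coloring every center vertex of the replacement path with $c(v)$ and filling in the left and right vertices via the observation; since each $u\in N_G(v)$ is attached to some center vertex, which carries color $c(v)\neq c(u)$, all these edges are proper. For the backward direction, the observation assigns all centers a common color $a$; setting $c(v)\coloneqq a$ and keeping the colors of the remaining vertices of $G$ yields a proper coloring, because every edge $\{u,v\}$ of $G$ corresponds in $G'$ to $u$ being adjacent to a center vertex of color $a=c(v)$. Part (ii) is essentially identical: the first center vertex \emph{is} $u$, so all centers (including the last) carry $c(u)$; the edge from the last center to $v$ then encodes exactly the constraint $c(u)\neq c(v)$, and both directions follow as above.

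Part (iii) is simpler and needs no coloring path. The backward direction is immediate, since $G$ is an induced subgraph of $G'$. For the forward direction, given a $3$-coloring of $G$ I would pick an arbitrary color for the new vertex $w$; each $u_i$ then only needs to avoid the (at most two) colors $c(v_i)$ and $c(w)$, leaving at least one admissible color, so the coloring extends greedily.

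For part (iv) the argument is a counting one built on \cref{constr:tc}. All corners lie in the range $2\le x\le 2n$ and $-2h_2(G)\le y\le 2h_1(G)$; since edge lengths and heights are $\bigO(n)$ we have $h_1(G),h_2(G)\in\bigO(n)$, so there are $\bigO(n)\cdot\bigO(n)=\bigO(n^2)$ corners, each contributing at most two vertices. The intermediate vertices arise as the terminal ($\ell(e)$-th) center vertices of the edge paths, one per edge and hence $\bigO(n)$ in total, together with the subdivision vertices introduced in Step~4 (a constant number per grid edge, of which there are $\bigO(n^2)$). Summing these contributions gives $\bigO(n^2)$. I expect the main bookkeeping effort to be part (iv)---correctly attributing every vertex of $G'$ to a corner, an edge path, or a Step~4 subdivision---whereas the conceptual heart of the lemma is the single color-propagation observation, on which the backward directions of (i) and (ii) entirely rest.
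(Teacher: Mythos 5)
Your proposal is correct and follows essentially the same route as the paper: your ``color copier'' observation is exactly the inductive argument the paper uses for the backward direction of (i) (forcing each $u^1_i$ to the color of $u^1_{i-1}$ via the triangle on $u^2_i,u^3_i$), parts (ii) and (iii) are handled the same way (your greedy choice for the $u_i$ in (iii) is just a mild variant of the paper's explicit assignment), and the counting in (iv) matches the paper's tally of $\bigO(n^2)$ corners of constant size plus $\bigO(n^2)$ intermediate vertices.
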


\begin{proof}
	\begin{enumerate}[(i)]
		\item Let~$\cp_k=(V_k,E_k)$ be the coloring path that replaces~$v$. 
		
		Suppose that~$c \colon V \rightarrow \{1,2,3\}$ is a~$3$-coloring of~$G$.
		\Wilog{} assume that~$c(v) = 1$.
		Then,~$c' \colon V'\rightarrow \{1,2,3\}$, with~$c'(u) \coloneqq c(u)$ for all~$u \in V \setminus \{v\}$,~$c'(u_i^1) = 1$ for all center vertices~$u_i^1$,~$c'(u_i^2) \coloneqq 2$ for all left vertices~$u_i^2$, and~$c'(u_i^3) \coloneqq3$ for all right vertices~$u_i^3$, is a valid~$3$-coloring of~$G'$.
		
		Conversely, suppose that~$c' \colon V' \rightarrow \{1,2,3\}$ is a~$3$-coloring of~$G'$.
		\Wilog, assume that~$c'(u_0^1)=1$. Then,~$\{c'(u_1^2), c'(u_1^3)\} = \{2,3\}$.
		Since~$u_1^1$ is adjacent to~$u_1^2$ and~$u_1^3$, it follows that~$c'(u_1^1) = 1$.
		By induction then~$c'(u_i^1)= 1$ for all~$i=1,\ldots,k$.
		Hence,~$c'(u) \neq 1~$ for all vertices~$u \in V \setminus \{v\}$ that are adjacent to~$v$ in~$G$, since they are adjacent to a center vertex in~$G'$.
		Then,~$c \colon V\rightarrow \{1,2,3\}$ with~$c(v)\coloneqq 1$ and~$c(u)\coloneqq c'(u)$ for all~$u \in V \setminus \{v\}$ is a valid~$3$-coloring of~$G$.
		\item Follows from (i).
		\item Suppose that~$c \colon V\rightarrow \{1,2,3\}$ is a~$3$-coloring of~$G$.
		Let~$c' \colon V'\rightarrow \{1,2,3\}$ with~$c'(v)\coloneqq c(v)$ for all~$v \in V$,~$c(w) \coloneqq 3$, and:
		\begin{align*}
			c'(u_i) \coloneqq 
			\begin{cases}
				1,& \text{ if } c(v_i) \neq 1,\\
				2,& \text{ if } c(v_i) = 1.
			\end{cases}
		\end{align*}
		Then,~$c'$ is a valid~$3$-coloring of~$G'$.
		
		If~$G'$ is~$3$-colorable, then so is~$G$, since~$G$ is a subgraph of~$G'$.
		
		\item The graph $G'$ contains $\bigO(n^2)$ corners each containing $\bigO(1)$ vertices and $\bigO(n^2)$ intermediate vertices.
		
		\end{enumerate}
\end{proof}

\begin{proof}[Proof of \cref{thm:tcNP}]
	Steps~2,~3, and~4 of \cref{constr:tc} preserve the $3$-colorability of $G'$ by \cref{lemma:tc}(i),~(ii), and~(iii), respectively.
	By \cref{lemma:tc}(iv), the size of the graph $G'$ output by the reduction is polynomial in the size of the input graph and it is easy to see that the computations in the construction may be performed in polynomial time.
	The degree restriction follows from the fact that the construction does not generate any vertices with degree above~7.
	
	It remains to show that $G'$ is an RNG and a GG.
	We start by giving an embedding of $G'$.
	If the $(x,y)$-corner is a single vertex, then its position is~$(x,y)$.
	If the~$(x,y)$-corner consists of a left and right vertex that are part of a
	coloring path added in step~2, then they are embedded
	at~$(x-\eps,y)$ and~$(x+\eps,y)$, respectively.
	If this corner consists of left and right vertices from a coloring path added in step~3,
	then their embedding is $(x,y+\eps)$ and $(x,y-\eps)$, respectively.
	Note that each intermediate vertex is adjacent to vertices in exactly two corners.
	If it is adjacent to the $(x, y)$- and $(x', y')$-corner,
	then it is embedded halfway in between, at $(\frac{x + x'}{2}, \frac{y + y'}{2})$.
		
	\begin{figure}[t]
		\centering
		\begin{tikzpicture}
		 \def\xr{1}
		 \def\yr{1}
		 \tikzpreamble{}
		 \def\teps{0.2}
		 \def\nsc{0.5}
		 
		 \def\xsh{2.5}
		 
		 \newcommand{\myclip}{\clip (0.6*\xr,0.6*\yr) rectangle (2.4*\xr,2.4*\yr);}
		 
		 \newcommand{\emptygrid}{
      \foreach\x in{0,1,2,3}{
        \foreach\y in{0,1,2,3}{
          \node (G\x\y) at (\x*\xr,\y*\yr)[inner sep=0pt]{};
        }
       }
		 }
		 
		 \begin{scope}
		  \node at (0.5*\xr,2.5*\yr)[]{(a)};
		  \myclip{}
		  \emptygrid{}
		  \foreach[count=\i, evaluate=\i as \z using int(\i-1)] \x/\y in {1/1,1/1.5,1/2,1.5/2,2/2,2/1.5,2/1,1.5/1}{
        \node (A\z) at (\x*\xr,\y*\yr)[xnode]{};
      }
      \foreach\x/\y in {G10/A0,G01/A0,A2/G02,A2/G13,A4/G23,A4/G32,A6/G31,A6/G20}{
        \draw[-,lightgray] (\x) to (\y);
      }
      \foreach\x/\y in {A0/A1,A1/A2,A2/A3,A3/A4,A4/A5,A5/A6,A6/A7,A7/A0}{
        \draw[xedge] (\x) to (\y);
      }
		 \end{scope}
		 
		 \begin{scope}[xshift=1*\xsh*\xr cm]
		  \node at (0.5*\xr,2.5*\yr)[]{(b)};
		  \myclip{}
		  \emptygrid{}
		  \foreach[count=\i, evaluate=\i as \z using int(\i-1)]\x/\y in {1/1,1/1.5,1/2,1.5/2,2/2,2/1.5,2/1+\teps,2/1-\teps}{
        \node (A\z) at (\x*\xr,\y*\yr)[xnode]{};
      }
      \foreach\x/\y in {G10/A0,G01/A0,A2/G02,A2/G13,A4/G23,A4/G32,A6/G31,A7/G31,A7/G20}{
        \draw[-,lightgray] (\x) to (\y);
      }
      \foreach\x/\y in {A0/A1,A1/A2,A2/A3,A3/A4,A4/A5,A5/A6,A6/A7,A7/A0,A6/A0}{
        \draw[xedge] (\x) to (\y);
      }
      \node at (A0)[anchor=north east,color=red]{$b$};
      \node at (A1)[anchor=east,color=red]{$a$};
      \node at (A3)[anchor=south,color=red]{$e$};
      \node at (A5)[anchor=west,color=red]{$d$};
      \node at (A6)[anchor=west,color=red]{$c$};
		 \end{scope}

		 \begin{scope}[xshift=2*\xsh*\xr cm]
		  \node at (0.5*\xr,2.5*\yr)[]{(c)};
		  \myclip{}
		  \emptygrid{}
		  \foreach[count=\i, evaluate=\i as \z using int(\i-1)] \x/\y in {1-\teps/1,1+\teps/1,1/2,2/2+\teps,2/2-\teps,2/1.5,2/1,1.5/1}{
        \node (A\z) at (\x*\xr,\y*\yr)[xnode]{};
      }
      \foreach\x/\y in {G10/A0,A1/G10,G01/A0,A2/G02,A2/G13,A3/G23,A4/G32,A3/G32,A6/G31,A6/G20}{
        \draw[-,lightgray] (\x) to (\y);
      }
      \foreach\x/\y in {A0/A1,A0/A2,A1/A2,A2/A3,A2/A4,A3/A4,A4/A5,A5/A6,A6/A7,A7/A1}{
        \draw[xedge] (\x) to (\y);
      }
		 \end{scope}
		 
		 \begin{scope}[xshift=3*\xsh*\xr cm]
		  \node at (0.5*\xr,2.5*\yr)[]{(d)};
		  \myclip{}
		  \emptygrid{}
		  \foreach[count=\i, evaluate=\i as \z using int(\i-1)] \x/\y in {1/1-\teps,1/1+\teps,1/2,1.5/2,2-\teps/2,2+\teps/2,2/1,1.5/1}{
        \node (A\z) at (\x*\xr,\y*\yr)[xnode]{};
      }
      \foreach\x/\y in {G10/A0,G01/A0,G01/A1,A2/G02,A2/G13,A4/G23,A5/G23,A5/G32,A6/G31,A6/G20}{
        \draw[-,lightgray] (\x) to (\y);
      }
      \foreach\x/\y in {A0/A1,A1/A2,A2/A3,A3/A4,A4/A5,A4/A6,A5/A6,A6/A7,A7/A1,A7/A0}{
        \draw[xedge] (\x) to (\y);
      }
  		\node () at (1*\xr,1.5*\yr)[xnode]{};
      \end{scope}
		 
		 \begin{scope}[xshift=4*\xsh*\xr cm]
		  \node at (0.5*\xr,2.5*\yr)[]{(e)};
		  \myclip{}
		  \emptygrid{}
		  \foreach[count=\i, evaluate=\i as \z using int(\i-1)] \x/\y in {1/1-\teps,1/1+\teps,1/2,1.5/2,2/2,2/1.5,2/1,1.5/1}{
        \node (A\z) at (\x*\xr,\y*\yr)[xnode]{};
      }
      \foreach\x/\y in {G10/A0,G01/A0,G01/A1,A2/G02,A2/G13,A4/G23,A4/G32,A6/G31,A6/G20}{
        \draw[-,lightgray] (\x) to (\y);
      }
      \foreach\x/\y in {A0/A1,A1/A2,A2/A3,A3/A4,A4/A5,A5/A6,A6/A7,A7/A1,A7/A0}{
        \draw[xedge] (\x) to (\y);
      }
  		\node () at (1*\xr,1.5*\yr) [xnode] {};
      \end{scope}

		\end{tikzpicture}

\caption{Grid faces created by \cref{constr:tc}.}
		\label{fig:tc-gf}
	\end{figure}
	
	We now show that the  RNG and GG induced by the vertices of any grid face
	is in fact the subgraph of~$G'$ induced by those vertices.
	To this end, we show for any pair of vertices sharing a grid face that there is no RNG blocker if they are adjacent and that there is a GG blocker if they are not adjacent.
	We do this by examining the grid faces individually.
	\cref{fig:tc-gf} pictures all types of grid faces that may occur in $G'$ (up to symmetry).
	
	In the case of the grid face pictured in \cref{fig:tc-gf}.(a) the claim is obvious.
	
	In the case of \cref{fig:tc-gf}.(b),
	the vertex $a$ is not an RNG blocker for the edge $\{b,c\}$, because
	\begin{align*}
		\dist(b,c)^2 & = 1 + \eps^2 \\
		\dist(a,c)^2 & = 1 + \left(\frac{1}{2} - \eps\right)^2 = 1 + \frac{1}{4} - 2\eps + \eps^2.
	\end{align*}
	Hence, if $\eps>0$ is sufficiently small then $\dist(b,c)^2 < \dist(a,c)^2$,
	implying that $a$ is not an RNG blocker for $\{b,c\}$. Similarly,
	\begin{align*}
		\dist(d,b)^2 = \dist(e,b)^2 = 1 + \frac{1}{4},
	\end{align*}
	implies that $d$ and $e$ also are not an RNG blockers for this edge if $\eps$ is sufficiently small.

	For the other grid faces, the claim follows along the same lines.
	
	By \cref{lemma:planar}, there can be no edges between vertices that do not share a grid face.
	Thus, we have proven that the given embedding induces~$G'$ as its RNG and GG.
	
	\emph{ETH-lower bound}. \tcAcr{} on planar graphs with maximum degree four
	cannot be solved in $2^{o(n^{1/2})}$~time where $n$ is the number of vertices unless \ETHbreaks{}.
	To show this, we inspect the reduction by \cite{Garey1976c}.
	First, it reduces \textsc{3-SAT} to~\tcAcr{} (on arbitrary graphs).
	This reduction maps formulas with $m$~clauses and $n$~variables to graphs with $\bigO(m+n)$~vertices and $\bigO(m+n)$~edges.
	\tcAcr{} on arbitrary graphs is then reduced to planar \tcAcr{},
	mapping graphs with $n$~vertices and $m$~edges to graphs with $\bigO(n+m^2)$~vertices and $\bigO(m^2)$~edges.
	Then, \tcAcr{} on arbitrary planar graphs is reduced to \tcAcr{} on planar graphs with maximum degree four.
	This reduction replaces each vertex~$v$ with a subgraph containing $\bigO(\deg(v))$~vertices and edges.
	Hence, this reduction maps graphs containing $n$~vertices and $m$~edges to graphs with $\bigO(m)$~vertices and edges.
The composition of these reductions yields a reduction from~\textsc{3-SAT} to~\tcAcr{} on planar graph with maximum degree four
	that maps a formula with $m$~clauses and $n$~variables to a graph containing $\bigO(m^2+n^2)$~vertices and edges.
\end{proof}

We remark that RNGs and GGs with maximum degree three are always $3$-colorable.
This follows from Brooks' theorem~\cite{Brooks41,Lovasz1975}, 
which states that any graph with maximum degree~$\Delta\geq 3$ 
which contains no $(\Delta +1)$-clique, 
is $\Delta$-colorable.
RNGs and GGs contain no $4$-cliques
(see~\cref{sec:prelim}).
It remains open whether \tcAcr{} can be solved in polynomial time on RNGs or GGs with maximum degree between four and six.

By the well-known $4$-color theorem,
all planar graphs are $4$-colorable.
The fastest known algorithm to compute a $4$-coloring of a planar graph
has quadratic running time~\cite{Robertson1996}.
For RNGs, 
Cimikowski~\cite{Cimikowski1990} proposed an algorithm for computing $4$-colorings in linear time.
However, 
this algorithm is based on the claim~\cite{Urquhart1983} that the wheel graph~$W_6$ cannot occur as subgraph of an RNG.
This claim was 
disproved by Bose \etal~\cite{Bose2012}.
Cimikowski's algorithm additionally implicitly assumes that RNGs are closed under minors, since the algorithm sometimes merges two adjacent vertices.
This can lead to graphs that are not RNGs.
Thus,
it remains open whether or not a linear-time algorithm for this task exists.

\section{Feedback Vertex Set}
\label{sec:fvs}

In this section, we will study the following problem:
\decprob{\fvsTsc{} (\fvsAcr{})}{fvs}
{An undirected graph~$G=(V,E)$ and~$k\in\Nzero$.}
{Is there a vertex set~$X\subseteq V$ with~$|X|\leq k$ such that~$G-X$ is a forest?}
\noindent
We will show:

\begin{theorem}
 \label{thm:fvsNP}
 \fvsTsc{} on RCGs,
 on RNGs, and on
 GGs
 is \NP-hard,
 even if the maximum degree is four.
 Unless~\ETHbreaks,
it
  admits no $2^{o(n^{1/4})}$-time algorithm
  where~$n$ is the number of vertices.
\end{theorem}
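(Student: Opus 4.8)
The plan is to reuse the book-embedding-to-grid machinery behind \cref{thm:tcNP}, but to reduce from \fvsAcr{} on planar graphs of maximum degree four. This source problem is \NP-hard and, under the ETH, admits no $2^{o(N^{1/2})}$-time algorithm in the number~$N$ of vertices, matching the~$2^{\Theta(\sqrt{N})}$ behaviour of planar \fvsAcr{}; the derivation of this lower bound would mirror \cref{lemma:vc-ETH}. As in \cref{constr:tc}, I would compute a 2-page book embedding via \cref{thm:prelim-compute-be}, lay the vertices out along a spine, route the two pages above and below into an integer grid, and fill the empty cells with \emph{filler gadgets} whose sole purpose is to create geometric blockers. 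Since relatively closest graphs contain no~$C_3$, every gadget I use must have girth at least four, so the fillers would again be built from~$4$-cycles (as in the buffer gadgets in the figures).

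The combinatorial heart is simpler than for \tcAcr{}: no information-copying gadget is needed, because \emph{subdividing edges preserves feedback vertex sets exactly}. If~$G_s$ is obtained from~$G$ by subdividing edges arbitrarily often, then cycles of~$G_s$ correspond bijectively to cycles of~$G$; every feedback vertex set of~$G$ is one of~$G_s$; and conversely any subdivision vertex occurring in a solution for~$G_s$ is a degree-two vertex lying on a unique path between two original vertices, hence may be moved to an adjacent original endpoint without increasing the solution size. Thus the minimum feedback vertex set sizes of~$G$ and~$G_s$ coincide. Accordingly I would realize, inside the grid, a \emph{subdivision} of~$G$: each vertex of~$G$ becomes a single grid vertex of degree at most four, and each edge becomes a path of grid vertices routed along grid lines, exactly filling the role that coloring paths played for \tcAcr{}.

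Correctness then splits into a combinatorial and a geometric claim. Combinatorially, I would argue that the filler gadgets contribute a fixed, precomputable number~$t$ of vertices to any feedback vertex set, independent of the transmitted structure, and set the target~$k'\coloneqq k+t$: a solution of size~$t+\gamma$ for~$G'$ is obtained from a feedback vertex set of size~$\gamma$ for~$G$ by additionally breaking each filler cycle, while the matching lower bound follows from exhibiting~$t$ filler~$4$-cycles together with the cycles carrying~$G$'s structure as an (essentially) vertex-disjoint packing, so that the two budgets add up. Geometrically, I would give the grid embedding exactly as in the proof of \cref{thm:tcNP} (single-vertex corners at their grid points, paired corners and intermediate vertices perturbed by a small~$\eps$) and verify, one grid face at a time, that the induced RCG, RNG, and GG all equal~$G'$: for every intended edge there is no RCG-blocker (the largest region of influence, which therefore also excludes RNG- and GG-blockers), and for every intended non-edge there is a GG-blocker (the smallest region, which therefore also yields RNG- and RCG-blockers); \cref{lemma:planar} then rules out any edge between vertices of different grid faces. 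Finally, the grid has~$\bigO(N)$ rows and columns with~$\bigO(1)$ vertices per cell and all routed paths have length~$\bigO(N)$, so~$G'$ has~$n=\bigO(N^2)$ vertices; a~$2^{o(n^{1/4})}$-time algorithm would hence solve the planar degree-four source in~$2^{o(N^{1/2})}$ time, contradicting the ETH.

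I expect the main obstacle to be the geometric realization coupled with the degree bound and the budget bookkeeping. One must place the filler~$4$-cycles so that they simultaneously (i)~block every unwanted long-range edge, (ii)~introduce no triangle (required for the RCG) and no vertex of degree exceeding four, and (iii)~interact with the subdivided copy of~$G$ in a sufficiently controlled way that no feedback vertex set can use a single deletion both to break a filler cycle and to hit a transmitted cycle, which is what makes the budget decompose as~$k+t$. A filler vertex that blocks a long edge is typically adjacent to the endpoints it separates, so the fillers cannot be kept cleanly disjoint from the skeleton; establishing the independence of the two budgets through a disjoint cycle packing, rather than through a naive block decomposition, is the delicate step. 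The open/closed-ball distinction between RCGs and RNGs must also be respected by keeping every blocker strictly inside the relevant region, which constrains the admissible~$\eps$.
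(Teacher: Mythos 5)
Your proposal matches the paper's proof in all essentials: the paper likewise reduces from \fvsAcr{} on planar graphs of maximum degree four (with the $2^{o(N^{1/2})}$ ETH bound for the source obtained from \cref{lemma:vc-ETH} via Speckenmeyer's linear-size reductions), realizes a subdivision of~$G$ inside a grid derived from a 2-page book embedding, fills the empty cells with triangle-free buffer gadgets each contributing exactly four to the budget --- justified, exactly as you anticipate, by exhibiting four vertex-disjoint cycles per buffer that meet no cycle of the skeleton --- and then verifies the RCG/RNG/GG property one grid face at a time before concluding via the $\bigO(N^2)$ size bound. The one detail to handle carefully is the vertex representation: to keep the skeleton a genuine subdivision (rather than a path with the neighbors of a vertex distributed along it, which would \emph{not} preserve $\fvs$), the paper uses spider-shaped vertex gadgets consisting of paths from a central vertex to up to four distinct outlets, which simultaneously resolves the routing of several incident edges out of a single grid cell.
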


\noindent
Our proof is based on a polynomial-time many-one reduction
from the \NP-complete~\cite{Speckenmeyer1983} 
\fvsTsc{} on planar graphs of maximum degree four.

The second part of~\cref{thm:fvsNP} is derived from the following.

\begin{observation}
 \label{obs:fvsETH}
 Unless~\ETHbreaks,
  \fvsTsc{} on planar graphs of maximum degree four
  admits no $2^{o(n^{1/2})}$-time algorithm
  where $n$ is the number of vertices.
\end{observation}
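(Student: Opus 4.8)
The plan is to mirror the proof of \cref{lemma:vc-ETH}: start from the ETH-based lower bound for \vcTsc{} on planar graphs of maximum degree three and transfer it to \fvsTsc{} on planar graphs of maximum degree four through a chain of reductions that each blow up the number of vertices only linearly. Since the whole chain is linear in the vertex count, a hypothetical $2^{o(n^{1/2})}$-time algorithm for the target problem would yield a $2^{o(n^{1/2})}$-time algorithm for \vcTsc{} on planar cubic graphs, contradicting \cref{lemma:vc-ETH} (and hence the ETH).

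First I would obtain the bound for \fvsTsc{} on planar graphs without any degree restriction. Here the classical reduction from \vcTsc{} to \fvsTsc{} suffices: given a planar cubic instance $G=(V,E)$, build $G'$ by keeping $G$ and adding, for every edge $e=\{u,v\}$, a fresh degree-two apex $a_e$ adjacent to both $u$ and $v$. Each edge now spans a triangle $\{u,v,a_e\}$, and one checks that the minimum feedback vertex set of $G'$ equals the minimum vertex cover of $G$: a vertex cover hits every triangle and leaves an independent set of original vertices with only pendant apexes attached (a forest), while conversely any feedback vertex set can be pushed off the apexes (each $a_e$ lies on the single triangle, so it can be swapped for $u$) and must then hit every triangle, i.e.\ cover every edge. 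This preserves planarity and has $|V'|=|V|+|E|=\O(|V|)$ vertices, so planar \fvsTsc{} inherits the $2^{o(n^{1/2})}$ lower bound.

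The remaining step, and the main obstacle, is to push the maximum degree down to four while retaining linear size. The triangle construction raises the degree of a cubic vertex to six, and, in contrast to \vcTsc{} (where a vertex can simply be blown up into a cycle as in \cref{lemma:vc-ETH}), \fvsTsc{} admits no equally clean degree reduction: a single high-degree vertex simultaneously routes and blocks many cycles, and a bounded-degree gadget replacing it can break all of these cycles only at a cost that depends on the global solution. I would therefore replace each vertex of degree larger than four by a bounded-degree routing gadget (for instance a small tree or ladder of degree-three vertices whose leaves take over the incident edges) and prove that the minimum feedback vertex set changes by a fixed, pre-computable additive offset per gadget; the accounting that every surviving cycle enters and leaves the gadget consistently is the delicate part. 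Alternatively, one may invoke the \NP-hardness reduction of Speckenmeyer~\cite{Speckenmeyer1983} for \fvsTsc{} on planar graphs of maximum degree four and merely verify that it increases the number of vertices only linearly.

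Finally, composing the (linear) degree reduction with the previous reduction keeps the instance size in $\O(n)$ throughout, so the $2^{o(n^{1/2})}$ lower bound carries over to \fvsTsc{} on planar graphs of maximum degree four, as claimed.
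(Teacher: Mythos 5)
Your closing ``alternatively'' clause is, in fact, the paper's entire proof: the paper simply observes that Speckenmeyer's~\cite{Speckenmeyer1983} chain of reductions establishing \NP-hardness of \fvsAcr{} on planar graphs of maximum degree four starts from \vcTsc{} on planar graphs of bounded degree and changes the number of vertices only linearly at each step, and then invokes \cref{lemma:vc-ETH} (the degree-three bound trivially covers degree four, since the former graphs are a subclass of the latter). So if you commit to that route and actually verify the linearity of Speckenmeyer's reductions, you are done, and your argument coincides with the paper's.

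Your primary route, however, has a genuine gap exactly where you flag it. The triangle (apex-per-edge) reduction from \vcTsc{} is fine and linear, but it produces planar graphs of maximum degree six, and the degree-reduction step you sketch --- replacing a high-degree vertex by a small tree or ladder and claiming the optimum shifts by a ``fixed, pre-computable additive offset'' --- does not work as stated for \fvsTsc{}. The obstruction is that deleting the original vertex~$v$ destroys \emph{all} cycles through~$v$ at cost~$1$, whereas after expansion two cycles that formerly shared only~$v$ may pass through disjoint parts of the gadget, so no single gadget vertex hits both; conversely, a cycle-free gadget contributes offset~$0$, so the cost of simulating ``delete $v$'' versus ``keep $v$'' is not a constant independent of the instance. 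Any correct bounded-degree gadget for \fvsAcr{} has to be designed so that membership of $v$ in the solution is faithfully encoded (this is precisely what Speckenmeyer's construction accomplishes), and that construction is not something one gets for free from a generic routing tree. So the first two paragraphs of your proposal prove the lower bound only for planar \fvsTsc{} without a degree restriction; the degree-four claim rests entirely on the citation in your last sentence.
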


\begin{proof}
 Speckenmeyer~\cite{Speckenmeyer1983} proves that \fvsAcr{} is \NP-hard on planar graphs with maximum degree four
	using a series of reductions starting from \textsc{Vertex Cover} on planar graphs with maximum degree four 
	and each of these reductions only introduces a linear change in the number of vertices.
	Using \cref{lemma:vc-ETH}, 
	this implies that \fvsAcr{} on planar graphs with maximum degree four admits no~$2^{o(n^{1/2})}$-time algorithm where $n$ is the number of vertices unless \ETHbreaks.
\end{proof}

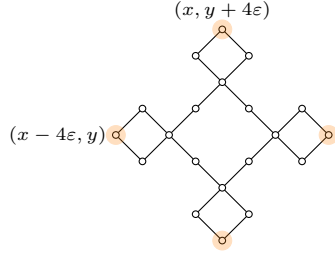
\begin{figure}
\centering
 \begin{tikzpicture}
  \def\xr{1}
  \def\yr{1}
  \tikzpreamble{}
  \def\teps{0.35}
  \def\bsc{0.8}
  \def\fsize{\scriptsize}
  
  \begin{scope}
    \buffer{A}{0}{0}{\bsc};
\node at (A!tCt)[anchor=south,font=\fsize]{$(x,y+4\eps)$};
    \node at (A!lCl)[anchor=east,font=\fsize]{$(x-4\eps,y)$};
    \foreach \z in {lCl,rCr,tCt,bCb}{\node at (A!\z)[xhiliS]{};}
\end{scope}
 
 \end{tikzpicture}
 \caption{Buffer at position~$(x,y)\in\R^2$ with highlighted outer vertices.}
 \label{fig:fvs-buffers}
\end{figure}

\noindent
In the reduction, we will use the graph pictured in \cref{fig:fvs-buffers}, which we call a \emph{buffer}.
The highlighted vertices are called \emph{outer vertices}.
We will also use several gadgets to represent vertices in the input graph.
The~$(4,0)$-, 
$(3,1)$-, 
and $(2,2)$-\emph{vertex gadgets} are pictured in \cref{fig:fvs-gadgets}.
Each of them contains several buffers.
The highlighted vertices in each vertex gadget are called \emph{outlets}.
When referring to them, we will order the outlets in the top and bottom half from left to right, calling them the first top outlet, second top outlet, etc.

\begin{figure}
 \centering
 \includegraphics{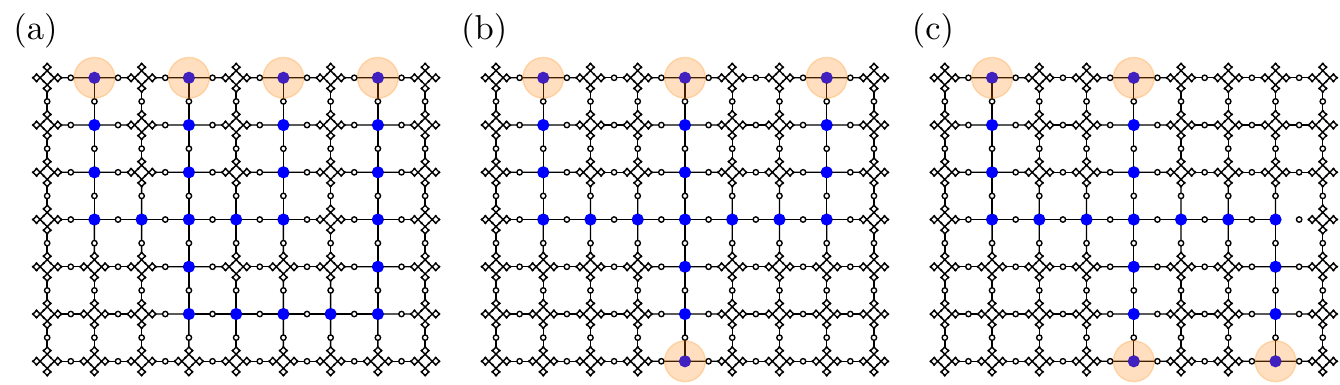}
 \caption{The (a) $(4,0)$-, 
 (b) $(3,1)$-, 
 and (c) $(2,2)$-vertex gadget with highlighted outlets.}
 \label{fig:fvs-gadgets}
\end{figure}

We will now give the construction of the reduction 
from \fvsAcr{} on planar graphs with maximum degree four
to \fvsAcr{} on RCGs, RNGs, and GGs.

\begin{figure}
	\centering
	\includegraphics{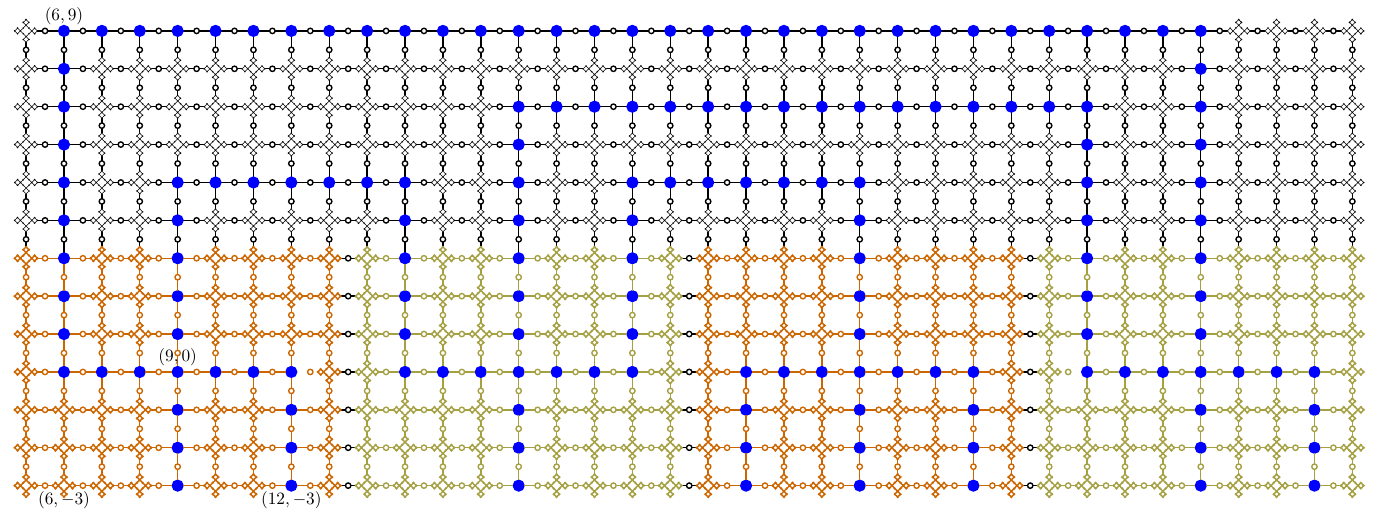}
	\caption{\cref{constr:fvs} applied to the graph in~\cref{fig:exgraph}.
		The large blue vertices form a subdivision of the original graph.
		The bottom half consists of four vertex gadgets,
		colored alternatingly orange and yellow. 
}
	\label{fig:fvs}
\end{figure}

\begin{construction}
	\label{constr:fvs}
	Let $G=(V,E)$~be a planar graph with maximum degree four and let~$k\in\Nzero$.
	We will construct a graph~$G'=(V',E')$ that is an RCG, 
	an RNG, 
	and a GG,
	and~$k'\in\Nzero$ such that $G$~contains a feedback vertex set of size~$k$ if and only if $G'$~contains a feedback vertex set of size~$k'$ (see \cref{fig:fvs} for an illustration).
	The vertex set~$V'$ will mostly consist of groups of vertices
	called $(x,y)$-\emph{corners},
	where $5 \leq x \leq 9n + 4$ and $-2h_2(G)-3 \leq y \leq 2h_1(G)+3$.
	Each corner consists of either a single vertex or a buffer.
	In the embedding of $G'$, the vertices forming the $(x,y)$-corner will be embedded roughly around the coordinate $(x,y)$.
	In the case of a buffer, only its outer vertices will have edges to vertices outside of that corner.
	We refer to them as the top, bottom, left, 
	and right \emph{connecting vertices} 
	of the corner.
	If a corner consists of a single vertex, then that vertex itself simultaneously forms the top, bottom, left, and right connecting vertex of that corner.
We start with $G'\coloneqq G$ and $k'\coloneqq k$ and then perform the following steps in order.

	\textbf{Step~1:} 
	Compute a 2-page book embedding of~$G$. 
	Let~$v_1,\ldots,v_n$
	be the vertices of~$G$ enumerated in the order in which they appear on the spine,
	and let~$\{E_1,E_2\}$ denote the partition of~$E$.

	\textbf{Step~2:} For each $v_i$ add an $(a,b)$-vertex gadget to $G'$ such that 
	$a+b=4$, 
	$a\geq\deg_1(v_i)$,
	and $b\geq\deg_2(v_i)$.
	Increase $k'$ by $4b$ where $b$ is the number of buffers in the vertex gadget.
	Connect the vertices on the boundary of each gadget to the adjacent gadgets as pictured in the lower half of \cref{fig:fvs}.
	The vertices of the gadget representing~$v_i$ are organized into $(x,y)$-corners
	where $9i - 4 \leq x \leq 9i + 4$ and $-3 \leq y \leq 3$,
	with the central vertex of that gadget forming the $(9i, 0)$-corner.
	In particular,
	the $a$~outlets and $9-a$~buffers along the top of the gadget
	form the $(x,3)$-corners where~$9i-4 \leq x \leq 9i+4$.
	Similarly, the $b$~outlets and $9-b$~buffers along the bottom of the gadget form the $(x, -3)$-corners.
	Recall the order of edges incident to a vertex defined in \cref{sec:prelim}.
	Let $e_1<\ldots<e_{\deg_1(v_i)}$ be the edges in~$E_1$ incident to~$v_i$ in that order.
	Then, 
	edge~$e_i$ is attached to the $i$-th top outlet.
	The edges in~$E_2$ are attached to the bottom outlets in the same manner.

	\textbf{Step~3:}
	For every edge~$e=\{v_i,v_j\}\in E_1$,~$i<j$, do the following:
	Suppose $e$~is $v_i$'s~$r$-th~edge and $v_j$'s~$s$-th~edge according to the order.
	After $e$ is attached to vertex gadgets in the previous step, $e$~starts at the top connecting vertex of the $(9i + \alpha, 3)$-corner and ends at the $(9j + \beta, 3)$-corner,
	where
	\begin{align*}
		\alpha \coloneqq
		\begin{cases}
			2r-5,& \text{ if } \deg_1(v_i)=4,\\
			3r-6,& \text{ if } \deg_1(v_i)\in\{2,3\},\\
0,& \text{ if } \deg_1(v_i)=1, 
		\end{cases}
		\qquad \text{and}
		\qquad 
		\beta \coloneqq
		\begin{cases}
			2s-5,& \text{ if } \deg_1(v_j)=4,\\
			3s-6,& \text{ if } \deg_1(v_j)\in\{2,3\},\\
0,& \text{ if } \deg_1(v_j)=1. 
		\end{cases}
	\end{align*}
Let~$L\ceq 9\ell(e) - \alpha + \beta - 1$ and $d\coloneqq 4h(e) + L$.
	Subdivide the edge $2d-1$ times and let $w_1,\ldots,w_{2d-1}$ be the vertices introduced by the subdivision.
	For every~$t \in \set{2h(e)}$, 
	the vertex~$w_{2t}$ is the $(9i+\alpha,3+t)$-corner,
	and if~$t<2h(e)$,
	the vertex~$w_{4h(e) + 2L + 2t}$ is the $(9j + \beta,3+2h(e) -t +1)$-corner.
	For every~$t\in\set{L}$, 
	the vertex~$w_{4h(e) + 2t}$ is the $(9i+\alpha+t,3+2h(e))$-corner.
If $t$ is odd, 
	then $w_t$~is an intermediate vertex.
	
	\textbf{Step~4:} Add buffers to~$G'$ as depicted in~\cref{fig:fvs}.
	More precisely, 
	for each~$(x,y)\in\N^2$ with~$5 \leq x \leq 9n + 4$ 
	and~$4 \leq y \leq 2h_1(G) + 3$,
	if no $(x,y)$-corner was added in step~3, 
	do the following.
	Add a buffer, 
	which becomes the $(x,y)$-corner,
	increase~$k'$ by~$4$,
	and add an edge connecting the top connecting vertex of the $(x,y)$-corner with the bottom connecting vertex of the $(x,y+1)$-corner,
    the left connecting vertex of the $(x,y)$-corner 
    with the right connecting vertex of the $(x-1,y)$-corner, 
    and so on.
    Subdivide each of these edges once,
    each resulting is an intermediate vertex.
    
	In steps~3~and~4, we have only taken $E_1$ into account.
	These steps must be repeated analogously for the edges in $E_2$,
	using negative $y$-coordinates.
	\cqed
\end{construction}

\noindent
In order to prove the correctness of the reduction, 
we will need several observations.

\begin{lemma}
	Let $G$ and $G'$ be graphs and let $\fvs(G)$ and $\fvs(G')$ denote the size of a smallest feedback vertex set in each graph.
	\begin{enumerate}[(i)]
		\item If $G'$ is obtained from~$G$ by subdividing an edge, then $\fvs(G) = \fvs(G')$.
		\item If $G'$ is obtained from~$G$ by adding vertices of degree at most one, then $\fvs(G) = \fvs(G')$.
		\item If $G'$ is obtained from~$G$ by adding a copy of the buffer graph and connecting each of the outer vertices of the buffer to at most one vertex in~$G$, then $\fvs(G') = \fvs(G) + 4$.
		\item If $G'$ is the graph obtained by applying \cref{constr:fvs} to~$G$,
		and $n$~is the number of vertices in~$G$, then $G'$ contains $\bigO(n^2)$~vertices.
	\end{enumerate}
	\label{lemma:fvs}
\end{lemma}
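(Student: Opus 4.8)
The plan is to prove the four parts in turn, with parts~(i) and~(ii) being standard facts about how low-degree vertices interact with the cycle structure, part~(iii) being the technical heart, and part~(iv) a counting argument.

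For part~(i), I would use that subdividing $e=\{u,v\}$ into $u$--$w$--$v$ induces a bijection between the cycles of $G$ and those of $G'$. This immediately gives $\fvs(G') \le \fvs(G)$, since any feedback vertex set $X \subseteq V(G)$ of $G$ leaves $G'-X$ acyclic. For the reverse inequality, starting from a feedback vertex set $X'$ of $G'$, I would first argue that we may assume $w \notin X'$: since $\deg_{G'}(w)=2$, deleting $u$ dominates deleting $w$ (every cycle through $w$ passes through $u$, and once $u$ is removed $w$ has degree at most one), so replacing $w$ by $u$ yields a feedback vertex set of no larger size that avoids $w$. Then $X' \subseteq V(G)$ is already a feedback vertex set of $G$. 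Part~(ii) is even simpler: vertices of degree at most one lie on no cycle, so they can be assumed absent from any minimum feedback vertex set and can be ignored when checking acyclicity, and both inequalities follow.

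The main work is part~(iii), and here the key is to read off the structure of the buffer (\cref{fig:fvs-buffers}). The buffer contains four vertex-disjoint $4$-cycles, which forces $\fvs \ge 4$ for the buffer and, more usefully, forces $\abs{X' \cap V(\text{buffer})} \ge 4$ for any feedback vertex set $X'$ of $G'$. For the lower bound $\fvs(G') \ge \fvs(G)+4$, I would take a minimum feedback vertex set $X'$ of $G'$, observe that $X \coloneqq X' \cap V(G)$ is a feedback vertex set of $G$ (any cycle of $G-X$ lives in $G \subseteq G'$ and avoids $X'$, contradicting that $G'-X'$ is a forest), and combine $\abs{X} \ge \fvs(G)$ with the four buffer vertices, which are disjoint from $V(G)$. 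For the upper bound $\fvs(G') \le \fvs(G)+4$, I would exhibit an explicit $4$-vertex set inside the buffer---namely the four vertices at which the $4$-cycles attach to the inner square---whose removal simultaneously destroys all four buffer cycles and, crucially, disconnects the four outer (connecting) vertices from one another within the buffer, leaving four separate paths each containing exactly one outer vertex. Together with a minimum feedback vertex set of $G$, this set should leave $G'$ acyclic.

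The step I expect to be the real obstacle is ruling out ``crossing'' cycles in this upper bound, i.e.\ cycles that alternate between the buffer and $G$. This is exactly where the hypothesis that each outer vertex is joined to \emph{at most one} vertex of $G$ is used: any crossing cycle would have to enter and leave the buffer through two \emph{distinct} outer vertices and hence traverse a buffer-internal path between them, but after removing the chosen $4$-set the four outer vertices lie in separate components of the buffer, so no such path---and thus no crossing cycle---survives. Finally, for part~(iv) I would count as in \cref{lemma:tc}(iv): by the coordinate bounds in \cref{constr:fvs} there are $\bigO(n)$ columns, and since $h_1(G),h_2(G) \in \bigO(n)$ there are $\bigO(n)$ rows, giving $\bigO(n^2)$ corners of $\bigO(1)$ vertices each; the subdivisions in steps~3 and~4 contribute $\bigO(1)$ intermediate vertices per corner-edge, hence $\bigO(n^2)$ in total.
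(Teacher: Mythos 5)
Your proposal is correct and follows essentially the same route as the paper: parts~(i) and~(ii) by the standard degree/subdivision arguments (which the paper dismisses as immediate), part~(iii) by adding the four degree-4 attachment vertices of the buffer for the upper bound and using the four vertex-disjoint $4$-cycles for the lower bound, and part~(iv) by the same $\bigO(n^2)$-corners count. You merely spell out the verification (in particular that the outer vertices end up in separate buffer components, ruling out crossing cycles) that the paper leaves implicit.
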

{
	\begin{proof}
		Parts~(i) and~(ii) are immediately obvious.
		\begin{enumerate}[(i)]
		\setcounter{enumi}{2}
			\item
  			Let $X$~be a feedback vertex set in $G$.
  			Then we obtain a feedback vertex set~$X'$ in~$G'$
  			by adding the four degree-4 vertices of the buffer to~$X$.
  			Thus $\fvs(G') \leq \fvs(G) +4$.
Conversely, any feedback vertex set~$X'$ in~$G'$ must contain at least 4 vertices from the buffer,
			as the buffer contains 4~vertex-disjoint cycles.
			As none of these 4~vertices intersects any cycle in~$G$,
			deleting them from~$X'$ yields a feedback vertex set for~$G$.
			Thus $\fvs(G) \leq \fvs(G') - 4$.

			\item The graph $G'$ contains~$\bigO(n)$ vertex gadgets
			(each containing a constant number of vertices)
			and~$\O(n^2)$ corners, each consisting of~$\O(1)$ vertices,
			and~$\O(n^2)$ vertices obtained from the subdivisions in step~5.
\qedhere
		\end{enumerate}
\end{proof}	
}

\noindent
We are set to prove the main result
of this section.

{
	\begin{proof}[Proof of~\cref{thm:fvsNP}]
Each vertex gadget consists of paths from a central vertex to the outlets as well as numerous buffers in addition to vertices that can be introduced by subdividing edges within the gadget.
		Hence, by \cref{lemma:fvs}, step~2 of \cref{constr:fvs} is correct.
		For steps 3~and~4, this also follows from the lemma.
		
		As noted in the same lemma, the size of~$G'$ is polynomial in the size of~$G$.
		It is easy to see that the computations in \cref{constr:fvs} can be performed in polynomial time
        and that there is no vertex of degree greater than~4.
	
		We will argue that $G'$ is an RNG.
		We embed the vertices in the following manner:
		The vertex gadget representing the vertex $v_i$ is embedded using the coordinates given in \cref{fig:fvs-gadgets}
		with the center vertex located at $(9i,0)$.
		The $(x,y)$-corner is embedded as shown in \cref{fig:fvs-buffers}
		with its center vertex
		located at~$(x, y)$.
		Intermediate vertices are adjacent to vertices in at most two corners and are embedded halfway between those two corners.
		
		\begin{figure}[t]
			\centering
            \begin{tikzpicture}
            \def\xr{1}
            \def\yr{1}
            \tikzpreamble{}
            
            \def\fname{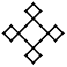}
            \def\fhname{figs/fvs-hbuffer.pdf}

            \def\nsc{0.4}
            \def\lsc{1}
            \def\bsc{0.25}
            \def\lcol{red}
            
            \newcommand{\myclip}{\clip (-0.4*\xr,-0.4*\yr) rectangle (1.4*\xr,1.4*\yr);}
            
            \newcommand{\fvsgrid}{
            \Grid{A}{3}{3}{-1}{-1}{scale=0.1}{-,lightgray}{1}{1}
            \Grid{B}{1}{1}{0}{0}{scale=0.1}{xedge}{1}{1}
            }
            
            \newcommand{\fvsplace}[1]{
                \foreach\x\y in {0/0,1/0,1/1,0/1}{
                    \node at (\x*\xr,\y*\yr)[xnode,scale=1.5,fill=blue]{};
                }    
                \foreach\x\y in {#1}{
                    \node at (\x*\xr,\y*\yr)[]{\includegraphics[scale=\xr]{\fname}};
                }
                \foreach\x\y in {0/0.5,1/0.5,0.5/1,0.5/0}{
                    \node at (\x*\xr,\y*\yr)[xnode]{};
                }
            }
            
            \begin{scope}[shift={(0*\xr cm,0*\yr cm)}]
            \node at (-0.5*\xr,1.5*\yr)[]{(a)};
\myclip{}
            \fvsgrid{}
            \fvsplace{0/0,1/0,1/1,0/1}
            \end{scope}
            
            \begin{scope}[shift={(2.5*\xr cm,0*\yr cm)}]
            \node at (-0.5*\xr,1.5*\yr)[]{(b)};
\myclip{}
            \fvsgrid{}
            \fvsplace{0/0,1/1,0/1}
            \end{scope}
            
            \begin{scope}[shift={(5*\xr cm,0*\yr cm)}]
            \node at (-0.5*\xr,1.5*\yr)[]{(c)};
\myclip{}
            \fvsgrid{}
            \fvsplace{0/0,0/1}
            \end{scope}
            
            \begin{scope}[shift={(7.5*\xr cm,0*\yr cm)}]
            \node at (-0.5*\xr,1.5*\yr)[]{(d)};
\myclip{}
            \fvsgrid{}
            \fvsplace{0/0}
            \end{scope}
\end{tikzpicture}
\caption{The four types of grid faces in $G'$}
			\label{fig:fvs-gf}
		\end{figure}
		
		We must show that there is no RCG blocker for any edge in $G'$ and that there is a GG blocker for every pair of non-adjacent vertices.
		For any $(x,y) \in \Z^2$ with $5\leq x < 9n+4$ and~$-2h_2(G) - 3\leq y < 2h_1(G) +3$, the subgraph of $G'$ induced by the vertices in the corners~$(x,y)$, $(x+1,y)$, $(x,y+1)$, and $(x+1,y+1)$ along with the intermediate vertices adjacent to vertices in those corners will be called a \emph{grid face}.
		We start by proving the claim for any two vertices that share a grid face.
		There are four types of grid faces (up to symmetry), depending on how many of its corners .
		They are pictured in \cref{fig:fvs-gf}.
		In each case, the intermediate vertices are GG blockers for any edges between corners.
\end{proof}

}

Note that \fvsAcr{} is polynomial-time solvable on graphs with maximum degree three~\cite{Ueno1988}.

\section{Hamiltonian Cycle}
\label{sec:hc}

In this section, we study the \hcTsc{} problem on RNGs and GGs (for RCGs the complexity will be left open).
This problem is defined by:
\decprob{\hcTsc{} (\hcAcr{})}{hc}
{An undirected graph~$G=(V,E)$.}
{Is there a cycle in~$G$ visiting every vertex in~$V$ exactly once?}
We will prove the following:

\begin{theorem}
	\label{thm:hcNP}
	\hcTsc{}
	on RNGs and on GGs
	is \NP-hard,
	even if the maximum degree is four.
	Moreover, unless \ETHbreaks, 
it
	admits no~$2^{o(n^{1/4})}$-time~algorithm
	where $n$ is the number of vertices.
\end{theorem}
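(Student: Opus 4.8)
The plan is to follow the same overall template used for \tcTsc{} and \fvsTsc{}: reduce from \hcTsc{} on planar graphs of maximum degree three (or four), compute a $2$-page book embedding via \cref{thm:prelim-compute-be}, and translate it into a grid-like drawing where vertices, edges, and the empty space between them are realized by gadgets. The crucial difference from the previous reductions is that \hcAcr{} is not preserved under the naive ``copying'' tricks (coloring paths, buffers) used before: subdividing an edge keeps Hamiltonicity, but filler gadgets must be traversable by a Hamiltonian cycle in a controlled way, and they cannot simply be dead-ended. So first I would design a \emph{wire gadget} that carries the would-be Hamiltonian cycle along grid edges, together with a \emph{vertex gadget} of degree four that lets the cycle enter and leave along a chosen pair of incident edges (forcing a consistent routing), and a \emph{filler gadget} occupying unused grid faces that is forced to be traversed by the cycle exactly once (e.g.\ a gadget whose only Hamiltonian traversals are a single pass between two designated boundary ports). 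The standard approach is to make each such gadget have a small number of vertices and maximum degree four, and to argue that any Hamiltonian cycle of $G'$ must cross each gadget in one of a constant number of canonical ways.

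Second, I would establish the gadget-level correctness lemmas, analogous to \cref{lemma:tc} and \cref{lemma:fvs}: one lemma stating that subdividing an edge preserves Hamiltonicity (immediate), and one or two lemmas characterizing the canonical traversals of the vertex and filler gadgets, from which I conclude that $G$ has a Hamiltonian cycle if and only if $G'$ does. The forward direction lifts a Hamiltonian cycle of $G$ to one of $G'$ by routing it through the chosen edges and sweeping through all filler faces; the backward direction argues that any Hamiltonian cycle of $G'$, restricted to the vertex gadgets and wires, projects to a Hamiltonian cycle of $G$ because the gadgets force locally consistent routing. I would also prove the size bound $\bigO(n^2)$ on $|V(G')|$, exactly as in \cref{lemma:tc}(iv) and \cref{lemma:fvs}(iv), since the grid has $\bigO(n^2)$ faces each filled with a constant-size gadget.

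Third, and this is the geometrically delicate part shared with the earlier theorems, I would exhibit an explicit embedding $\emb\colon V(G')\to\R^2$ placing each corner near its intended integer coordinate (with small perturbations $\pm\eps$ for paired vertices, and intermediate vertices at midpoints) and verify that the induced RNG and GG equal $G'$. Following the pattern of the proof of \cref{thm:tcNP}, this reduces to a finite case check over grid faces (up to symmetry): for each adjacent pair I show there is no RNG blocker, and for each non-adjacent pair sharing a grid face I show there is a GG blocker, with \cref{lemma:planar} ruling out edges between vertices in different faces. The $\eps$-perturbation distance computations are routine and identical in spirit to the $1+\eps^2 < 1 + \tfrac14 - 2\eps + \eps^2$ estimate already carried out. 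Finally, for the ETH lower bound I would invoke that \hcAcr{} on planar graphs of maximum degree three admits no $2^{o(n^{1/2})}$-time algorithm under the ETH (via the known chain of reductions from planar \vcAcr{}, cf.\ \cref{lemma:vc-ETH}), and observe that since our reduction blows the instance size up only quadratically, a $2^{o(n^{1/4})}$-time algorithm on RNGs/GGs would yield a $2^{o(n^{1/2})}$-time algorithm for the source problem.

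The main obstacle I anticipate is designing the filler gadget so that it is \emph{forced} to be traversed by the Hamiltonian cycle in a unique (or at most constantly many) way while simultaneously (i) having maximum degree four, (ii) being realizable as an RNG and GG under the $\eps$-perturbed grid embedding without introducing spurious edges to neighboring faces, and (iii) interacting correctly at all four connecting vertices with arbitrary neighboring gadgets. Unlike \fvsAcr{}'s buffers, which only needed to contribute a fixed cost, here the filler must be penetrated by a global cycle, so ensuring that ``no filler face can be skipped'' and ``no filler face can be entered and exited twice'' is the heart of the argument and the step most likely to require a carefully tailored gadget and a somewhat intricate case analysis.
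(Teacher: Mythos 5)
Your template is the right one and you have correctly located the difficulty in the filler gadgets, but two key ideas are missing, and without them the plan does not go through. First, your ``wire gadget'' for edges must support \emph{two} canonical behaviours, not one: if an edge $\{u,v\}\in E$ is \emph{not} on the Hamiltonian cycle of $G$, the vertices of its wire still have to be visited by the Hamiltonian cycle of $G'$, so the gadget needs a second mode in which the cycle enters and leaves each end of the wire separately, without providing a connection between $u$'s and $v$'s gadgets. The paper's ladder paths realize exactly this dichotomy (a ``traversal'' versus a ``partial cover'' of one half plus a ``full cover'' of the other, \cref{lemma:hc-trav}), and the switch $w_1,\dots,w_7$ in the middle of each ladder path is what forces every Hamiltonian cycle of $G'$ into one of the two modes. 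Your proposal routes the cycle only ``through the chosen edges,'' leaving the vertices of the unused edges' wires uncollected.

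Second, a filler that is ``forced to be traversed between two designated boundary ports'' cannot be attached blindly: whichever ports you choose, they connect to edges of a neighbouring gadget, and whether the global cycle actually uses those edges depends on which of the (at least two) canonical modes that gadget is in. If you dock a filler to an edge that the cycle happens not to use, its vertices cannot be collected, and if you make the filler reachable from several neighbours you risk degree violations or shortcuts. The paper's way out is the notion of a \emph{permissible} edge (one lying on some Hamiltonian cycle whenever one exists) together with \emph{permissible cycle addition} (\cref{lemma:hc-cycle-addition}): a buffer is a cycle whose docking side adjoins a permissible edge, and the degree constraints force any Hamiltonian cycle to detour through the buffer exactly there. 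Crucially, one then needs \cref{lemma:hc-permissible-ladder}, a case analysis showing that every ladder path attached to an outer outlet contains identifiable permissible edges \emph{regardless} of whether it ends up traversed or covered; buffers are then added region by region, the first one in each region docking to such an edge and all later ones docking to sides of earlier buffers (whose edges are permissible by the same lemmata). This permissibility bookkeeping is the idea your plan lacks, and it is precisely what makes ``no filler face can be skipped'' provable. (A minor point: the paper reduces from \hcAcr{} on $3$-regular planar graphs using the known bound of \cref{lemma:hc-eth} rather than deriving the lower bound from \vcAcr{}.)
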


\noindent
To prove \cref{thm:hcNP},
we give a polynomial-time many-one reduction from the restriction of \hcTsc{} to $3$-regular planar graphs,
for which we have the following.

\begin{proposition}[\cite{Garey1976b},\cite{Lokshtanov2011}]
	\label{lemma:hc-eth}
	\hcTsc{} on $3$-regular planar graphs is \NP-hard
	and
	admits no $2^{o(n^{1/2})}$-time algorithm
	unless \ETHbreaks{}.
\end{proposition}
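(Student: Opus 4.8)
The plan is to treat this as a combination of a classical \NP-hardness result with the standard Exponential-Time-Hypothesis machinery for planar problems, since the statement is already attributed to~\cite{Garey1976b,Lokshtanov2011}. For the \NP-hardness of \hcTsc{} on $3$-regular planar graphs I would invoke the reduction of Garey, Johnson, and Tarjan~\cite{Garey1976b}. At a high level this reduces from \prob{3-SAT} by building variable and clause gadgets, replacing each edge crossing in a fixed planar drawing of the resulting graph by a constant-size planar \emph{crossover gadget} that simulates the crossing while keeping the graph planar, and finally converting every high-degree vertex into a cubic gadget (a Tutte-type vertex splitter) that preserves Hamiltonicity. Each transformation is polynomial-time and preserves the existence of a Hamiltonian cycle, which gives \NP-hardness on $3$-regular planar graphs.

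For the running-time lower bound the point is to track the size blow-up of this reduction and then appeal to \cref{hyp:eth}. By the Sparsification Lemma one may assume that a \prob{3-SAT} instance has $m=\O(n)$ clauses and, under the ETH, admits no $2^{o(n)}$-time algorithm. The variable/clause gadget graph then has $\O(n+m)=\O(n)$ vertices; fixing any planar drawing of its skeleton creates at most $\O((n+m)^2)$ crossings, each removed by a constant-size crossover gadget, so the planar graph produced has $N=\O(n^2)$ vertices, and the cubic conversion increases the vertex count only by a constant factor. Hence a $2^{o(N^{1/2})}$-time algorithm for \hcTsc{} on $3$-regular planar graphs would solve \prob{3-SAT} in $2^{o((n^2)^{1/2})}=2^{o(n)}$ time, contradicting the ETH. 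Alternatively one could route through a linear-size planar-to-planar reduction from planar \prob{3-SAT}, whose own $2^{o(n^{1/2})}$ lower bound is recorded in~\cite{Lokshtanov2011}, but the quadratic-blow-up argument above is the most direct.

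The main obstacle is not the asymptotics but the correctness of the gadgets under both the planarization and the cubic conversion: one must verify that the crossover gadget can be traversed by a Hamiltonian cycle only in the two intended ``parallel'' ways, and that the Tutte-style vertex splitter neither creates spurious Hamiltonian cycles nor destroys existing ones. These are precisely the delicate points established in~\cite{Garey1976b}, so for our purposes it is cleanest to cite that construction wholesale and merely supply the size accounting needed for the ETH bound.
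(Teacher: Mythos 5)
The paper does not prove this proposition at all --- it is stated as a known result and attributed wholesale to~\cite{Garey1976b} and~\cite{Lokshtanov2011}, which is exactly the route you take. Your additional size accounting (sparsified \prob{3-SAT} with $\O(n)$ clauses, quadratic blow-up from planarization via constant-size crossover gadgets, hence $N=\O(n^2)$ and a $2^{o(N^{1/2})}$ algorithm yielding $2^{o(n)}$ for \prob{3-SAT}) is the standard and correct derivation of the $2^{o(n^{1/2})}$ bound, so your proposal is consistent with, and slightly more explicit than, what the paper does.
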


\noindent
The reduction in the proof of \cref{thm:hcNP} consists of two Hamiltonicity-preserving modifications: 
gadget expansion (\cref{ssec:gadexp}) and face filling (\cref{ssec:facefill}).

\subsection{Gadget Expansion}
\label{ssec:gadexp}

\begin{figure}[t]
\centering
	\begin{tikzpicture}
  \def\xr{0.47}
  \def\yr{0.47}
  \tikzpreamble{}
  \def\fpath{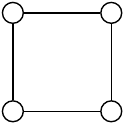}
  \def\ffpath{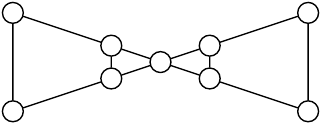}
\def\onesix{1/6}

  \newcommand{\Ladder}[2]{\pgfmathsetmacro\they{int(#1 -2)}
    \pgfmathsetmacro\thex{int(#2 +5)}
    \pgfmathsetmacro\thexx{int(#2 +4)}
    
    \foreach\y in {0,...,\they}{
      \node at (0*\xr,\y*\yr)[]{\includegraphics[scale=\xr]{\fpath}};
      \node at (\thex*\xr,\y*\yr)[]{\includegraphics[scale=\xr]{\fpath}};
    }
    
    \foreach\x in {0,...,#2,\thexx,\thex}{
      \node at (\x*\xr,\they*\yr)[]{\includegraphics[scale=\xr]{\fpath}};
    }
\pgfmathsetmacro\thexxx{#2 + 2}
    \node at (\thexxx*\xr,\they*\yr)[]{\includegraphics[scale=\xr]{\ffpath}};
  }
  
  \newcommand{\LadderLab}[2]{\pgfmathsetmacro\they{int(#1 +1)}
    \pgfmathsetmacro\theyy{int(#1 -1)}
    \pgfmathsetmacro\thex{int(#2 +5)}
    \pgfmathsetmacro\thexx{int(#2 -1)}
    \begin{scope}[font=\footnotesize]
    \node at (-0.5*\xr,-0.5*\yr)[anchor=east]{$u_0^1$};
    \node at (0.5*\xr,-0.5*\yr)[anchor=west]{$\tilde{u}_0^1$};
    \node at (\thex*\xr-0.5*\xr,-0.5*\yr)[anchor=east]{$\tu_0^3$};
    \node at (\thex*\xr+0.5*\xr,-0.5*\yr)[anchor=west]{$u_0^3$};
\node at (-0.5*\xr,\theyy*\yr-0.5*\yr)[anchor=east]{$u_\theyy^1$};
    \node at (0.5*\xr,\theyy*\yr-0.5*\yr)[anchor=south]{$\tilde{u}_\theyy^1$};
    \node at (\thex*\xr-0.5*\xr,\theyy*\yr-0.5*\yr)[anchor=south]{$\tu_\theyy^3$};
    \node at (\thex*\xr+0.5*\xr,\theyy*\yr-0.5*\yr)[anchor=west]{$u_\theyy^3$};
    
    \node at (1.5*\xr,\theyy*\yr-0.5*\yr)[anchor=south]{$u_0^2$};
    \node at (1.5*\xr,\theyy*\yr-1.5*\yr)[anchor=north]{$\tilde{u}_0^2$};
    \node at (#2*\xr+0.5*\xr,\theyy*\yr-0.5*\yr)[anchor=south]{$u_\thexx^2$};
    \node at (#2*\xr+0.5*\xr,\theyy*\yr-1.5*\yr)[anchor=north]{$\tilde{u}_\thexx^2$};
    
    \node (c) at (#2*\xr+2*\xr,\theyy*\yr-3*\yr)[anchor=north]{$w_3$};
    \draw[-,dotted,lightgray,thick] (c) to (#2*\xr+2*\xr,\theyy*\yr-1.1*\yr);
    \node at (#2*\xr+1.5*\xr,\theyy*\yr-1*\yr+\onesix*\yr)[anchor=south]{$w_1$};
    \node at (#2*\xr+1.5*\xr,\theyy*\yr-1*\yr-\onesix*\yr)[anchor=north]{$w_2$};
    \node at (#2*\xr+2.5*\xr,\theyy*\yr-1*\yr+\onesix*\yr)[anchor=south]{$w_4$};
    \node at (#2*\xr+2.5*\xr,\theyy*\yr-1*\yr-\onesix*\yr)[anchor=north]{$w_5$};
    \node at (#2*\xr+3.5*\xr,\theyy*\yr-0.5*\yr)[anchor=south]{$w_6$};
    \node at (#2*\xr+3.5*\xr,\theyy*\yr-1.5*\yr)[anchor=north]{$w_7$};
    \end{scope}
  }
  
  \newcommand{\LadderDPaths}[2]{
    \pgfmathsetmacro\they{int(#1 -1)}
    \pgfmathsetmacro\thexx{int(#2 +5)}
    \pgfmathsetmacro\thex{int(#2 +1)}
    
    \draw[xpathS] (-0.5*\xr,-0.5*\yr) to (-0.5*\xr,\they*\yr-0.5*\yr) to (\thex*\xr-0.5*\xr,\they*\yr-0.5*\yr) to (\thex*\xr+0.5*\xr,\they*\yr-1*\yr+\onesix*\yr)
    to  (\thex*\xr+0.5*\xr,\they*\yr-1*\yr-\onesix*\yr) to (\thex*\xr-0.5*\xr,\they*\yr-1.5*\yr) to (0.5*\xr,\they*\yr-1.5*\yr) to (0.5*\xr,-0.5*\yr) ;
    \draw[xpathSx] (\thexx*\xr-0.5*\xr,-0.5*\yr) to (\thexx*\xr-0.5*\xr,\they*\yr-1.5*\yr) to (\thexx*\xr-1.5*\xr,\they*\yr-1.5*\yr)
    to (\thexx*\xr-3*\xr,\they*\yr-1*\yr) 
    to (\thexx*\xr-1.5*\xr,\they*\yr-0.5*\yr) to (\thexx*\xr-0.5*\xr,\they*\yr-0.5*\yr) to (\thexx*\xr+0.5*\xr,\they*\yr-0.5*\yr) to (\thexx*\xr+0.5*\xr,-0.5*\yr);
  }
  
  \newcommand{\mylab}[1]{\node at (-0.9*\xr,3.7*\yr)[]{#1};}
  
  \begin{scope}
    \mylab{(a)};
    \Ladder{4}{2}
    \LadderLab{4}{2}
    \draw[xpathS] (-0.5*\xr,-0.5*\yr) -- (-0.5*\xr,2.5*\yr) -- (2.5*\xr,2.5*\yr);
    \draw[xpathS] (6.5*\xr,2.5*\yr) -- (7.5*\xr,2.5*\yr) -- (7.5*\xr,-0.5*\yr);
    \draw[xpathSx] (0.5*\xr,-0.5*\yr) -- (0.5*\xr,1.5*\yr) -- (2.5*\xr,1.5*\yr);
    \draw[xpathSx] (6.5*\xr,1.5*\yr) -- (6.5*\xr,-0.5*\yr);
  \end{scope}
  
  \def\xsh{9.9}
  \begin{scope}[shift={(1*\xsh*\xr cm,0*\yr cm)}]
    \mylab{(b)};
    \Ladder{4}{2}
\draw[xpathS] (-0.5*\xr,-0.5*\yr) to (0.5*\xr,-0.5*\yr) to (0.5*\xr,0.5*\yr) to (-0.5*\xr,0.5*\yr) to (-0.5*\xr,2.5*\yr) to (0.5*\xr,2.5*\yr) to (0.5*\xr,1.5*\yr) to (1.5*\xr,1.5*\yr) to (1.5*\xr,2.5*\yr) to (2.5*\xr,2.5*\yr) to
    (2.5*\xr,1.5*\yr) to (3.5*\xr,2*\yr - \onesix*\yr) to (3.5*\xr,2*\yr + \onesix*\yr) to (4*\xr,2*\yr)  to (4.5*\xr,2*\yr+\onesix*\yr) to (4.5*\xr,2*\yr-\onesix*\yr) to (5.5*\xr,1.5*\yr) to (5.5*\xr,2.5*\yr) 
    to (6.5*\xr,2.5*\yr) to (7.5*\xr,2.5*\yr) to (7.5*\xr,1.5*\yr) to (6.5*\xr,1.5*\yr) 
    to (6.5*\xr,0.5*\yr) to (7.5*\xr,0.5*\yr) to (7.5*\xr,-0.5*\yr) to (6.5*\xr,-0.5*\yr); 
  \end{scope}
  
  \begin{scope}[shift={(2*\xsh*\xr cm,0*\yr cm)}]
    \mylab{(c)};
    \Ladder{4}{2}
    \LadderDPaths{4}{2}
  \end{scope}

  \end{tikzpicture}
  
	\caption{The ladder path~$L_{4,2}$ with 
	(a) selected vertex labels and inside/outside edges highlighted in dark green / light blue;
	(b) a traversal;
	(c) a partial/full~cover (light blue / dark green).}
	\label{fig:hc-ladderpath}
\end{figure}

The gadgets that will replace the edges are called \emph{ladder paths}.
For $k_1,k_2 \in \N$, the \emph{ladder path} with length $(k_1,k_2)$ is the graph $L_{k_1,k_2} = (V_{k_1,k_2}, E_{k_1,k_2})$,
where
\[
	V_{k_1,k_2}  = \{u^1_i, \tu^1_i, u^3_i, \tu^3_i \mid i \in \{0,\ldots,k_1-1\} \}
\cup \{u^2_i,\tu^2_i \mid i \in \{0,\ldots,k_2-1\}\}
	\cup \{w_1,\ldots,w_7\}
\]
and the edges are given using the example pictured in \cref{fig:hc-ladderpath}(a).
The vertices~$u^j_i, \tu^j_i$ with~$j\in\{1,2\}$ along with~$w_1$ and~$w_2$ form the \emph{first half} of the ladder path and those with~$j =3$ along with $w_4,\ldots,w_7$ form the \emph{second half}.
The vertex $w_3$ is the transitional vertex.
The vertices~$w_1,\ldots,w_7$ form the \emph{switch}.
The vertices~$u_0^1$ and~$\tu_0^1$ form the \emph{end} of the first half, 
while $u_0^3$ and $\tu_0^3$ form the \emph{end} of the second half. 
The edges highlighted in light blue in \cref{fig:hc-ladderpath}(a) will be called \emph{outside edges}, 
while the edges highlighted in dark green are \emph{inside edges}.
An edge~$\{u^i_j,u^i_{j+1}\}$ or~$\{\tilde{u}^i_j,\tilde{u}^i_{j+1}\}$ is called \emph{even} if~$j$ is even.

A \emph{traversal} of a ladder path is a path that begins in either vertex at one end of the ladder path, terminates in either vertex at the other end, and visits every vertex on the ladder path and no other vertex.
A \emph{partial cover} of a half of a ladder path is a path that begins in either vertex in the end of the half, terminates in the other vertex in that end, and visits every vertex of that half, but no other vertex.
A \emph{full cover} of a half additionally visits the transitional vertex. Examples of a traversal, a partial, and a full cover are pictured in \cref{fig:hc-ladderpath}(b)~and~(c).
The main property of ladder paths is that any Hamiltonian cycle must either contain a traversal, or a full and a half cover of each ladder path.

We will use the following minor technical lemma:

\begin{lemma}
	\label{lemma:hc-1}
	Suppose~$v_0,\ldots,v_{n-1}$ is a Hamiltonian cycle in~$G$. 
	Then, 
	for any~$i,j\in\set{n-2}$, 
	$i\leq j$, 
	the graph~$G-\{v_i,\ldots,v_j\}$
	contains no vertex of degree zero or one, 
	except possibly~$v_{i-1}$ and~$v_{j+1}$.
\end{lemma}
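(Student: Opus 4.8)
The plan is to exploit the two cycle-edges that the Hamiltonian cycle contributes to every vertex. Write the cycle as $C = v_0 v_1 \cdots v_{n-1} v_0$. Each vertex $v_k$ is joined in $G$ to its cyclic predecessor $v_{k-1}$ and its cyclic successor $v_{k+1}$ (indices read modulo $n$), and these edges belong to $G$ since $C$ is a Hamiltonian cycle of $G$. Thus, in $G$ itself every vertex has degree at least two, witnessed by its two cycle-edges. Deleting the contiguous block $\{v_i,\ldots,v_j\}$ can only lower the cycle-degree of a \emph{surviving} vertex $v_k$ (one with $k\notin\{i,\ldots,j\}$) when one of its cyclic neighbors $v_{k-1},v_{k+1}$ happens to lie in the deleted block.

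First I would pin down exactly which surviving vertices lose a cycle-neighbor. If $v_{k-1}$ is deleted, then $k-1\in\{i,\ldots,j\}$, so $k\in\{i+1,\ldots,j+1\}$; combined with $k\notin\{i,\ldots,j\}$ this forces $k=j+1$. Symmetrically, $v_{k+1}$ is deleted only when $k=i-1$. Hence the only surviving vertices that can lose a cycle-edge are $v_{i-1}$, which loses its edge to $v_i$, and $v_{j+1}$, which loses its edge to $v_j$.

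Next I would verify that the hypothesis $i,j\in\set{n-2}$ guarantees that $v_{i-1}$ and $v_{j+1}$ are well-defined surviving vertices: $i\ge 1$ gives $0\le i-1<i$, and $j\le n-2$ gives $j<j+1\le n-1$, so both indices are legitimate and lie outside $\{i,\ldots,j\}$. For every remaining surviving vertex $v_k$, that is, every $k\notin\{i,\ldots,j\}$ with $k\neq i-1$ and $k\neq j+1$, both cycle-edges $\{v_{k-1},v_k\}$ and $\{v_k,v_{k+1}\}$ persist in $G-\{v_i,\ldots,v_j\}$ (neither endpoint was deleted, and since $n\ge 3$ these are two distinct edges). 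Therefore $v_k$ retains degree at least two in $G-\{v_i,\ldots,v_j\}$, which establishes that no surviving vertex other than $v_{i-1}$ or $v_{j+1}$ can have degree zero or one.

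The argument is essentially pure index bookkeeping, and I expect no substantive obstacle: the only care needed is keeping the cyclic (modular) arithmetic straight and confirming that the two candidate exceptional vertices $v_{i-1}$ and $v_{j+1}$ genuinely survive the deletion, both of which follow immediately from the range $i,j\in\set{n-2}$.
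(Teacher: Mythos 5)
Your proof is correct and follows exactly the argument the paper uses: every surviving vertex other than $v_{i-1}$ and $v_{j+1}$ keeps both of its cycle-neighbors, hence degree at least two. The paper compresses this into a single sentence; your version merely makes the index bookkeeping explicit.
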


\begin{proof}
	Let~$i,j\in\set{n-2}$, 
	$i\leq j$.
	Any~$v_k$, 
	$k\in\set[0]{n-1}\setminus\{i-1,\ldots,j+1\}$, 
	has at least two neighbors in~$G-\{v_i,\ldots,v_j\}$.
\end{proof}

\begin{lemma}
	\label{lemma:hc-trav}
	Suppose that the Hamiltonian graph~$G=(V,E)$ contains a ladder, that the only vertices on the ladder path with neighbors outside of the ladder path are on its ends, and that the vertices on the ends each have no more than one neighbor outside of the ladder path.
	Then, any Hamiltonian cycle in~$G$ contains either:
	\begin{itemize}
		\item a traversal of the ladder path or
		\item a partial cover of one of its halves and a full cover of the other half.
	\end{itemize}
\end{lemma}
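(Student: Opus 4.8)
The plan is to analyze how a Hamiltonian cycle $C$ of $G$ meets the vertex set $V(L)$ of the ladder path $L$, using that $L$ is attached to the rest of $G$ only through its four end vertices $u_0^1,\tu_0^1,u_0^3,\tu_0^3$, each with at most one outside neighbor. Every vertex of $L$ that is not an end vertex has all its neighbors inside $L$, so both of its $C$-edges lie inside $L$; each end vertex has $C$-degree two and at most one outside neighbor, hence at least one $C$-edge inside $L$. Thus the edges of $C$ lying inside $L$ form a subgraph of $L$ in which every non-end vertex has degree two and every end vertex has degree one or two. Since $G$ is connected and $L$ is a proper subgraph, $C$ uses at least one edge leaving $L$ (necessarily at an end vertex), so this subgraph contains no cycle and is therefore a disjoint union of paths that together cover $V(L)$ and whose endpoints are exactly the end vertices incident to an outside $C$-edge. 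The number of such endpoints is even, so it is $2$ (one covering path) or $4$ (two covering paths); this is the path-cover incarnation of \cref{lemma:hc-1}, which guarantees that removing the outside portion of $C$ leaves no vertex of $L$ of degree at most one other than these endpoints.

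Next I would pin down which path covers are actually realizable, starting with the rails. The two halves and the top connector are each $2\times k$ grid segments, whose only interior degrees are three (two rail edges and one rung); a standard grid argument — every degree-three vertex omits exactly one incident edge, and the forced edges at the degree-two ends start the propagation — shows that the covering paths, restricted to such a segment, perform a snake-like sweep of the columns that can only reverse direction at the segment's boundary. This forces the paths to run essentially straight along the halves and the connector, so that all remaining freedom is concentrated at the constant-size interface formed by the switch $w_1,\ldots,w_7$ together with the transitional vertex $w_3$, through which the two halves communicate.

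The decisive and hardest step is a finite case analysis of this switch. I would enumerate the ways the $C$-paths can enter and leave the seven switch vertices consistently with the forced rail behavior just established. The switch is engineered so that exactly two interface patterns survive: either a single path passes straight through the switch and through $w_3$, joining the two halves into one path, or the two halves are served by distinct paths that each turn around inside the switch, with $w_3$ absorbed into exactly one of them. The same enumeration is what rules out the \emph{crossed} two-path pairings $\{u_0^1,u_0^3\},\{\tu_0^1,\tu_0^3\}$ and $\{u_0^1,\tu_0^3\},\{\tu_0^1,u_0^3\}$, as well as a single path whose two endpoints lie in the same end: none of these admits a legal passage through the switch that simultaneously covers the far half.

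Finally I would assemble the pieces. The counting step fixes the number of covering paths at one or two; the snake-sweep step fixes the rails; and the switch check fixes the interface. In the one-path case the surviving interface pattern forces endpoints at opposite ends, so $C$ contains a \emph{traversal}; in the two-path case the only surviving pairing is $\{u_0^1,\tu_0^1\}$ and $\{u_0^3,\tu_0^3\}$ with $w_3$ on exactly one of the two paths, yielding a \emph{full cover} of one half and a \emph{partial cover} of the other. I expect the switch analysis to be the main obstacle: it is the only genuinely gadget-specific part, and ruling out the crossed pairings and the same-end single path is precisely what the seven switch vertices are built to do, so the argument must track admissible $C$-configurations through all of them rather than appeal to generic ladder facts.
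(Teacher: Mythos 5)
Your proposal is correct in outline and runs on the same engine as the paper's proof---the degree-forcing fact recorded in \cref{lemma:hc-1}---but it is organized quite differently. You first derive the global structure (the edges of $C$ inside the ladder path form one or two paths whose endpoints are exactly the end vertices using an outside edge), then constrain the rails, and only then enumerate switch configurations. The paper instead seeds a single propagation at the transitional vertex $w_3$: its cycle-successor is one of $w_1,w_2,w_4,w_5$, and after symmetry the only real branch is whether $w_1$ is followed by $w_2$ (which forces a traversal) or by $u^2_{k_2-1}$ (which forces a full cover of the first half and, by the mirrored argument, a partial cover of the second). Seeding at $w_3$ buys two things that your plan must pay for separately: the crossed pairings such as $\{u^1_0,u^3_0\},\{\tu^1_0,\tu^3_0\}$ and the same-end single path simply never arise, because the outward propagation already lands the component containing $w_3$ at the correct ends; and the rail propagation is launched from the switch side, where $w_2$'s three-element neighborhood forces the next edge unconditionally. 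By contrast, your ``snake-like sweep'' is not forced by local degree constraints alone---a rung used together with both rail edges on one side of it is locally consistent and creates a U-turn at an interior column---and is only excluded once you combine it with your global bound of at most two components; likewise, propagation ``from the degree-two ends'' is underdetermined, since an end vertex that is a path endpoint uses only one of its two ladder edges. These points are fixable with the component count you established, and your switch enumeration (which you rightly identify as the crux, though you do not carry it out) would close the argument; the paper's choice of $w_3$ as the starting point is what collapses that enumeration to two sub-cases. One small slip: by the paper's definition the top horizontal segment belongs to the \emph{first half}, so each half is an L-shaped double track rather than a straight $2\times k$ grid plus a separate ``connector.''
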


\begin{proof}
	Consider any Hamiltonian cycle~$C$ in~$G$ and any copy of the ladder path $L_{k_1,k_2}$ in $G$.
	In this Hamiltonian cycle, 
	the vertex $w_3$ is succeeded by $w_1$, $w_2$, $w_4$, or $w_5$.
	By symmetry, 
	we assume \wilog{} that the successor is~$w_1$ or $w_2$.
	We will only deal with~$w_1$ being the successor as the argument for~$w_2$ is very similar.
	The successor of~$w_1$ is in turn either~$w_2$ or~$u_{k_2-1}^2$.
	
	For the first case, 
	assume that $w_1$'s successor is~$w_2$.
	Then $w_2$~must be succeeded by~$\tu^2_{k_2-1}$,
	Since its only other neighbor is~$w_3$.
	By \cref{lemma:hc-1}, the successor of~$\tu^2_{k_2-1}$ must then be~$u^2_{k_2-1}$.
	By iterating this argument, 
	we can show that $C$~visits every vertex of the first half of the ladder path
	before leaving it at~$u^1_0$ or~$\tilde{u}^1_0$.
	
	If we look from~$w_3$ into the other direction,
	the transitional vertex~$w_3$ is preceded in~$C$ by 
	either~$w_4$ or~$w_5$.
By the same argument employed above, 
	we see that $C$~enters the second half of the ladder path at~$u^3_0$ or~$\tilde{u}^3_0$
	and then visits every vertex of the second half before reaching~$w_3$.
	
	Taking the two halves together,
	this shows that $C$ contains a traversal of the ladder path.
	
	Now, 
	for the second case, 
	assume that~$w_1$'s successor in~$C$ is~$u^2_{k_2-1}$.
	Then, 
	by \cref{lemma:hc-1}, 
	$w_3$'s predecessor must be~$w_2$ 
	and $w_2$'s predecessor in turn~$\tu^2_{k_2-1}$.
	From here it becomes clear that this argument can be continued to show that $C$~contains a full cover of the first half of the ladder path.
	By similar reasoning, $C$~must also contain a partial cover of the second half.
\end{proof}

\noindent
For a ladder path and a Hamiltonian cycle in a graph,
we say that
the ladder path is \emph{traversed} 
if the Hamiltonian cycle contains a traversal of the ladder path.
Otherwise, it is \emph{covered}.

Next, we discuss the vertex gadgets.
Recall that the graph~$G$ is assumed to be 3-regular.
We will use four types of vertex gadgets.
Each vertex gadget consist of a grid of size~$2 \times 10$,
with the only difference being the position of their three \emph{outlets},
which are designated vertex pairs to which the ladder paths representing the edges will be connected.
The \emph{$(3,0)$-vertex gadget} and the \emph{$(2,1)$-vertex gadget}
are pictured in \cref{fig:hc-gadgets} with the three outlets highlighted.
The \emph{$(0,3)$-vertex gadget} and the \emph{$(1,2)$-vertex gadget}
are obtained from the former two by mirroring along the horizontal axis.
The value $(i,j)$ will be called the \emph{type} of the gadget.

\begin{figure}
	\centering
	\includegraphics[width=\textwidth]{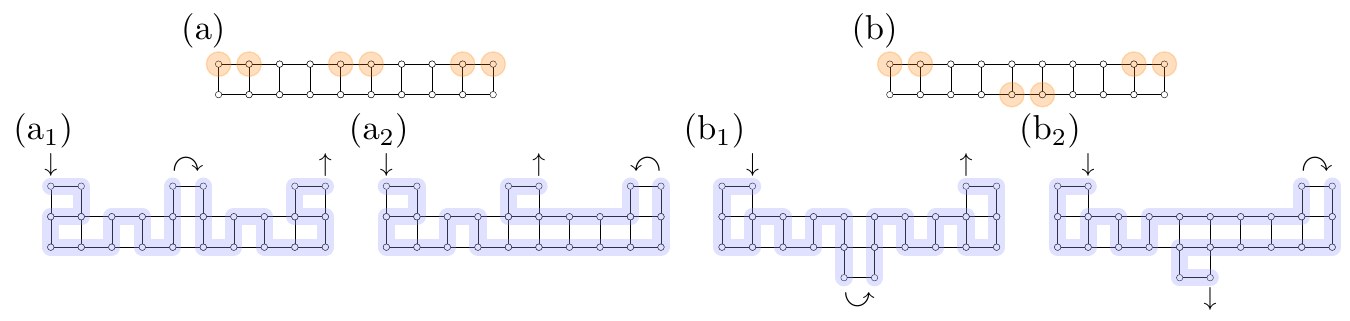}
	\caption{(a) $(3,0)$-vertex gadget and 
	(b) $(2,1)$-vertex gadget,
	with highlighted outlets.
	(a\textsubscript{1}), (a\textsubscript{2}) 
	and 
	(b\textsubscript{1}), (b\textsubscript{2}) 
	show two generic ways a Hamiltonian cycle can pass through each
	vertex gadget.
	}
	\label{fig:hc-gadgets}
\end{figure}

We will refer to the outlets as \emph{top} or \emph{bottom} outlets,
as well as the \emph{left}, \emph{middle} or \emph{right} outlet,
with the obvious meaning.
The left and right outlets are also called the \emph{outer outlets}.

We will now define the \emph{gadget expansion} of a 3-regular graph~$G$,
consisting of a graph~$G'$ and a straight-line embedding~$\emb$ resulting from applying the following steps to~$G$.

\begin{figure}
	\centering
	\includegraphics[width=\textwidth]{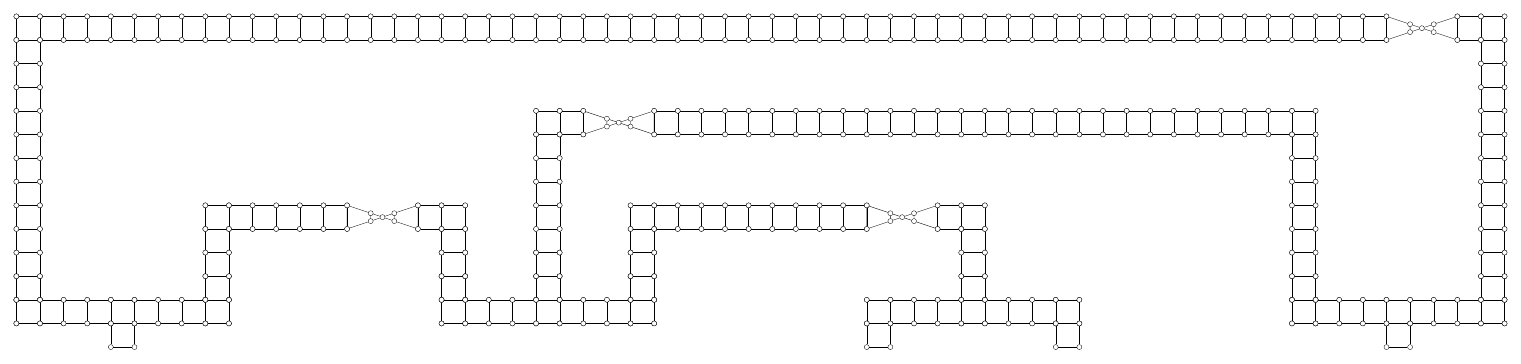}
	\caption{Gadget expansion of the graph pictured in \cref{fig:exgraph}. This graph is not~3-regular, but we may assume that there are further edges in $E_2$.}
	\label{fig:hc-prefinal}
\end{figure}

\begin{construction}[Gadget expansion]
	Start with~$G'$ being the empty graph (see \cref{fig:hc-prefinal} for an illustration).
	
	\textbf{Step 1:} Compute a 2-page book embedding of~$G$ and let~$v_1,\ldots,v_n$
	be the vertices of~$G$ in the order in which they appear on the spine,
	and let~$\{E_1,E_2\}$ denote the partition of~$E$.
	
	\textbf{Step 2:} For every vertex $v_i \in V$ add to~$G'$ a $(\deg_1(v_i),\deg_2(v_i))$-vertex gadget.
	Position the vertices of this gadget at $(18i+x,y)$ with $x \in\{0,\ldots,9\}$ and $y \in \{0,1\}$,
	as in \cref{fig:hc-gadgets}.
	
	\textbf{Step 3:} For every edge~$e=\{v_i,v_j\}$ in~$E_1$,~$i<j$,
	add to~$G'$ a ladder path~$L_{k_1,k_2}$ connected to an outlet in~$v_i$'s vertex gadget and an outlet in~$v_j$'s vertex gadget
	as follows.
	Recall the ordering of the edges incident to a vertex defined in the preliminaries.
	If~$e$~is the $r$-th~edge at~$v_i$ and the $s$-th~edge at~$v_j$,
	then attach said ladder path to the $r$-th top outlet from the left of $v_i$'s vertex gadget 
	and to the $s$-th top outlet from the left of $v_j$'s vertex gadget.
	If only one of these two outlets is an outer outlet, then attach the end of the first half to that outlet
	and the end of the second half to the other (middle) outlet.
	(If the outlets are both outer or both middle outlets, then it does not matter which end of the ladder path is connected to which outlet.)
	This is done by adding two disjoint edges which connect the two vertices forming an end of the ladder path
	to the two vertices forming the corresponding outlet as in \cref{fig:hc-prefinal}.
	
	The value of~$k_1$ is chosen as $k_1 \coloneqq 6h(e)$
	and the value of~$k_2$ as follows.
	Define~$\alpha$ to be~$0, 4$ or $8$,
	if the ladder path is attached to the left, middle or right outlet of $v_i$'s vertex gadget, respectively.
	Define $\beta$ in the same manner for $v_j$.
	Set $k_2\coloneqq 18(j-i) - \alpha + \beta -5$.
	Note that~$k_2 \geq 5$ always holds.
	Finally, we will give the embedding of the ladder path's vertices, using the designations introduced in the definition of a ladder path.
	We only state the case where the first half is attached to~$v_i$,
	for the other case the coordinates are to be mirrored at a suitable vertical axis.
	The positions of the vertices in the first half are (with~$S_i^\alpha\ceq 18i + \alpha$):
	\begin{align*}
		\emb(u_r^1) &\coloneqq (S_i^\alpha, r + 2),
&
		\emb(\tu_r^1) 
		&
		\coloneqq (S_i^\alpha +1, r + 2),
		\quad\,
		r=0,\ldots,k_1-1
		\\
\emb(u_r^2) &\coloneqq (S_i^\alpha+2+r,k_1+1),
&
		\emb(\tu_r^2) 
		&
		\coloneqq (S_i^\alpha+2+r,k_1),
		\quad
		r=0,\ldots,k_2-1
		\\
\emb(w_1) &\coloneqq (S_i^\alpha+k_2+2, k_1 + 2/3),
&
		\emb(w_2) 
		&
		\coloneqq (S_i^\alpha+k_2+2, k_1 + 1/3),
		\\
		\emb(w_3) &\coloneqq (S_i^\alpha+k_2+2.5, k_1 + 1/2),
&
		\emb(w_4) 
		&
		\coloneqq (S_i^\alpha+k_2+3, k_1 + 2/3),
		\\
		\emb(w_5) &\coloneqq (S_i^\alpha+k_2+3, k_1 + 1/3),
&
		\emb(w_6) 
		&
		\coloneqq (S_i^\alpha+k_2+4, k_1 + 1),
		\\
		\emb(w_7) &\coloneqq (S_i^\alpha+k_2+4, k_1).
	\end{align*}
	Compare \cref{fig:hc-ladderpath}.
	The positions of the vertices in the second half are analogous to the first.
	
	Step~3 must be repeated for~$E_2$ 
	using negative $y$-coordinates.
	\cqed
\end{construction}

\noindent
This construction is useful due to the following:

\begin{lemma}
	\label{lemma:hc-gadget-insertion}
	Gadget expansion preserves Hamiltonicity.
\end{lemma}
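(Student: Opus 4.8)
The statement asserts that $G$ is Hamiltonian if and only if its gadget expansion $G'$ is; so the plan is to prove each direction separately, in both cases combining the dichotomy of \cref{lemma:hc-trav} (in any Hamiltonian cycle every ladder path is either \emph{traversed} or \emph{covered}) with the two generic gadget routings shown in \cref{fig:hc-gadgets}.

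For the forward direction I would start from a Hamiltonian cycle $C$ of $G$. Because $G$ is $3$-regular, $C$ uses exactly two of the three edges at each vertex $v_i$, leaving one \emph{unused}. I build $C'$ by inserting, along every edge of $C$, a traversal of the corresponding ladder path, and along every unused edge a cover of both of its halves, assigning the transitional vertex $w_3$ to exactly one of the two halves (making that half a full cover and the other a partial cover, as in \cref{fig:hc-ladderpath}). Each vertex gadget then faces precisely two traversed outlets and one covered outlet, which is exactly the local situation realized by one of the patterns of \cref{fig:hc-gadgets}; routing the gadget by that pattern splices its two incident traversals together and picks up the return path of the covered half. It then remains to check that the result is one cycle through all vertices: every ladder-path vertex and every gadget vertex is visited by construction, and the macroscopic cyclic order in which traversals link the gadgets is exactly the cyclic order of $C$, so $C'$ is connected and hence Hamiltonian.

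For the backward direction I would take a Hamiltonian cycle $C'$ of $G'$, classify each ladder path as traversed or covered via \cref{lemma:hc-trav}, and let $C$ be the set of edges of $G$ whose ladder path is traversed. The crux is to show that at every gadget exactly two of the three incident ladder paths are traversed, which I plan to argue by a parity/connectivity count. Let $S_i$ be $v_i$'s gadget together with, for each covered incident ladder path, the portion of that path covered by a path hanging off this gadget. By the definition of a cover, both connecting edges of a covered half, and all edges used inside it, lie within $S_i$; by contrast a traversal is a path whose two ends are endpoints $u_0^1$ and $\tu_0^3$ of the ladder path while the partner end-vertices $\tu_0^1, u_0^3$ are interior, so a traversal uses exactly one of the two connecting edges at each of its ends. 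Hence the only edges of $C'$ crossing the boundary of $S_i$ are one per traversed incident ladder path, so $C'$ crosses this boundary $t_i$ times, where $t_i$ is the number of traversed incident paths. Since $C'$ is a cycle, $t_i$ is even, and connectivity of $C'$ forbids $t_i=0$ (else $S_i$ would be a proper cyclic component); thus $t_i=2$. Consequently $C$ meets every vertex of $G$ in exactly two edges, so $C$ is a disjoint union of cycles, and the fact that the single cycle $C'$ threads all gadgets together forces $C$ to be one cycle spanning $V(G)$.

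The main obstacle will be making the parity count airtight through careful local bookkeeping at the gadget–ladder-path interface. Concretely, I must pin down that the switch vertices $w_1,w_2$ (and, when chosen, $w_3$) of a covered path lie in the first-half region while $w_4,\dots,w_7$ lie in the second-half region, so that the covered halves split cleanly between $S_i$ and $S_j$ with no $C'$-edge joining them; and I must confirm, using \cref{lemma:hc-trav} together with \cref{lemma:hc-1}, that a traversal genuinely leaves one connecting edge unused at each end while a cover keeps both of its connecting edges internal. The remaining work is the finite verification that the two routings of \cref{fig:hc-gadgets} (and their mirror images for the $(0,3)$- and $(1,2)$-types) exhaust the possible local behaviours, which guarantees both that the forward construction always has a valid gadget routing and that the backward argument admits no other local configuration.
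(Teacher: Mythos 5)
Your proposal is correct and follows essentially the same route as the paper's proof: both directions rest on the traversal/cover dichotomy of \cref{lemma:hc-trav} together with the gadget routings of \cref{fig:hc-gadgets}. The only difference is that you spell out the boundary-crossing parity argument showing that exactly two ladder paths are traversed at each gadget and that the resulting edge set is a single spanning cycle, a step the paper's proof leaves implicit.
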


\begin{proof}
	First, assume that $G'$ contains a Hamiltonian cycle~$C'$.
	By \cref{lemma:hc-trav}, $C'$~contains either a traversal or a partial and a full cover of each ladder path.
	We obtain a Hamiltonian cycle in~$G$ by including each edge~$e$ if and only if $C'$~contains a traversal of the ladder path representing~$e$.
	
	Conversely, assume that $G$ contains a Hamiltonian cycle~$C$.
	We construct a Hamiltonian cycle~$C'$ in $G'$ by including
	a traversal of each ladder path representing an edge contained in~$C$
	as well as a full cover of one half of each ladder path representing an edge not contained in~$C$ and a partial cover of the other half.
	We additionally add several edges in the vertex gadgets to~$C'$.
	Which edges we pick depends on the type of the vertex gadget
	and which of the attached ladder paths are traversed and which are covered.
	There are four main cases
	(note that at each gadget there must be exactly two traversed ladder paths):
	\begin{itemize}
    \setlength{\itemindent}{1em}
		\item[(a\textsubscript{1})] a $(3,0)$-vertex gadget with both outer ladder paths traversed;
		\item[(a\textsubscript{2})] a $(3,0)$-vertex gadget with the middle and one outer ladder path traversed;
		\item[(b\textsubscript{1})] a $(2,1)$-vertex gadget with both outer ladder paths traversed;
		\item[(b\textsubscript{2})] a $(2,1)$-vertex gadget with the middle and one outer ladder path traversed.
	\end{itemize}
Other cases are symmetric to one of these four.
	In each case, 
	the edges included in~$C'$ are highlighted in~\cref{fig:hc-gadgets}.
\end{proof}

\subsection{Face Filling}
\label{ssec:facefill}

In order to turn the gadget expansion of a graph into an RNG and GG, we need to add \emph{buffers}.
The challenge is doing this in a way that preserves Hamiltonicity. We call an edge~$e$~of~$G$ \emph{permissible} if $G$~is not Hamiltonian or if $G$~contains a Hamiltonian cycle that passes through~$e$.
\begin{lemma}
	\label{lemma:hc-subdision}
	Subdividing a permissible edge preserves Hamiltonicity.
	Moreover, 
	both edges resulting from the subdivision are permissible in the resulting graph.
\end{lemma}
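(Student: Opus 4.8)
The plan is to exploit the fact that the vertex introduced by the subdivision has degree exactly two, which forces any Hamiltonian cycle to behave in a completely predictable way at that vertex. Write $e=\{u,v\}$ for the permissible edge and let $w$ be the subdivision vertex, so that in the resulting graph $G'$ the edge $e$ is replaced by the two edges $\{u,w\}$ and $\{w,v\}$, and the only neighbors of $w$ are $u$ and $v$.

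First I would establish that Hamiltonicity is preserved, that is, that $G$ is Hamiltonian if and only if $G'$ is. For the forward direction, suppose $G$ is Hamiltonian. Since $e$ is permissible, $G$ contains a Hamiltonian cycle $C$ passing through $e$; replacing the edge $\{u,v\}$ in $C$ by the path $u,w,v$ yields a Hamiltonian cycle of $G'$. This is the one step where permissibility is genuinely needed: a Hamiltonian cycle of $G$ that avoided $e$ could not be rerouted through $w$, so without the hypothesis that \emph{some} Hamiltonian cycle uses $e$ the implication would fail. For the converse, suppose $G'$ has a Hamiltonian cycle $C'$. Because $w$ has degree two, both edges $\{u,w\}$ and $\{w,v\}$ must lie on $C'$; contracting the path $u,w,v$ back to the single edge $\{u,v\}$ then produces a Hamiltonian cycle of $G$. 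Combining the two directions proves the first assertion.

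It then remains to show that both new edges $\{u,w\}$ and $\{w,v\}$ are permissible in $G'$. If $G'$ is not Hamiltonian, both are permissible immediately by the first clause of the definition. If $G'$ is Hamiltonian, then, again using that $w$ has degree two in $G'$, every Hamiltonian cycle of $G'$ is forced to traverse both $\{u,w\}$ and $\{w,v\}$, so each of these edges lies on some Hamiltonian cycle and is therefore permissible. I note that this second part does not even use the permissibility of $e$; it follows purely from the degree of $w$.

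The argument is elementary, and I do not expect any genuine obstacle. The only point requiring care is to invoke permissibility at exactly the right place, namely the forward implication of the Hamiltonicity equivalence, where one needs a Hamiltonian cycle that actually traverses $e$ in order to route it through the new vertex $w$; everywhere else the degree-two property of $w$ does all the work.
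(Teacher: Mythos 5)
Your proposal is correct and follows essentially the same route as the paper's proof: use permissibility to find a Hamiltonian cycle through $e$ and reroute it via the new degree-two vertex, and use that degree-two vertex to force any Hamiltonian cycle of the subdivided graph through both new edges. The paper's version is merely terser, leaving the degree-two forcing argument implicit.
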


\begin{proof}
	Let~$e$ be a permissible edge of~$G$ and $G'$ be obtained from~$G$ by subdividing~$e$.
	If $G$ is not Hamiltonion, then clearly $G'$ is neither.
	Otherwise, if the edge~$e$ is contained in a Hamiltonian cycle~$C$,
	then~$C$ induces a Hamiltonian cycle~$C'$ in~$G'$ which contains the two edges
	created by the subdivision.
\end{proof}

\begin{figure}[t]
	\centering
	\begin{tikzpicture}
	 \def\xr{1}
	 \def\yr{1}
	 \tikzpreamble{}
	 \def\nsc{0.4}
	 \newcommand{\theEllipse}[3]{
        \draw[fill=white,opacity=0.7,thick] (0,0) ellipse (#1*1 and #2*1);
        \draw (0,0) ellipse (#1*1 and #2*1);
        \foreach\x in {0,...,11}{
            \pgfmathsetmacro\xx{int(\x+1)}
            \ifnum\x<6\pgfmathsetmacro\angle{90}\else\pgfmathsetmacro\angle{-90}\fi
            \node (#3a\x) at ($(0,0)+(360/13*\xx:#1*1 and #2*1)$)[xnode,label=\angle:{$v_{\xx}$}]{};
        }
        \foreach\x in {0,...,30}{\node (#3c\x) at ($(0,0)+(360/30*\x:#1*1*1.25 and #2*1*1.75)$)[inner sep=0pt]{};}
        \foreach\x/\y in {2/6,2/8,5/15,5/14,8/20,8/22}{
            \draw[xedgex] (#3a\x) to (#3c\y);
        }
        }
     \theEllipse{3}{0.6}{A}
     \node (u1) at (3.5*\xr,0.5*\yr)[xnode,label=90:{$u_1$}]{};
     \node (u2) at (3.5*\xr,-0.5*\yr)[xnode,label=-90:{$u_2$}]{};
     \draw[xedge] (Aa0) to (u1) to (u2) to (Aa11);
     \node (u11) at ($(u1)+(0.5*\xr,0*\yr)$)[inner sep=0pt]{};
     \node (u12) at ($(u1)+(0.5*\xr,0.5*\yr)$)[inner sep=0pt]{}; 
     \draw[xedgex] (u11) to (u1) to (u12);
     \node (u21) at ($(u2)+(0.5*\xr,0*\yr)$)[inner sep=0pt]{};
     \node (u22) at ($(u2)+(0.5*\xr,-0.5*\yr)$)[inner sep=0pt]{}; 
     \draw[xedgex] (u21) to (u2) to (u22);
	\end{tikzpicture}

\caption{An example of a permissible cycle addition with a cycle of length~12.}
	\label{fig:hc-cycle}
\end{figure}

\noindent
Our main tool for adding buffers to a graph is called \emph{permissible cycle addition}.
Let $\{u_1, u_2\}$ be a permissible edge of~$G=(V,E)$.
We say that $G'=(V',E')$ is obtained from $G$ by \emph{attaching a permissible cycle to $\{u_1, u_2\}$} if (see~\cref{fig:hc-cycle} for an illustration)
\begin{itemize}
	\item $V' = V \uplus \{v_1,\ldots,v_k\}$, $k\geq 4$, and $v_1,\ldots,v_k$ induce a cycle in that order, that is, $\{v_i,v_j\} \in E'$ if and only if $|i-j| = 1$ or $\{i,j\} = \{1,k\}$;
	\item $E' \cap \binom{V}{2} = E$;
	\item for all $i \in \{2,\ldots,k-1\}$, if $\deg_{G'}(v_i) \geq 3$, then $\deg_{G'}(v_{i-1}) = \deg_{G'}(v_{i+1}) = 2$; and
	\item $\deg_{G'}(v_1) = \deg_{G'} (v_k) = 3$, and $\{\{v_1, u_1\}, \{v_k, u_2\}\} \subseteq E'$.
\end{itemize}
\noindent \cref{fig:hc-cycle} pictures an example of such a cycle addition.
This modification is useful due to:

\begin{lemma}
	\label{lemma:hc-cycle-addition}
	Permissible cycle addition preserves Hamiltonicity. 
	Moreover, 
	if $v_1,\ldots,v_k$ is the added cycle,
	then the edges~$\{v_i,v_{i+1}\}$, $1 \leq i < k$, are all permissible in the resulting graph.
\end{lemma}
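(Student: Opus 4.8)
The plan is to prove the two assertions separately, the bulk of the work being a case analysis of how a Hamiltonian cycle of $G'$ must traverse the newly attached cycle $v_1,\ldots,v_k$. I would first show that Hamiltonicity is preserved, i.e.\ $G$ is Hamiltonian iff $G'$ is. The forward direction is immediate: if $G$ is Hamiltonian then, since $\{u_1,u_2\}$ is permissible, there is a Hamiltonian cycle $C$ passing through $\{u_1,u_2\}$; replacing the edge $\{u_1,u_2\}$ by the path $u_1, v_1, v_2, \ldots, v_k, u_2$ (using the $k-1$ consecutive cycle edges $\{v_i,v_{i+1}\}$ together with the two attachment edges $\{v_1,u_1\}$ and $\{v_k,u_2\}$) yields a Hamiltonian cycle of $G'$. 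The substance lies in the converse.

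For the converse, I would take a Hamiltonian cycle $C'$ of $G'$ and argue that every edge $\{v_i,v_{i+1}\}$, $1\le i<k$, lies on $C'$. The key is the degree condition: applied at $i=2$ and $i=k-1$, where $v_1$ and $v_k$ have degree three, it forces $\deg_{G'}(v_2)=\deg_{G'}(v_{k-1})=2$, and in general it guarantees that each internal vertex $v_i$ ($2\le i\le k-1$) has both incident cycle edges on $C'$: a degree-two internal vertex must use both of its only two edges, while a degree-three internal vertex has both its cycle-neighbours of degree two, each of which forces the adjoining edge. Hence the entire path $v_1,v_2,\ldots,v_k$ is contained in $C'$, and in particular the third (outgoing) edge of each internal degree-three vertex is unused.

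It then remains to settle the closing edge $\{v_1,v_k\}$. If $\{v_1,v_k\}\in C'$, then $v_1$ and $v_k$ each already have two incident $C'$-edges, so $\{v_1,u_1\}$ and $\{v_k,u_2\}$ are unused and the vertices $v_1,\ldots,v_k$ form a closed $k$-cycle inside $C'$; as a Hamiltonian cycle is a single connected cycle and $V\neq\emptyset$ (indeed $u_1\in V$), this is impossible. Thus $\{v_1,v_k\}\notin C'$, which forces $\{v_1,u_1\},\{v_k,u_2\}\in C'$, so $C'$ contains the subpath $u_1,v_1,\ldots,v_k,u_2$; contracting this subpath back to the edge $\{u_1,u_2\}$ (which lies in $E$ since $E'\cap\binom{V}{2}=E$) turns $C'$ into a Hamiltonian cycle of $G$. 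For the ``moreover'' part I would then observe that if $G'$ is not Hamiltonian every edge is permissible by definition, and otherwise $G$ is Hamiltonian by the equivalence just proved, so the substitution from the forward direction produces \emph{one} Hamiltonian cycle of $G'$ containing \emph{all} the edges $\{v_i,v_{i+1}\}$ simultaneously, whence each is permissible.

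The main obstacle is the edge-forcing step of the converse: one must use the ``no two consecutive internal vertices of degree $\geq 3$'' condition carefully to rule out that $C'$ escapes the attached cycle through an internal degree-three vertex, and one must separately exclude the degenerate closed-cycle case before the contraction to $\{u_1,u_2\}$ becomes legitimate.
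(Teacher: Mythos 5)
Your proof is correct and follows essentially the same route as the paper's: the forward direction inserts the path $u_1,v_1,\dots,v_k,u_2$ along a Hamiltonian cycle through the permissible edge $\{u_1,u_2\}$, and the converse shows that the degree conditions force any Hamiltonian cycle of $G'$ to traverse $u_1,v_1,\dots,v_k,u_2$, after which one contracts back to $\{u_1,u_2\}$. The only cosmetic differences are that the paper delegates the ``cannot escape through an internal vertex'' step to its Lemma~6 rather than arguing the degree-two forcing directly, and it reads off the permissibility of the edges $\{v_i,v_{i+1}\}$ from the converse direction rather than from the cycle constructed in the forward direction; both variants are sound.
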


\begin{proof}
	Suppose that $G'=(V',E')$ is obtained by permissible cycle addition from $G=(V,E)$ with $v_1,\ldots,v_k$ being the cycle added and $\{u_1,u_2\}$ the permissible edge in $G$ to which it is attached.
	
	First, suppose that $G'$ contains a Hamiltonian cycle.
	Due to~\cref{lemma:hc-1},
	this cycle cannot pass through any edge~$\{w,v_i\}$, $w\in V\setminus\{u_1,u_2\}$.
	Hence, 
	it must pass through $u_1,v_1,\ldots,v_k,u_2$ in that order or in reversed order.
	Thus, 
	all edges on the cycle except for possibly~$\{v_1,v_k\}$ are permissible.
	Removing~$v_1,\ldots,v_k$ from this cycle and adding the edge~$\{u_1,u_2\}$ 
	yields a Hamiltonian cycle in~$G$.
	
	Conversely, 
	suppose that $G$~contains a Hamiltonian cycle.
	Because edge~$\{u_1,u_2\}$ is permissible, 
	$G$~contains a Hamiltonian cycle that passes through~$\{u_1,u_2\}$.
	By inserting $v_1,\ldots,v_k$ between~$u_1$ and~$u_2$ we obtain a Hamiltonian cycle in~$G'$.
\end{proof}

\noindent
In order to be able to apply \emph{permissible cycle addition} to the gadget expansion~$G'$ of a graph~$G$,
we need to know permissible edges of~$G'$.
For this, we have the following lemma.

\begin{lemma}
	\label{lemma:hc-permissible-ladder}
	Let~$G$ be a 3-regular graph, $G'$~the gadget expansion of~$G$,
	and~$L$~any ladder path of~$G'$ whose first half is attached to an outer outlet (of a vertex gadget).
	Then,~$L$~contains two even inside and two even outside edges, all of which are permissible.
	Furthermore, these edges can be determined in linear time.
\end{lemma}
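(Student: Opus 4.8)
The plan is to reduce to the Hamiltonian case and then certify each edge by exhibiting a Hamiltonian cycle of $G'$ through it. If $G'$ is not Hamiltonian, then every edge is permissible by definition; since $k_1=6h(e)\geq 6$, each rail of the first half carries at least three even rail edges, so two even inside and two even outside edges exist and can be listed from $k_1,k_2$ purely by their indices, which settles the linear-time claim in general. Henceforth I assume $G'$ is Hamiltonian; by \cref{lemma:hc-gadget-insertion} so is $G$, and I fix a Hamiltonian cycle $C$ of $G$ together with the Hamiltonian cycle $C'$ of $G'$ it induces. I would then pin down, once and for all, two even edges on the inside rail and two even edges on the outside rail of the first half of $L$ (the half attached to the outer outlet), chosen near the outer end so that their indices are absolute constants. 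It remains to show that each of these four edges lies on some Hamiltonian cycle of $G'$.

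By \cref{lemma:hc-trav}, $C'$ restricts on $L$ either to a traversal or to a partial cover of one half together with a full cover of the other. Crucially, switching between these two global patterns would change whether the two gadget-ends of $L$ are joined through $L$, and hence the connectivity of the remainder of the cycle, so I stay within whichever pattern $C'$ realizes. The point is that each pattern already has enough internal freedom to reach both rails. A traversal of the first half is a snake that uses the even edges of one rail and the odd edges of the other, and its two orientations realize the even inside and the even outside edges respectively. For a cover, a full cover of the first half traces its inside rail whereas a partial cover traces its outside rail (see \cref{fig:hc-ladderpath}(b),(c)); which of the two halves is fully rather than partially covered is governed by which half carries the transitional vertex~$w_3$, and this is a choice we are free to make. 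Hence, in either pattern, one variant hands us the two even inside edges and another variant the two even outside edges.

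The main obstacle I anticipate is that the two variants within a given pattern use different end-vertices of~$L$, and therefore different connecting edges to the incident vertex gadgets; to turn a variant into a genuine Hamiltonian cycle of~$G'$ one must reroute these connections inside the gadgets while leaving every vertex outside~$L$ and its neighbours untouched. I expect this to reduce to a finite case analysis over the gadget-passing patterns of \cref{fig:hc-gadgets} and the traversal/cover patterns of \cref{fig:hc-ladderpath}. Because the first half is attached to an \emph{outer} outlet, the adjacent gadget admits both passing patterns at that outlet, which is exactly the room needed to absorb the interface change at that end, and the corresponding freedom at the other (middle) outlet must be checked along the same lines. Throughout, \cref{lemma:hc-1} guarantees that the degree-two vertices flanking~$L$ keep the rest of~$C'$ forced and intact, so that the local modification does not propagate.

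Once the interchange is verified in each case, each of the four named edges is certified permissible by its corresponding modified Hamiltonian cycle, and the explicit index formulas for these edges make their computation linear-time, completing the proof. The delicate point—and the step I would spend the most care on—is precisely the compatibility of the within-pattern swap with a rerouting inside the outer-outlet gadget; everything else is routine given \cref{lemma:hc-trav,lemma:hc-gadget-insertion}.
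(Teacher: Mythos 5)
Your treatment of the degenerate case ($G'$ not Hamiltonian) and of the linear-time claim is fine, but the core of your argument has a genuine gap, and it stems from setting yourself a harder task than the lemma requires. You insist on fixing the four edges ``once and for all'' with absolute indices, and then, to certify each of them, you propose to \emph{modify} the canonical cycle $C'$ -- switching which end-vertex of a rail the traversal starts at, or which half receives the full cover -- and to ``reroute these connections inside the gadgets.'' That rerouting is exactly the step you never carry out: changing the end-vertex of $L$ used by $C'$ changes which connecting edge into the vertex gadget is used, and whether the gadget-internal routing can be repaired \emph{while the rest of $C'$, including the other two ladder paths at that gadget, stays a Hamiltonian cycle} is precisely the content of the claim. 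You assert that the outer outlet ``admits both passing patterns,'' but this is not proved, and the analogous freedom at the other end (a middle outlet) is explicitly flagged by you as unchecked. Since the entire conclusion rests on this deferred case analysis, the proof is incomplete as written. (A smaller point: which half of a covered ladder path is fully rather than partially covered is indeed a choice made when building $C'$, but it is not ``governed by which half carries $w_3$'' -- $w_3$ is a fixed transitional vertex, not something you assign.)

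The paper's proof avoids all of this by not requiring any prespecified, gadget-independent set of edges and by never modifying $C'$. It lists four candidate even outside edges, $\{u^1_0,u^1_1\}$, $\{u^1_2,u^1_3\}$, $\{u^2_0,u^2_1\}$, $\{u^2_2,u^2_3\}$ (available since $k_1\geq 6$, $k_2\geq 4$), and observes that the \emph{canonical} cycle $C'$ produced by \cref{lemma:hc-gadget-insertion} from an arbitrary Hamiltonian cycle $C$ of $G$ already contains at least two of them: if $L$ is covered, $C'$ contains every inside and outside edge of $L$; if $L$ is traversed, the case analysis of \cref{fig:hc-gadgets} shows $C'$ contains either the first pair or the second pair, and which pair depends only on the vertex gadget type, which is known at construction time. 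Hence two even outside (and analogously two even inside) edges are permissible and identifiable in linear time, with no rerouting and no interface compatibility to verify. If you want to salvage your approach, the cleanest fix is to drop the ``absolute constants'' requirement and instead let the identified pair depend on the gadget type, at which point the modification machinery becomes unnecessary.
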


\begin{proof}
	We give the proof for the outside edges, for the inside edges everything works analogously.
	Write $L = L_{k_1,k_2}$.
	By construction of the gadget expansion, we have $k_1 \geq 6$ and $k_2 \geq 4$.
	Thus, using the vertex names from the definition of a ladder path,
	the edges $\{u^1_0, u^2_1\}, \{u^1_2, u^1_3\}, \{u^2_0, u^2_1\}$, and~$\{u^2_2, u^2_3\}$
	are even outside edges.
	
	It remains to show that at least two of these are permissible.
	So assume that $G'$ (and thus~$G$) is Hamiltonian (otherwise we are done)
	and let~$C$ be a Hamiltonian cycle of~$G$ and~$C'$~the corresponding Hamiltonian cycle of~$G'$ as given by \cref{lemma:hc-gadget-insertion}.
	By \cref{lemma:hc-trav}, $L$~is either covered or traversed by~$C'$.
	If it is covered, then~$C'$~contains every inside and outside edge of~$L$.
	If it is traversed, then the case analysis in \cref{fig:hc-gadgets} reveals that $C'$~contains either
	$\{u^1_0, u^2_1\}$ or $\{\tu^1_0, \tu^1_1\}$
	(which one depends on the type of the vertex gadget).
	In former case, $C'$~also contains $\{u^1_2, u^1_3\}$.
	In the latter case, $C'$ contains $\{u^2_0, u^2_1\}$ and $\{u^2_2, u^2_3\}$
	(cf.\ \cref{fig:hc-ladderpath}.(b)).
This proves the claim.
	Note that determining the two permissible edges does not require knowledge of~$C'$
	but only of the vertex gadget type.
\end{proof}

\noindent
We are set to give the construction in our polynomial-time many-one reduction from \hcAcr{} on $3$-regular planar graphs to \hcAcr{} on RNGs or GGs.

\begin{construction}
	\label{constr:hc}
	Let $G=(V,E)$ be a $3$-regular planar graph.
	We will construct an RNG and GG~$G'=(V',E')$ that is Hamiltonian if and only if $G'$~is.
	We will give the embedding of the vertices directly in the reduction.
	
	We start with ($G'$, $\emb$), the gadget expansion of~$G$.
We add one buffer for every~$(x, y)\in\Z^2$ where $14\leq x \leq 16n+10$ and $-6h_2(G) -2 \leq y \leq 6h_1(G)+2$
	are both even,
	except when~$G'$~already contains a vertex~$v$ 
	with~$\emb(v)\in \{(x, y),(x+1, y)\}$.
	The buffer then consists of a 4-cycle whose vertices are embedded at $(x, y), (x+1, y), (x+1, y+1)$, and $(x, y+1)$
	and whose edges are then further subdivided.
	We call it the $(x,y)$-buffer and refer to the four (subdivided) edges as its \emph{sides}.
	For each side, if $G'$ previously already contained a (possibly subdivided) edge running parallel to that side at distance~1,
	then we say that this side \emph{adjoins} that (possibly subdivided) edge.
	For example, the side from $(x, y+1)$ to $(x+1, y+1)$ would adjoin an existing edge from $(x, y+2)$ to $(x+1, y+2)$.
	A side may also adjoin an edge in a switch to which it is not parallel.
	For example, the side from $(x,y+1)$ to $(x+1,y+1)$ could adjoin an existing edge from $(x,y+2)$ to $(x+1,y+2+1/3)$.
	Finally, exactly one of the four sides will be designated as the \emph{docking side}.
	The docking side must adjoin either a side of a previously added buffer or a permissible edge of the gadget expansion.
	
	When adding a buffer, if its sides adjoin existing sides or edges,
	then we also add edges connecting the buffer to other vertices and possibly also subdivide the adjoined edges or sides.
	There are four cases, depending on whether the newly added side is docking or non-docking
	and whether it adjoins a side of another buffer or an edge of the gadget expansion.
	These four cases are illustrated in \cref{fig:hc-cycle-addition-constr}.
	In particular,
	\begin{itemize}
		\item the docking side of the added buffer is always subdivided four times and has four edges connecting it to the side or edge it adjoins (see \cref{fig:hc-cycle-addition-constr}(a) and (b));
		\item a non-docking side adjoining an edge of the gadget expansion is subdivided once and has two connecting edges (see \cref{fig:hc-cycle-addition-constr}(c));
		\item a non-docking side adjoining another buffer's side is subdivided thrice and has three connecting edges (see \cref{fig:hc-cycle-addition-constr}(d));
		\item a (non-docking) side which does not adjoin anything, then it is only subdivided once.
	\end{itemize}
\begin{figure}[t]
		\centering
		\begin{tikzpicture}
		 \def\xr{1.4}
		 \def\yr{1.5}
		 \tikzpreamble{}
		 \def\nsc{0.35}
		 \newcommand{\myclip}[1]{\node at (-0.5*\xr,1.4*\yr)[]{(#1)};\clip (-0.4*\xr,-0.1*\yr) rectangle (1.2*\xr,1.4*\yr);}
        
        \newcommand{\dockgrid}{
            \Grid{A}{1}{2}{0}{-1}{}{-,xedge,color=blue}{1}{1}
\draw[xedge] (A02) to (A12);
            \node at (A01)[xnode,fill=blue]{};
            \node at (A11)[xnode,fill=blue]{};
            \node (c) at (0.5*\xr,0*\yr)[xnode,fill=blue]{};
        }
        
        \def\xsh{2.35}
        \def\fsize{\tiny}
        \begin{scope}
         \myclip{a}
         \begin{scope}[rotate around={-90:(0.5*\xr,0.5*\yr)}]
            \dockgrid{}
            \node (b1) at (1/6*\xr,0*\yr)[xnode,fill=blue]{};
            \node (b2) at (2/6*\xr,0*\yr)[xnode,fill=blue]{};
            \node (b3) at (3/6*\xr,0*\yr)[xnode,fill=blue]{};
            \node (b4) at (3/4*\xr,0*\yr)[xnode,fill=blue]{};
            \node (t1) at (2/6*\xr,1*\yr)[xnode,fill=blue]{};
            \node (t2) at (3/6*\xr,1*\yr)[xnode,fill=blue]{};
            \draw[xedge,color=blue] (b2) to (t1);
            \draw[xedge,color=blue] (b3) to (t2);
         \end{scope}
        \end{scope}
        
        \node at (A01)[anchor=south,font=\fsize]{$(x+1,y+1)$};
        \node at (A11)[anchor=north,font=\fsize]{$(x+1,y)$};
        \node at (b4)[anchor=east,font=\fsize]{$(x+1,y+1/4)$};
        \node at (b1)[anchor=east,font=\fsize]{$(x+1,y+5/6)$};
        \node at (b3)[anchor=east,font=\fsize]{$(x+1,y+1/2)$};
        \node at (t1)[anchor=south,font=\fsize]{$(x+2,y+4/6)$~\phantom{A}};

        \begin{scope}[shift={(\xsh*\xr,0*\yr)}]
         \myclip{b}
         \begin{scope}[rotate around={-90:(0.5*\xr,0.5*\yr)}]
         \dockgrid{}
         \node (b1) at (1/6*\xr,0*\yr)[xnode,fill=blue]{};
         \node (b2) at (2/6*\xr,0*\yr)[xnode,fill=blue]{};
         \node (b3) at (3/6*\xr,0*\yr)[xnode,fill=blue]{};
         \node (b4) at (3/4*\xr,0*\yr)[xnode,fill=blue]{};
         \node (t1) at (2/6*\xr,1*\yr)[xnode,fill=blue]{};
         \node (t2) at (3/6*\xr,1*\yr)[xnode]{};
         \draw[xedge,color=blue] (b2) to (t1);
         \draw[xedge,color=blue] (b3) to (t2);
        \end{scope}
        \end{scope}

        \begin{scope}[shift={(2*\xsh*\xr,0*\yr)}]
         \myclip{c}
         \begin{scope}[rotate around={-90:(0.5*\xr,0.5*\yr)}]
         \dockgrid{}
\node (b3) at (3/6*\xr,0*\yr)[xnode,fill=blue]{};
\end{scope}
        \end{scope}
        
        \begin{scope}[shift={(3*\xsh*\xr+0.3*\xr,0*\yr)}]
         \myclip{d}
         \begin{scope}[rotate around={-90:(0.5*\xr,0.5*\yr)}]
         \dockgrid{}
         \node (b1) at (1/4*\xr,0*\yr)[xnode,fill=blue]{};
\node (b3) at (3/6*\xr,0*\yr)[xnode,fill=blue]{};
         \node (b4) at (3/4*\xr,0*\yr)[xnode,fill=blue]{};
\node (t2) at (3/6*\xr,1*\yr)[xnode]{};
\draw[xedge,color=blue] (b3) to (t2);
        \end{scope}
        \end{scope}
        \node at (b1)[anchor=east,font=\fsize]{$(x+1,y+3/4)$};
\end{tikzpicture}

		\caption{Construction of the sides of a cycle (left) with
		(a) a docking side adjoined to an edge,
		(b) a docking side adjoined to a previously existing non-docking side,
		(c) a non-docking side adjoined to an edge, and
		(d) a non-docking side adjoined to an existing side.
In each picture,
an edge or previously existing side 
		to adjoin is 
		on the right. 
		The vertices and edges that result from the addition of the cycle are marked in blue while previously existing vertices and edges are in black.}
		\label{fig:hc-cycle-addition-constr}
	\end{figure}
\cref{fig:hc-cycle-addition-constr} explains the positions of the sides' subdivisions by way of an example for the right-side case.
 	For the case of the other three sides, 
 	the coordinates are obtained by rotating around $(x+1/2, y+1/2)$.
	
	The docking side must adjoin another buffer's side or a permissible edge.
	We will now discuss a strategy to achieve this.
	First, observe that every position~$(x, y)$ at which we intend to add a buffer lies in a face of $(G', \emb)$ that borders more than four vertices (possibly the unbounded face).
	We call these faces the \emph{regions}
	and will add the buffers region-by-region.
Next, note that once a buffer has been added to a region,
	then any subsequent buffer in that region can have its docking side adjoined to a (non-docking) side of another buffer added before it,
	as all edges on non-docking sides of a buffer are permissible by \cref{lemma:hc-cycle-addition}.
	(The gadget expansion is ``surrounded'' with buffers, so this works for the unbounded face.)
	Thus, it suffices to show how to add the first buffer for each region.
	
	To this end, 
	we must examine the structure of the gadget expansion.
	Let~$R$~be any region.
	Clearly, $R$~borders some vertex gadget,
	and thus also a ladder path~$L$ attached to an outside outlet of that vertex gadget.
More precisely, $R$~borders either every inside or every outside edge of~$L$.
	By construction of the gadget expansion, 
	the first half of~$L$ is attached to an outside outlet of some vertex gadget.
	Thus, we can find two permissible edges of~$L$ that border~$R$ by \cref{lemma:hc-permissible-ladder}.
	Because we have two permissible edges to choose from,
	we can ensure that we never adjoin the docking sides of two buffers to two ``parallel'' edges of~$L$
	(i.e., to $\{u^j_i, u^j_{i+1}\}$ and $\{\tu^j_i, \tu^j_{i+1}\}$),
	as every ladder path only borders two regions.
	\cqed
\end{construction}

\noindent
We are now prepared to prove the main result of this section.

\begin{proof}[Proof of \cref{thm:hcNP}]
	The proof builds on \cref{constr:hc}.
	By \cref{lemma:hc-gadget-insertion}, gadget expansion preserves Hamiltonicity.
	Each addition of a cycle involves subdividing a permissible edge
	(preserving Hamiltonicity by \cref{lemma:hc-subdision}),
	and then adding a permissible cycle (preserving Hamiltonicity by \cref{lemma:hc-cycle-addition}).
	It follows that the construction preserves Hamiltonicity.

	The construction already describes an embedding of the resulting graph~$G'$.
	So it only remains  to show that this embedding induces $G'$ as its RNG and GG.
	Let~$\mathcal{A}\ceq \{(x,y) \in \Z^2 \mid 14 \leq x \leq 16n + 11,\, -6h_2(G) -2 \leq y \leq 6 h_1(G)+3\}$.
	Note that for most $(x,y) \in \mathcal{A}$
there is a vertex embedded at $(x,y)$.
	The only exceptions are positions surrounding switches.
	For any $(x,y)\in\mathcal{A}$
we will call the vertices embedded at $(x,y)$, $(x+1,y)$, $(x,y+1)$, and $(x+1,y+1)$ along with any vertices embedded on the line segments between those four points a \emph{grid face}.
	There are three classes of grid faces: grid faces within ladder paths or vertex gadgets,
	buffers, 
	and grid faces between the aforementioned ones.
	
	Within ladder paths, only two grid faces, which are shown in \cref{fig:hc-emb} ($A$ and $B$), can occur.
	For these, it is easy to see that all nonadjacent vertex pairs have a GG blocker.
	Within buffers, more variations are possible (e.g. $C$ and $D$ in \cref{fig:hc-emb}).
	However, the vertices in the corners and in the center of each side serve as GG blockers for all nonadjacent pairs.
	All grid faces between cycles or between cycles and vertex gadgets are pictured in \cref{fig:hc-cycle-addition-constr}.
	Again, it is easy to see that all pairs of non-adjacent vertices have GG blockers
	and no pair of adjacent vertices has an RNG blocker.
	
	\begin{figure}[t]
		\centering
		\begin{tikzpicture}
            \def\xr{0.85}
            \def\yr{0.85}
            \tikzpreamble{}
            \def\fpath{figs/hc-cycle.pdf}
            \def\ffpath{figs/hc-switch.pdf}
\def\onesix{1/6}

            \newcommand{\Ladder}[2]{\pgfmathsetmacro\they{int(#1 -2)}
                \pgfmathsetmacro\thex{int(#2 +5)}
                \pgfmathsetmacro\thexx{int(#2 +4)}
                
                \foreach\y in {0,...,\they}{
                \node at (0*\xr,\y*\yr)[]{\includegraphics[scale=\xr]{\fpath}};
                \node at (\thex*\xr,\y*\yr)[]{\includegraphics[scale=\xr]{\fpath}};
                }
                
                \foreach\x in {0,...,#2,\thexx,\thex}{
                \node at (\x*\xr,\they*\yr)[]{\includegraphics[scale=\xr]{\fpath}};
                }
\pgfmathsetmacro\thexxx{#2 + 2}
                \node at (\thexxx*\xr,\they*\yr)[]{\includegraphics[scale=\xr]{\ffpath}};
            }
            \newcommand{\myclip}{\clip (3*\xr,1.2*\yr) rectangle (12*\xr,5*\yr);}
            
            \def\nsc{0.33}
            \newcommand{\grabber}[4]{
                \begin{scope}[shift={(#2*\xr,#3*\yr)}]
                    \node (#1x1) at (0*\xr,1*\yr)[xnode,color=#4]{};
                    \node (#1x2) at (1*\xr,1*\yr)[xnode,color=#4]{};
                    \node (#1x3) at (1*\xr,0*\yr)[xnode,color=#4]{};
                    \node (#1x4) at (0*\xr,0*\yr)[xnode,color=#4]{};
                    
                    \draw[xedge,color=#4] (#1x1) to (#1x2) to (#1x3) to (#1x4) to (#1x1);
                    
                    \node (#1x11) at (0.5*\xr,1*\yr)[xnode,color=#4]{};
                    \node (#1x21) at (1*\xr,1*\yr-1/6*\yr)[xnode,color=#4]{};
                    \node (#1x22) at (1*\xr,1*\yr-2/6*\yr)[xnode,color=#4]{};
                    \node (#1x22r) at (2*\xr,1*\yr-2/6*\yr)[xnode,color=#4]{};
                    \node (#1x23) at (1*\xr,1*\yr-3/6*\yr)[xnode,color=#4]{};
                    \node (#1x23r) at (2*\xr,1*\yr-3/6*\yr)[xnode,color=#4]{};
                    \node (#1x24) at (1*\xr,1*\yr-3/4*\yr)[xnode,color=#4]{};
                    \node (#1x31) at (0.5*\xr,0*\yr)[xnode,color=#4]{};
                    \node (#1x41) at (0*\xr,0.5*\yr)[xnode,color=#4]{};
                    
                    \draw[xedge,color=#4] (#1x22) to (#1x22r);
                    \draw[xedge,color=#4] (#1x23) to (#1x23r);
                \end{scope}
            }
            
            \def\argh{0.1}
            \begin{scope}
             \myclip{}
             \Ladder{6}{6}
\grabber{A}{8.5}{1.5}{blue};
             \grabber{B}{6.5}{1.5}{green!66!black};
             \draw[green!66!black,xedge] (Bx1) to ($(Bx1)+(0,1*\yr-\argh*\yr)$);
             \draw[green!66!black,xedge] (Bx2) to ($(Bx2)+(0,1*\yr+0.5*\yr-1/6*\yr-\argh*\yr)$);
             \draw[green!66!black,xedge] (Bx2) to (Ax1);
             \draw[green!66!black,xedge] (Bx3) to (Ax4);
             \draw[blue,xedge] (Ax1) to ($(Ax1)+(0,1*\yr+0.5*\yr-1/6*\yr-\argh*\yr)$);
             \draw[blue,xedge] (Ax2) to ($(Ax2)+(0,1*\yr-\argh*\yr)$);
             \draw[blue,xedge] (Ax2) to ($(Ax2)+(1*\xr-\argh*\xr,0)$);
             \draw[blue,xedge] (Ax3) to ($(Ax3)+(1*\xr-\argh*\xr,0)$);
             \node at (8.5*\xr,2*\yr)[xnode,color=blue]{};
             
             \node at (8.5*\xr,4*\yr+1/6*\yr)[anchor=south,color=red]{$a$};
             \node at (8.5*\xr,4*\yr-1/6*\yr)[anchor=north west,color=red]{$b$};
             \node at (8.5*\xr,2.5*\yr)[anchor=south east,color=red]{$c$};
             \node at (9.5*\xr,3.5*\yr)[anchor=north east,color=red]{$d$};
             \node at (8*\xr,4*\yr)[anchor=south,color=red]{$e$};
             
             \node at (11*\xr,3*\yr)[color=red]{$A$};
             \node at (11*\xr,2*\yr)[color=red]{$B$};
             \node at (9*\xr,2*\yr)[color=red]{$C$};
             \node at (7*\xr,2*\yr)[color=red]{$D$};
            \end{scope}

		\end{tikzpicture}
		\caption{An excerpt of the graph~$G'$ produced by the reduction:
		Grid faces in $G'$:
        (A) Grid face within a ladder path or a vertex gadget with no docking side adjoined to its edges.
        (B) Grid face within a ladder path or a vertex gadget with a docking side adjoined to one of its edges.
        (C) + (D) Grid faces within buffers with the docking side on the right.
		}
		\label{fig:hc-emb}
	\end{figure}
	We now consider the area surrounding a switch.
	This area is pictured in \cref{fig:hc-emb}.
The vertex $e$ is not a GG blocker for $\{a,b\}$, because
	$\dist(a,b)^2  = 1/9$ and
		$\dist(a,e)^2  = \dist(b,e)^2 = 5/18$.
The vertex marked $d$ is also not a GG blocker for $\{b,c\}$, 
	since
	$\dist(b,c)^2=16/9$,
    $\dist(b,d)^2=10/9$,
    and
	$\dist(c,d)^2=2$.
Other cases are analogous or easy to see.
	Vertices that do not share a grid face are not adjacent by \cref{lemma:planar}.
	
	If $n$~is the number of vertices in the input graph~$G$, 
	then the graph~$G'$ output by the construction contains $n$~vertex gadgets each containing $\bigO(1)$~vertices, 
	$\bigO(n)$~ladder paths with $\bigO(n)$~vertices, 
	and $\bigO(n^2)$~cycles with $\bigO(1)$~vertices.
	It is easy to see that each step in both constructions can be computed in polynomial time. 
	Along with \cref{lemma:hc-eth}, 
	this implies that \hcAcr{} is \NP-hard on RNGs and on GGs.
	Moreover,
	it also implies that \hcAcr{} cannot be decided by a $2^{o(n^{1/4})}$-time algorithm on RNGs or GGs,
	unless~\ETHbreaks.
\end{proof}

\noindent
The computational complexity of \hcTsc{} on RCGs and that of \hcAcr{} on RNGs and GGs with maximum degree three is left open.

\section{Independent Set}
\label{sec:is}
In this section, we will investigate the restriction of the following problem to RNGs, RCGs, and Gabriel graphs:

\decprob{\isTsc{} (\isAcr{})}{is}
{An undirected graph~$G=(V,E)$ and~$k\in\Nzero$.}
{Is there a vertex set~$X\subseteq V$ with~$|X|\geq k$ such that~$G[X]$ is edgeless?}

We will show:

\begin{theorem}
 \label{thm:isNP}
 \isTsc{} on RCGs,
 on RNGs,
 and on GGs
 is \NP-hard,
 even if the maximum degree is four.
 Moreover, unless~\ETHbreaks,
 it admits
 no~$2^{o(n^{1/4})}$-time algorithm
 where $n$ is the number of vertices.
\end{theorem}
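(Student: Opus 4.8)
The plan is to give a polynomial-time many-one reduction from \isAcr{} on planar graphs of maximum degree three, reusing the geometric scaffold of \cref{constr:fvs} almost verbatim and replacing only the feedback-vertex-set bookkeeping with independent-set bookkeeping. First I would fix the starting hardness: since a set $X$ is independent in $G$ exactly when $V\setminus X$ is a vertex cover of $G$, \cref{lemma:vc-ETH} immediately yields that \isAcr{} on planar graphs of maximum degree three is \NP-hard and admits no $2^{o(n^{1/2})}$-time algorithm unless~\ETHbreaks. This complementarity makes the subcubic planar setting a convenient source.

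For the construction I would apply the same grid framework as in \cref{constr:fvs}: compute a 2-page book embedding, place a vertex gadget for each $v_i$, route each edge of $E_1$ (and symmetrically $E_2$) as a subdivided path through the grid, and fill the remaining grid positions with buffers. The point is that independent set composes cleanly with the two operations that build $G'$. Subdividing an edge twice raises the independence number by exactly one: from any independent set of $G$ at most one endpoint of the subdivided edge is used, leaving room for exactly one of the two new vertices, and conversely any independent set of $G'$ restricts to one of $G$ that is smaller by one; hence realizing each edge as a path with an even number of internal vertices contributes a fixed, precomputable amount. Likewise, attaching a buffer and connecting its outer vertices to the rest of the graph raises the independence number by the constant $c\coloneqq\is(\text{buffer})$, provided the buffer is chosen to have a maximum independent set avoiding all outer vertices, so that the connecting edges never force a loss. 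I would collect these facts into a lemma mirroring \cref{lemma:fvs}: (i) a double subdivision adds $1$; (ii) a buffer addition adds $c$; (iii) $G'$ has $\bigO(n^2)$ vertices. Setting $k'$ to $k$ plus the total bonus of all subdivisions and buffers, and checking that each vertex gadget faithfully encodes the constraint that the two endpoints of an original edge cannot both be selected without sacrificing that edge-path's bonus, gives $\is(G)\ge k \iff \is(G')\ge k'$.

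Because the embedding coordinates, the gadgets, and the buffers coincide with those of \cref{constr:fvs}, the proof that $G'$ is simultaneously an RCG, an RNG, and a GG carries over: I would reuse the grid-face case analysis of \cref{fig:fvs-gf}, checking within each of the four grid-face types that every non-adjacent pair has a GG blocker and every adjacent pair has no RCG blocker, and invoking \cref{lemma:planar} to rule out edges between vertices not sharing a grid face. Since all gadgets are built from $4$-cycles and subdivided paths, the graph is triangle-free, which is exactly the extra property the RCG case demands. Finally, the size bound $N=\bigO(n^2)$ turns any hypothetical $2^{o(N^{1/4})}$-time algorithm into a $2^{o(n^{1/2})}$-time algorithm on the subcubic planar input, contradicting the lower bound above.

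I expect the main obstacle to lie not in the geometry---which is inherited from the feedback-vertex-set reduction---but in pinning down a vertex gadget together with the subdivision parities so that the additive constant defining $k'$ is computed exactly and the ``select / do not select'' choice at each gadget propagates consistently to all incident edge-paths. Making the local independent-set optima of the gadgets and buffers compose into a single global optimum, with no unintended slack, is the delicate bookkeeping, analogous to but distinct from the cycle counting used for feedback vertex set.
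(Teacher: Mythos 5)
Your high-level skeleton (book embedding, grid of vertex/edge/filler gadgets, quadratic blow-up, ETH transfer from \cref{lemma:vc-ETH} via complementation with \vcTsc{}) matches the paper's, but the two make-or-break details are either deferred or wrongly justified, and both are genuine gaps. First, the independent-set accounting does not survive a verbatim reuse of \cref{constr:fvs}. The FVS vertex gadget is a spider of paths from a central vertex to the outlets, surrounded by buffers whose connector edges are each subdivided \emph{once}. For \fvsAcr{} subdivision parity is irrelevant, but for \isAcr{} it is everything: a single intermediate vertex $w$ on a connector between a buffer's outer vertex and a gadget vertex $u$ contributes $1$ to a maximum independent set when $u$ is unselected and $0$ when $u$ is selected, so the ``fixed, precomputable bonus'' you rely on is not fixed --- it depends on which local state the gadget is in. Likewise, the spider only exhibits the required ``all outlets selected costs exactly one more than no outlet selected'' behavior if every leg has even length and the buffer attachments avoid both alternating optima, none of which the FVS gadget was designed to guarantee. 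You flag this as ``delicate bookkeeping,'' but it is precisely the content of the reduction; the paper resolves it by abandoning the FVS gadgets entirely: each vertex becomes a single path on $2k+1$ vertices with neighbors attached at odd positions (\cref{lemma:is}(i), exact increase $k$), each edge is subdivided $4\ell(e)-2$ times (exact increase $2\ell(e)-1$), and the fillers are copies of $\cei$ whose connecting vertices alternate around the cycle and are joined to their neighbors \emph{without} subdividing (exact increase $4$).

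Second, your RCG argument is incorrect: triangle-freeness is a necessary condition for membership in the class of RCGs, not a sufficient one. RCG-ness is a property of an embedding --- no third point may lie in the closed lens $\cball{\dist(p_1,p_2)}(p_1)\cap\cball{\dist(p_1,p_2)}(p_2)$ of any edge --- and must be verified geometrically for the specific coordinates used. Indeed, in the paper's own construction the natural embedding does \emph{not} induce $G'$ as an RCG: the edges joining a $\cei$ to an edge-path corner vanish in the RCG, and the proof has to argue separately that correctness survives without those edges. Relatedly, once you change subdivision parities and redesign the vertex gadget, the embedding no longer ``coincides with that of \cref{constr:fvs},'' so the grid-face blocker analysis cannot simply be inherited and must be redone for the new coordinates.
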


\noindent
Our proof is based on a polynomial-time many-one reduction 
from the \NP-hard~\cite{Garey1976} 
\isTsc{} on planar graphs with maximum degree three.

\begin{figure}[t]
    \centering
    \includegraphics{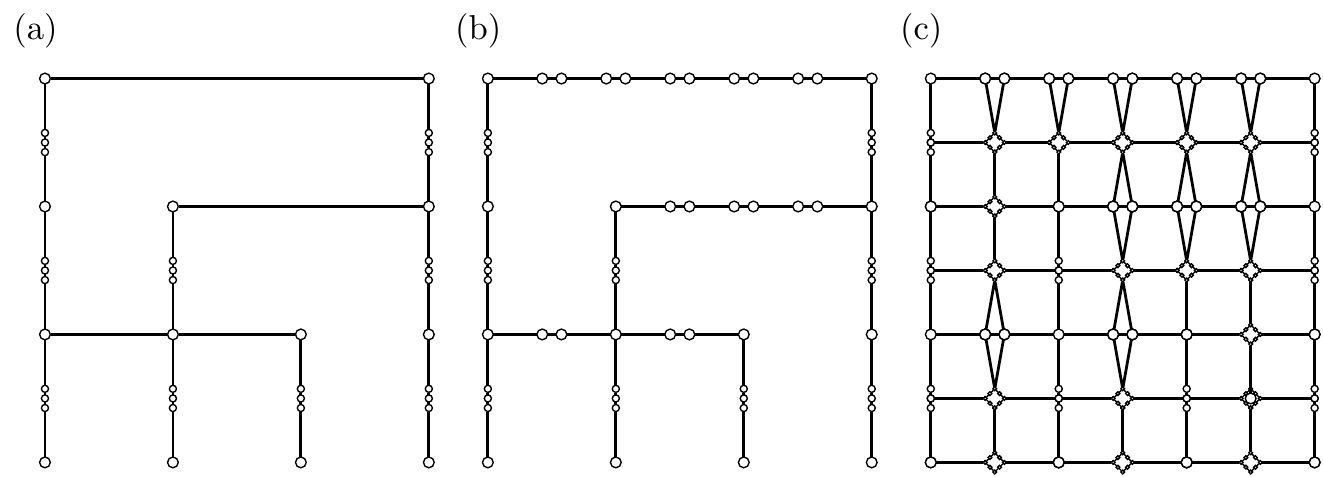}
    \caption{\cref{constr:is} 
applied to the example graph in~\cref{fig:exgraph}.
    (a) The graph after steps~1--2.
    (b) The graph after steps~1--3.
    (c) The final graph.
    }
    \label{fig:is}
\end{figure}

\begin{construction}
	\label{constr:is}
	Let $G=(V,E)$ be a planar graph with maximum degree three and let~$k\in \Nzero$.
	We construct a graph~$G'=(V',E')$' and $k'\in \Nzero$ such that $G'$~is an RNG and a GG (we will discuss RCGs later),
	and $G$ contains an independent set of size $k$ if and only if $G'$ contains one of size $k'$ (see \cref{fig:is} for an illustration).

	We start with $G'\coloneqq G$ and $k'\coloneqq k$.
	
	\textbf{Step 1:} Compute a 2-page book embedding of~$G$ and let~$v_1,\ldots,v_n$
	be the vertices of~$G$ enumerated in the order in which they appear on the spine,
	and let~$\{E_1,E_2\}$ denote the partition of~$E$.
	
	\textbf{Step 2:} Replace every vertex $v_i \in V$ with a path consisting of $4h_1(v_i) + 4h_2(v_i)+1$ vertices $w_{-4h_2(v_i)},\ldots,w_{4h_1(v_i)}$ and increase $k'$ by $2h_1(v_i) + 2h_2(v_i)$.
	For $j=-h_2(v_i),\ldots,h_1(v_i)$, the vertex $w_{4j}$ forms the $(2i,2j)$-corner.
	For $j=-h_2(v_i),\ldots,h_1(v_i)-1$, the vertices $w_{4j+1}, w_{4j+2},w_{4j+3}$ jointly form the $(2i,2j+1)$-corner,
	with $w_{4j+2}$ being both the left and right connecting vertex.
	
	Every edge~$e$ of~$G$ incident to~$v_i$ is now instead attached to $w_{4h(e)}$ (if~$e \in E_1$) or $w_{-4h(e)}$ (if~$e \in E_2$).
	
	\textbf{Step 3:}
	For every edge~$e = \{v_i, v_j\} \in E_1$, $i<j$,
	subdivide the corresponding edge of~$G'$ a total of $4\ell(e) - 2$~times and increase $k'$ by $2\ell(e) -1$.
	Suppose that $w_1,\ldots,w_{4\ell(e)-2}$ are the vertices introduced in the subdivision.
	For every~$x\in\set{2\ell-1}$, 
	the vertices~$w_{2x-1}$ and $w_{2x}$ jointly form the $(2i+x,2h(e))$-corner,
	with each being both 
	top and bottom connecting vertex.
	
	Repeat this step for~$E_2$ using negative $y$-coordinates instead.
	
	\textbf{Step 4:} For every $(x,y) \in \Z^2$ with $2\leq x \leq 2n$ and $-4h_2(G) \leq y \leq 4h_1(G)$,
	if no $(x,y)$-corner was added in any of the previous steps,
	then add a copy of~$\cei$,
	call it the $(x,y)$-corner,
	and increase~$k'$ by~$4$.
	The first, 
	third, 
	fifth, 
	and seventh vertex on the cycle are the top, 
	left, 
	bottom, 
	and right connecting vertex of that corner, 
	respectively.
	Make the top connecting vertex of the $(x,y)$-corner adjacent with the bottom connecting vertices of the $(x,y+1)$-corner,
	the left connecting vertex of the $(x,y)$-corner adjacent with the right connecting vertices of the $(x-1,y)$-corner, 
	and so on.
	\cqed	
\end{construction}

\begin{lemma}
	\label{lemma:is}
	Let $G=(V,E)$ and $G'=(V',E')$ be graphs and let $\is(G)$ and $\is(G')$ denote the size of a largest independent set in each graph.
	\begin{enumerate}[(i)]
		\item If $G'$ is obtained from $G$ by replacing a vertex~$v\in V$ with
		a path consisting of $2k+1$~new vertices $v_1,\ldots,v_{2k+1}$ in that
		order and connecting each $u \in N_G(v)$ to an arbitrary vertex in~$\{v_{2i+1} \mid 0\leq i \leq k \}$,
		then $\is(G') = \is(G) +k$.
		\item If $G'$ is obtained from $G$ by subdividing an edge $e \in E$ a
		total of $2k$ times, then~$\is(G') = \is(G)+k$.
		\item If $G'$ is obtained from $G$ by adding a copy of the cycle $\cei$
		such that no two consecutive vertices in the cycle have a neighbor outside of the cycle, then~$\is(G') = \is(G) +4$.
		\item If $G'$ is obtained from $G$ using \cref{constr:is} and $n$ is
		the number of vertices in $G$, then $G'$ contains at most $\bigO(n^2)$ vertices.
	\end{enumerate}
\end{lemma}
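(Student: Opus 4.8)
The plan is to prove parts (i)--(iii) by establishing the two inequalities $\is(G')\le\is(G)+c$ and $\is(G')\ge\is(G)+c$ separately (with $c=k,k,4$ respectively), and to prove (iv) by a direct vertex count over the four steps of \cref{constr:is}. In each of (i)--(iii) the upper bound comes from restricting a maximum independent set of $G'$ to the vertices of $G$ and bounding how many gadget vertices it can have used, while the lower bound comes from extending a maximum independent set of $G$ into the gadget. The one genuinely delicate point is the lower bound in (iii).

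For (i), write $P=v_1,\dots,v_{2k+1}$ for the inserted path. The two facts I would isolate first are that the \emph{unique} maximum independent set of $P_{2k+1}$ is the set $\{v_1,v_3,\dots,v_{2k+1}\}$ of odd-indexed vertices (of size $k+1$), and that the even-indexed vertices $\{v_2,\dots,v_{2k}\}$ form an independent set of size $k$ having no neighbor outside $P$. For $\is(G')\ge\is(G)+k$, take a maximum independent set $S$ of $G$: if $v\in S$, replace $v$ by the $k+1$ odd-indexed vertices (whose only external neighbors lie in $N_G(v)$, none of which is in $S$); if $v\notin S$, add the $k$ even-indexed vertices, which have no external neighbors at all. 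Either way we gain exactly $k$. For $\is(G')\le\is(G)+k$, take a maximum independent set $S'$ of $G'$ and split on $|S'\cap P|$: if it equals $k+1$, then by uniqueness $S'\cap P$ is exactly the odd-indexed set, so every attachment vertex is occupied and no vertex of $N_G(v)$ lies in $S'$, whence $(S'\setminus P)\cup\{v\}$ is independent in $G$ and we lose exactly $k$; if $|S'\cap P|\le k$, then $S'\setminus P$ is already independent in $G$ (not using $v$) and loses at most $k$. Part (ii) is the same argument applied to the internal path $P_{2k}$ of the subdivided edge, whose independence number is $k$; the only wrinkle is that the edge endpoints $u,w$ are adjacent in $G$, so at most one lies in any independent set of $G$. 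In the upper bound one checks that if both $u,w\in S'$ then $x_1,x_{2k}\notin S'$, so the internal path contributes at most $k-1$ and deleting the internal vertices together with one endpoint loses at most $k$; otherwise at most one endpoint is used and deleting the internal vertices loses at most $k$.

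Part (iii) is where I expect the real work. The upper bound $\is(G')\le\is(G)+4$ is immediate: any independent set meets the $8$-cycle in at most $\is(\cei)=4$ vertices, so deleting them yields an independent set of $G$. For the lower bound I would use the crucial structural feature of how $\cei$ is attached in \cref{constr:is}: its connecting vertices sit at the alternating positions $1,3,5,7$, so \emph{all} cycle vertices having a neighbor outside the cycle lie in a single color class of the bipartite cycle $\cei$. Given a maximum independent set $S$ of $G$, the opposite color class is then a four-element independent set of $\cei$, none of whose vertices is adjacent to $S$, so adding it gains exactly $4$. I would flag explicitly that the condition ``no two externally-attached vertices are consecutive'' is what guarantees that this set of vertices is independent, but that the full gain of $4$ relies on them lying in a \emph{single} color class: were they split across both classes, the cycle could be forced down to only three simultaneously addable vertices, so the alternating placement is doing real work here.

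Finally, (iv) is a routine count using facts from the preliminaries: the heights $h_1(G),h_2(G)$ and every edge length $\ell(e)$ are $\O(n)$, and a planar graph of maximum degree three has $\O(n)$ edges. Step~2 inserts $\O(n)$ vertices per original vertex, hence $\O(n^2)$ in total; step~3 subdivides each of the $\O(n)$ edges $\O(n)$ times, again $\O(n^2)$; and step~4 fills an $\O(n)\times\O(n)$ grid of corners, each a constant-size copy of $\cei$, for a further $\O(n^2)$. Summing these gives $\O(n^2)$ vertices in $G'$, as claimed.
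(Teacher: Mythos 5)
Your proposal is correct and for parts (i), (ii), and (iv) it follows essentially the same route as the paper: the paper's proof of (i) is exactly your odd/even split (it defines $V_{\textnormal{odd}}$ and $V_{\textnormal{even}}$ and uses the observation that the path meets an independent set in at most $k+1$ vertices, with equality only for the odd class), the paper dispatches (ii) with ``follows from (i)'' where you give a short direct case analysis on the two endpoints, and (iv) is the same count. The one substantive difference is (iii), and there your treatment is actually sharper than the paper's. The paper's lower-bound argument simply asserts that the added copy of $\cei$ ``contains at least four pairwise non-adjacent vertices without any neighbors in $G$,'' but this does not follow from the stated hypothesis that no two consecutive cycle vertices have external neighbors: the only independent sets of size four in $\cei$ are its two color classes, so if the externally attached vertices met both classes (e.g., positions $1$ and $4$), one could be forced down to three addable vertices, and indeed $\is(G')=\is(G)+3$ is achievable for such an attachment. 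You correctly identify that the full gain of four needs the attached vertices to lie in a single color class, which is exactly what \cref{constr:is} guarantees by placing the connecting vertices at the odd positions $1,3,5,7$. So your proof of (iii) closes a small gap between the lemma's hypothesis and its proof that the paper leaves implicit; strictly speaking the lemma statement should be strengthened to require the attached vertices to be pairwise at even distance along the cycle (equivalently, in one color class), as your argument uses and the construction provides.
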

\begin{proof}
	\begin{enumerate}[(i)]
		\item
		\newcommand{\Vodd}{V_{\textnormal{odd}}}
		\newcommand{\Veven}{V_{\textnormal{even}}}
		Define $\Vodd := \{v_{2i+1} \mid 0\leq i \leq k\}$
		and $\Veven := \{v_{2i} \mid 1 \leq i \leq k\}$.
		
		Let $X\subseteq V$ be an independent set in~$G$.
		If $v \in X$, 
		then $(X \setminus \{v\}) \cup \Vodd$ is an independent set of size $|X|+k$ in $G'$.
		If $v\notin X$, 
		then $X \cup \Veven$ is an independent set of that size.
		Conversely, 
		let $X' \subseteq V'$ be an independent set in~$G'$.
		Note that $\abs{X' \cap \{v_1, \dots, v_{2k+1}\}} \leq k+1$
		with equality if and only if $\Vodd \subseteq X'$.
		If $\Vodd \subseteq X'$, 
		then $(X' \setminus \Vodd) \cup \{v\}$ is an independent set of~$G$ of size $\abs{X'}-k$.
		Otherwise, 
		$X' \setminus \{v_1, \dots, v_{2k+1}\}$ is an independent set of~$G$ of size at least~$\abs{X'}-k$.
		\item Follows from (i).
		\item Let $X \subseteq V$ be an independent set in $G$.
		The added copy of $\cei$ contains at least four pairwise non-adjacent vertices without any neighbors in $G$.
		Adding these four vertices to $X$ yields an independent set in $G'$ of size $|X|+4$.
		Conversely, let $X' \subseteq V'$ be an independent set in~$G'$.
		Since $X'$~can contain at most four vertices from the copy of~$\cei$,
		removing any such vertices yields an independent set in~$G$ of size at least~$|X'|-4$.
		\item The graph~$G'$ contains $\bigO(n^2)$ corners each containing~$\bigO(1)$ vertices.
	\end{enumerate}
\end{proof}
We are now ready to prove the main result of this section:
\begin{proof}[Proof of \cref{thm:isNP}]
	The correctness of steps~2~to~4 follows from \cref{lemma:is}(i) to~(iii), respectively.
	By \cref{lemma:is}(iv), the size of the output graph of the reduction is polynomial in the size of the input graph and it is easy to see that the computations in the construction may be carried out in polynomial time.
	The degree restriction follows from the fact that the reduction does not generate any vertices with degree greater than $4$.
	
	We will now argue that the graph $G'$ output by \cref{constr:is} is an RNG and a GG.
	We begin by giving an embedding of the graph.
	If the $(x,y)$-corner is a single vertex, then its embedding is $(x,y)$.
	If the $(x,y)$-corner consists of three vertices added in step~2, then they are embedded at $(x,y-\eps)$, $(x,y)$, and $(x,y+\eps)$.
	If the $(x,y)$-corner consists of two vertices added in step~3, then their positions are $(x\pm \eps,y)$.
	Finally, if the $(x,y)$-corner is a copy of $\cei$, then the vertices in this cycle are embedded at~$(x\pm \eps,y),(x\pm \frac{\eps}{2},y\pm\frac{\eps}{2}),(x,y\pm \eps)$.
	
	We must show that this embedding induces $G'$ as its RNG and as its GG.
	
	For any $(x,y)\in\Z^2$ with $2\leq x < 2n$ and $-4h_2(G) \leq y < 4h_1(G)$, 
	the $(x,y)$-\emph{grid face} is the set of all vertices
	contained in the corners~$(x,y)$, $(x+1,y)$, $(x,y+1)$, and $(x+1,y+1)$.
	By \cref{lemma:planar}, 
	vertices that do not share a grid face are non-adjacent in the GG induced by the embedding, 
	implying that they are also non-adjacent in the RNG.
	It remains to show that there is a GG blocker for each pair of non-adjacent vertices that do share a grid face, 
	but no RNG blocker for any edge.
	\cref{fig:is-gf} pictures all types of grid faces that may occur in~$G'$ (up to symmetry).
	
	\begin{figure}[t]
		\centering
		\begin{tikzpicture}
         \def\xr{0.94}
         \def\yr{0.94}
         \tikzpreamble{}
         \def\theeps{0.15}
         \def\tscale{0.66}
         \def\crss{\theeps}
         \def\crsc{0.35}
         \def\yeps{\theeps}
         \def\xeps{\theeps}
         
		 \newcommand{\cross}[8]{
            \begin{scope}[rotate around={90:(#1*\xr,#2*\yr)}]
                \fill[white] (#1*\xr+-1*\xr*\crss,#2*\yr-1*\yr*\crss) rectangle (#1*\xr+1*\xr*\crss,#2*\yr+1*\yr*\crss);
            \end{scope}

            \node (cL) at (#1*\xr+-1*\xr*\crss,#2*\yr+0*\yr*\crss)[xnode,scale=\crsc]{};
            \node (cT) at (#1*\xr+0*\xr*\crss,#2*\yr+1*\yr*\crss)[xnode,scale=\crsc]{};
            \node (cR) at (#1*\xr+1*\xr*\crss,#2*\yr+0*\yr*\crss)[xnode,scale=\crsc]{};
            \node (cB) at (#1*\xr+0*\xr*\crss,#2*\yr-1*\yr*\crss)[xnode,scale=\crsc]{};
            
            \draw[fill=white] (cL) to (cT) to (cR) to (cB) to cycle;
            \draw[-,fill=white] (cL) to node(cLT)[midway,xnode,scale=\crsc]{}(cT) to node(cTR)[midway,xnode,scale=\crsc]{}(cR) to node(cRB)[midway,xnode,scale=\crsc]{}(cB) to node(cBL)[midway,xnode,scale=\crsc]{}(cL);
            
            \ifnum#7=1
            \ifnum#3=1
                \node (LL) at (#1*\xr-1*\xr,#2*\yr)[xnode,scale=\tscale]{};
                \draw[xedge] (cL) to (LL);
            \fi
            \ifnum#3=2
                \node (LL) at (#1*\xr-1*\xr,#2*\yr)[xnode]{};
                \draw[xedge] (cL) to (LL);
            \fi
            
            \ifnum#4=1
                \node (RR) at (#1*\xr+1*\xr,#2*\yr)[xnode,scale=\tscale]{};
                \draw[xedge] (cR) to (RR);
            \fi
            \ifnum#4=2
                \node (RR) at (#1*\xr+1*\xr,#2*\yr)[xnode]{};
                \draw[xedge] (cR) to (RR);
            \fi
            \ifnum#4=3
                \node (RR) at (#1*\xr+1*\xr-1*\xr*\crss,#2*\yr)[xnode,scale=\crsc]{};
                \draw[xedge] (cR) to (RR);
            \fi
            
            \ifnum#5=1
                \node (TTL) at (#1*\xr-\crss*\xr,#2*\yr+1*\yr)[xnode]{};
                \node (TTR) at (#1*\xr+\crss*\xr,#2*\yr+1*\yr)[xnode]{};
                \draw[xedge,#8] (TTL) to (cT) to (TTR);
            \fi
            \ifnum#5=2
                \node (TT) at (#1*\xr,#2*\yr+1*\yr-\crss*\yr)[xnode,scale=\crsc]{};
                \draw[xedge] (TT) to (cT);
            \fi
            \ifnum#5=3
                \node (TT) at (#1*\xr,#2*\yr+1*\yr)[xnode]{};
                \draw[xedge] (TT) to (cT);
            \fi
            
            \ifnum#6=1
                \node (BBL) at (#1*\xr-\crss*\xr,#2*\yr-1*\yr)[xnode]{};
                \node (BBR) at (#1*\xr+\crss*\xr,#2*\yr-1*\yr)[xnode]{};
                \draw[xedge] (BBL) to (cB) to (BBR);
            \fi
            \ifnum#6=2
                \node (BB) at (#1*\xr,#2*\yr-1*\yr)[xnode]{};
                \draw[xedge] (cB) to (BB);
            \fi
            \fi
        }
        
        \newcommand{\myclip}{\clip (-0.4*\xr,-0.4*\yr) rectangle (1.4*\xr,1.4*\yr);}
        
        \newcommand{\isgrid}{
            \Grid{A}{3}{3}{-1}{-1}{scale=0.1}{-,lightgray}{1}{1}
            \Grid{B}{1}{1}{0}{0}{scale=0.1}{xedge}{1}{1}
        }
        
        \def\xsh{2.5}
        \begin{scope}
            \node at (-0.5*\xr,1.5*\yr)[]{(a)};    
            \myclip{}
            \isgrid{}
            \cross{0}{0}{0}{2}{2}{0}{1}{};
            \node (e) at (0.5*\xr,0.1*\yr)[anchor=south west,color=red,inner sep=1pt]{$e$};
            \draw[red,thick,dotted] (e) to (cTR);
            \cross{0}{1}{0}{1}{0}{0}{1}{};
            \node at (cB)[anchor=north east,color=red,inner sep=1pt]{$a$};
            \node at (cR)[anchor=south west,color=red,inner sep=0pt]{$b$};
            
            \node (d) at (0.5*\xr,0.5*\yr)[anchor=south west,color=red,inner sep=1pt]{$d$};
            \draw[red,thick,dotted] (d) to (cRB);
            \node (h) at (1*\xr,1*\yr)[xnode,scale=\tscale]{};
            \node (hh) at (1*\xr,1*\yr+1*\yr*\yeps)[xnode,scale=\tscale]{};
            \node (hhh) at (1*\xr,1*\yr-1*\yr*\yeps)[xnode,scale=\tscale]{};
            \draw[xedge] (h) to (hh);
            \node at (hhh)[anchor=west,color=red,inner sep=1pt]{$c$};
        \end{scope}
        
        \begin{scope}[xshift=1*\xsh*\xr cm]
            \node at (-0.5*\xr,1.5*\yr)[]{(b)};    
            \myclip{}
            \isgrid{}
            \cross{0}{0}{0}{2}{2}{0}{1}{};
            \cross{0}{1}{0}{3}{0}{0}{1}{};
            \cross{1}{1}{0}{0}{0}{2}{1}{};
        \end{scope}
        
        \begin{scope}[xshift=2*\xsh*\xr cm]
            \node at (-0.5*\xr,1.5*\yr)[]{(c)};    
            \myclip{}
            \isgrid{}
            \cross{0}{0}{0}{3}{2}{0}{1}{};
            \cross{0}{1}{0}{3}{0}{0}{1}{};
            \cross{1}{1}{0}{0}{0}{0}{1}{};
            \cross{1}{0}{0}{0}{2}{0}{1}{};
        \end{scope}
        
        \begin{scope}[xshift=3*\xsh*\xr cm]
            \node at (-0.5*\xr,1.5*\yr)[]{(d)};    
            \myclip{}
            \Grid{A}{1}{3}{-1}{-1}{scale=0.1}{-,lightgray}{1}{1}
            \Grid{A}{1}{1}{1}{-1}{scale=0.1}{-,lightgray}{1}{1}
            \draw[xedge] (1*\xr,0*\yr) to (0*\xr,0*\yr);
            \draw[xedge] (0*\xr,0*\yr-1*\yr*\yeps) to (0*\xr,1*\yr) to (1*\xr+1*\xr*\crss,1*\yr);
            \draw[-,lightgray] (1*\xr+1*\xr*\crss,1*\yr) to (2*\xr,1*\yr);
            \draw[-,lightgray] (1*\xr-1*\xr*\crss,1*\yr) to (1*\xr,2*\yr) to (1*\xr+1*\xr*\crss,1*\yr);
            \cross{1}{0}{1}{0}{1}{0}{1}{blue!75!black,very thick};
            \node at (0*\xr,1*\yr)[xnode]{};
            \node at (0*\xr,0*\yr+1*\yr*\yeps)[xnode,scale=\tscale]{};
            \node at (0*\xr,0*\yr-1*\yr*\yeps)[xnode,scale=\tscale]{};
        \end{scope}
        
        \begin{scope}[xshift=4*\xsh*\xr cm]
            \node at (-0.5*\xr,1.5*\yr)[]{(e)};    
            \myclip{}
            \Grid{A}{3}{1}{-1}{-1}{scale=0.1}{-,lightgray}{1}{1}
\draw[xedge] (1*\xr,0*\yr) to (0*\xr,0*\yr);
            \draw[xedge] (0*\xr-1*\xr*\crss,1*\yr) to (1*\xr+1*\xr*\crss,1*\yr);
            \draw[-,lightgray] (-2*\xr,1*\yr) to (2*\xr,1*\yr);
            \draw[-,lightgray] (1*\xr-1*\xr*\crss,1*\yr) to (1*\xr,2*\yr) to (1*\xr+1*\xr*\crss,1*\yr);
            \draw[-,lightgray] (0*\xr-1*\xr*\crss,1*\yr) to (0*\xr,2*\yr) to (0*\xr+1*\xr*\crss,1*\yr);
            \cross{0}{0}{0}{3}{1}{0}{1}{blue!75!black,very thick};
            \cross{1}{0}{0}{0}{1}{0}{1}{blue!75!black,very thick};
        \end{scope}
        
        \begin{scope}[xshift=5*\xsh*\xr cm]
            \node at (-0.5*\xr,1.5*\yr)[]{(f)};    
            \myclip{}
            \Grid{A}{3}{1}{-1}{-1}{scale=0.1}{-,lightgray}{1}{1}
            \Grid{A}{1}{3}{-1}{-1}{scale=0.1}{-,lightgray}{1}{1}
            \draw[xedge] (1*\xr,0*\yr) to (0*\xr,0*\yr);
            \draw[xedge] (0*\xr,1*\yr) to (1*\xr+1*\xr*\crss,1*\yr);
            \draw[-,lightgray] (-2*\xr,1*\yr) to (2*\xr,1*\yr);
            \draw[-,lightgray] (1*\xr-1*\xr*\crss,1*\yr) to (1*\xr,2*\yr) to (1*\xr+1*\xr*\crss,1*\yr);
\cross{0}{0}{0}{3}{3}{0}{1}{};
            \cross{1}{0}{0}{0}{1}{0}{1}{blue!75!black,very thick};
        \end{scope}

		\end{tikzpicture}

		\caption{Grid faces created by \cref{constr:is}.}
		\label{fig:is-gf}
	\end{figure}

	In most cases, 
	the claim is obvious.
	We consider the face pictured in \cref{fig:is-gf}(a).
	The vertex labeled~$b$ is a GG for~$\{a,c\}$, because
	$\dist(a,c)^2 = 1$,
	$\dist(a,b)^2 = 2\eps^2$, and
	$\dist(b,c)^2 = (1-\eps)^2 + \eps^2$
	implies that for sufficiently small~$\eps$
	(i.e., $\eps\leq (\sqrt{5}-1)/4$)
	we get
	$\dist(a,c)^2 - \dist(a,b)^2 - \dist(b,c)^2 = 1 - 2 \eps - 4 \eps^2 \geq 0$.
	
	Similarly,
	$a$~is a GG blocker for~$\{d,e\}$, 
	since
	\begin{align*}
		&\dist(d,e)^2 = (1-\eps)^2, \quad
		\dist(d,a)^2 = \frac{\eps^2}{2}, \quad\text{and}\quad
		\dist(a,e)^2 = \left(1-\frac{3\eps}{2}\right)^2 + \frac{\eps^2}{4}
	\end{align*}
	implies that for sufficiently small~$\eps$
	(i.e., $\eps\leq 1/2$)
	\begin{align*}
	\dist(d,e)^2 - \dist(d,a)^2 - \dist(a,e)^2 = \eps -2\eps^2 \geq 0.
	\end{align*}
	The other cases are either obvious or follow along the same lines as these two.
	
	\emph{Relatively closest graphs.}
	In the RCG induced by the embedding described above,
	the edges highlighted in \cref{fig:is-gf}(d)--(f),
	which connect a~$\cei$ to a corner added in step~3,
	do not exist.
	This is not an issue, 
	however, 
	since the proof of the correctness of the reduction does not rely on the existence of those edges.
	Hence, 
	an adjusted reduction that omits these edges proves the \NP-hardness of the problem on RCGs.
	
	\emph{ETH-based lower bound}.
	By \cref{lemma:vc-ETH}, 
	\textsc{Vertex Cover} does not admit a~$2^{o(n^{1/2})}$-time algorithm on planar graphs with maximum degree three where $n$ is the number of vertices unless \ETHbreaks,
	and neither does~\isTsc{}.
By \cref{lemma:is}(iv), 
	the reduction only increases the number of vertices quadratically.
	Thus,
	\isAcr{} on RCGs, RNGs, and GGs
	does not admit a~$2^{o(n^{1/4})}$~time algorithm unless \ETHbreaks.
\end{proof}

\section{Dominating Set}
\label{sec:ds}
In this section, we will investigate the restriction of the following problem to RNGs, RCGs, and Gabriel graphs:

\decprob{\dsTsc{} (\dsAcr{})}{ds}
{An undirected graph~$G=(V,E)$ and~$k\in\N_0$.}
{Is there a vertex set~$X\subseteq V$ with~$|X|\leq k$ such that~$N_G[X]=V$?}

We will show:

\begin{theorem}
 \label{thm:dsNP}
 \dsTsc{} on RCGs,
 on RNGs,
 and on GGs
 is \NP-hard,
 even if the maximum degree is four,
 Moreover, unless~\ETHbreaks,
 it admits
 no~$2^{o(n^{1/4})}$-time algorithm
 where $n$ is the number of vertices.
\end{theorem}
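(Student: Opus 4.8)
The plan is to follow the same template used for \isTsc{} and \fvsTsc: reduce from \dsTsc{} on planar graphs of small maximum degree, lay the input out via a 2-page book embedding (\cref{thm:prelim-compute-be}), and translate this layout into a grid of vertex gadgets, edge connectors, and filler gadgets that can be realized simultaneously as an RCG, RNG, and GG. I would take as source \dsTsc{} on planar graphs of maximum degree three, which is \NP-hard. For the running-time lower bound I would first record an observation analogous to \cref{obs:fvsETH}: unless \ETHbreaks, this restriction admits no $2^{o(n^{1/2})}$-time algorithm. I would obtain this through a linear-size reduction from \vcTsc{} on planar graphs of maximum degree three, which has exactly this lower bound by \cref{lemma:vc-ETH}; subdividing every edge once and attaching a pendant to every original vertex preserves planarity and keeps the degree bounded (at most four), grows the vertex count only linearly, and turns a cover of size $k$ into a dominating set of controlled size, so the $2^{o(n^{1/2})}$ bound transfers to planar \dsAcr{} of constant maximum degree.

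For the construction I would mirror \cref{constr:is}. After fixing a 2-page book embedding with spine order $v_1,\dots,v_n$ and partition $\{E_1,E_2\}$, I would: in Step~2 replace each $v_i$ by a path whose length is a multiple of three determined by $h_1(v_i)+h_2(v_i)$, distributing its vertices into the $(2i,\cdot)$-corners and attaching the incident edges at path positions of a fixed residue class; in Step~3 route each edge $e\in E_1$ as a path obtained by subdividing it a multiple of three times, laid out along the grid row at height $2h(e)$ and placed into corners; and in Step~4 fill every remaining grid position with a copy of $\ctw$, using every third cycle vertex as a connecting vertex and joining neighbouring corners through these. Each of these three operations should change the domination number by a fixed additive amount, so I would increase $k'$ accordingly (by $1$ per length-three path segment and by $4$ per $\ctw$, since $\ds(\ctw)=\lceil 12/3\rceil=4$), accumulating $k'$ exactly as in \cref{constr:is}. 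Steps~3 and~4 are then repeated for $E_2$ using negative $y$-coordinates.

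The correctness would rest on a domination analogue of \cref{lemma:is} asserting: (i) subdividing an edge three times increases $\ds$ by exactly one; (ii) replacing a vertex by a suitable path increases $\ds$ by a controlled amount while leaving intact the ``choice'' of whether that vertex is dominated from within its own block or from a neighbour; (iii) attaching a $\ctw$-filler whose connecting vertices each touch at most one outside vertex increases $\ds$ by exactly $4$; and (iv) the resulting graph has $\bigO(n^2)$ vertices. Geometric realizability would be handled exactly as before: I would give the explicit grid embedding (corners at integer coordinates, paired and cycle vertices offset by $\pm\eps$, intermediate vertices at midpoints), enumerate the finitely many grid-face types up to symmetry, and verify for each that adjacent pairs have no RCG/RNG blocker while non-adjacent pairs sharing a face have a GG blocker; pairs not sharing a face are non-adjacent by \cref{lemma:planar}. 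As in the \isTsc{} case I would note which filler-to-corner edges fail to appear in the induced RCG (cf.\ \cref{fig:is-gf}) and check that the reduction's correctness never relies on them, so the same construction yields the RCG result after dropping those edges. Finally, since the reduction blows the instance size up only quadratically, composing it with the $2^{o(n^{1/2})}$ observation gives the claimed $2^{o(n^{1/4})}$ lower bound, and the degree bound of four is evident from the gadgets.

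The main obstacle I expect is the \emph{domination-budget bookkeeping} in items~(ii) and~(iii). Unlike independence or feedback vertex set, a single chosen vertex can dominate vertices across gadget boundaries through the connecting vertices, so a careless layout would let a solution ``cheat'' by dominating part of a filler or a connector path from an adjacent corner, thereby beating the intended count and breaking the additive relationship. The heart of the proof is therefore to design the $\ctw$-fillers and the spacing of their connecting vertices (every third cycle vertex, matching $\ds(C_{12})=4$) so that any dominating set still needs the full claimed number of dominators inside each gadget regardless of outside help, making the equality between $\ds(G')$ and $\ds(G)$ plus an explicitly computed constant tight in both directions.
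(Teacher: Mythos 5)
Your overall architecture matches the paper's: reduce from \dsAcr{} on planar graphs of maximum degree three, lay the instance out via a 2-page book embedding, replace vertices by paths and edges by subdivisions in multiples of three, fill the grid with $\ctw$-gadgets contributing $4$ each to $k'$, and verify the embedding grid face by grid face. The genuine gap is in your preliminary step establishing the $2^{o(n^{1/2})}$ lower bound (and the \NP-hardness of the source problem) for planar \dsAcr{} of bounded degree. The reduction you propose --- subdivide every edge once and attach a pendant to every original vertex --- does not work: in the resulting graph $G'$ the closed neighborhoods of the $n$ pendants are pairwise disjoint, so any dominating set has size at least $n$, while the set of all $n$ original vertices is always a dominating set (it hits every pendant and every subdivision vertex). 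Hence $\ds(G')=n$ for \emph{every} input graph, independently of its vertex cover number, and no target value can distinguish yes- from no-instances. The classical reduction must keep the original edge $\{u,v\}$ alongside the new degree-two vertex (i.e., replace each edge by a triangle), which is what the paper does in \cref{lem:dsPre}; that yields a correct correspondence but doubles the maximum degree, which is why the paper then applies the vertex-expansion operation of Chen~\etal{} to bring the degree back to three while keeping the size linear. Your version destroys the adjacency between original endpoints, which is precisely what makes a vertex cover dominate them.

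A secondary issue: you attach the $\ctw$-fillers to neighbouring corners by direct edges, whereas the paper subdivides each such connecting edge once. This is not cosmetic. \cref{lemma:ds}(iii) establishes $\ds(G')=\ds(G)+4$ only when the connecting vertices are joined to outside vertices via paths of length two; the intermediate vertices are what prevent a dominator in one corner from reaching into an adjacent $\ctw$, and they are also what gets replaced by its sole outside neighbour in the backward direction of that proof. They additionally make the embedding induce $G'$ as an RCG outright, so no edges need to be dropped afterwards --- unlike the \isAcr{} construction you are mirroring, where dropping edges is harmless because independence is monotone under edge deletion, which domination is not. You correctly identify the ``cheating across gadget boundaries'' problem as the main obstacle; the resolution is this subdivision, and it needs to be built into Step~4 rather than deferred.
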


\noindent
\dsTsc{} on $3$-regular planar graphs is claimed~\cite{Garey1979,Johnson1984}
to be \NP-hard, 
but we do not know of a published full proof.
We give a proof sketch for the following related statement:

\begin{lemma}\label{lem:dsPre}
 \dsTsc{} on planar graphs with maximum degree three
 is \NP-hard 
 and,
 unless~\ETHbreaks,
 admits no $2^{o(n^{1/2})}$-time algorithm
 where $n$ is the number of vertices.
\end{lemma}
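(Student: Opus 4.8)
The plan is to give a polynomial-time many-one reduction from \vcTsc{} on planar graphs of maximum degree three, which by \cref{lemma:vc-ETH} admits no $2^{o(n^{1/2})}$-time algorithm unless~\ETHbreaks. Since the reduction will increase the number of vertices only by a constant factor, both the \NP-hardness and the running-time lower bound for \dsTsc{} then follow immediately, exactly as in the proof of \cref{obs:fvsETH}. Throughout I write $\tau(G)$ for the minimum size of a vertex cover and $\gamma(G)$ for the minimum size of a dominating set. The combinatorial heart of the reduction is the elementary identity $\min_{X\subseteq V}\bigl(|X|+\mathrm{unc}(X)\bigr)=\tau(G)$, where $\mathrm{unc}(X)$ denotes the number of edges having no endpoint in $X$: a minimum cover attains $\tau(G)+0$, and conversely any $X$ yields a cover of size at most $|X|+\mathrm{unc}(X)$ by adding one endpoint of each uncovered edge. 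The whole construction is engineered so that a dominating set of the output graph $G'$ splits into a fixed overhead plus a part of size $|X|+\mathrm{unc}(X)$, where $X$ is the set of selected original vertices.

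Concretely, I would build $G'$ from $G=(V,E)$ as follows. Each edge $e=\{u,v\}$ is replaced by a single subdivision vertex $m_e$ adjacent to $u$ and $v$; this $m_e$ carries the edge-coverage constraint, since it can only be dominated by $u$, $v$, or $m_e$ itself. To pin down the domination of each original vertex at a fixed cost, I would equip $u$ with a private \emph{predominator}: a new neighbour $f_u$ of $u$ carrying two pendant leaves. The two leaves force $f_u$ (rather than a single leaf) into every optimal dominating set, so $f_u$ is always selected, dominates $u$ for free, and contributes exactly one to the overhead. Setting $k'\coloneqq k+|V|$, the forward direction maps a vertex cover $X$ of size $\le k$ to $D\coloneqq X\cup\{f_u:u\in V\}$, which dominates every $m_e$ (an endpoint lies in $X$), every original vertex (via its predominator), and every leaf. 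One checks that $\gamma(G')=|V|+\tau(G)$, so $G$ has a vertex cover of size $k$ if and only if $G'$ has a dominating set of size $k'$.

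For the backward direction I would argue in two steps. First, any dominating set may be assumed to contain every $f_u$: if $f_u\notin D$ then both its leaves lie in $D$, and deleting the two leaves while inserting $f_u$ preserves domination and does not increase $|D|$. Second, any $m_e\in D$ may be replaced by one of its endpoints without increasing $|D|$ and without losing domination, precisely because $u$ and $v$ are already dominated by their predominators $f_u,f_v$. After both steps $D\subseteq\{f_u:u\in V\}\cup V$, so $X\coloneqq D\cap V$ has size $|D|-|V|\le k$; and since every $m_e$ is now dominated by an endpoint, $X$ is a vertex cover. Combined with the identity above, this gives $\gamma(G')=|V|+\tau(G)$ exactly.

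The main obstacle is the degree bound. A degree-three vertex $u$ already uses all three of its slots for the incident subdivision vertices $m_e$, leaving no room for a private predominator $f_u$ without creating degree four; and one cannot instead route the domination of $u$ through some $m_e$, because forcing an $m_e$ into $D$ would dominate both of its endpoints and silently satisfy the coverage constraint of $e$, destroying the encoding. I would resolve this by \emph{splitting} each original vertex into a short subcubic chain of selector vertices — distributing its at most three incident edges together with a private predominator across two adjacent degree-three vertices joined by an edge — and re-deriving the identity with ``selecting $u$'' reinterpreted as selecting the whole chain (adjusting $k'$ by the corresponding fixed amount). Planarity is preserved at every step, as subdivisions, pendant leaves, and local vertex splits are all planarity-preserving, and the total vertex count grows only linearly in $|V|+|E|=\bigO(n)$, which is exactly what is needed to transport the $2^{o(n^{1/2})}$ lower bound from \cref{lemma:vc-ETH}.
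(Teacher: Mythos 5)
Your overall strategy---reducing from \vcTsc{} on planar graphs of maximum degree three and keeping the size increase linear so that \cref{lemma:vc-ETH} transfers---matches the paper's in spirit, but the paper does not build a custom gadget: it applies the standard edge-triangulation reduction to reach \dsTsc{} on planar graphs of maximum degree \emph{six} and then invokes the vertex expansion operation of Chen~\etal{} to push the degree down to three. Your base construction (a subdivision vertex $m_e$ per edge plus a predominator $f_u$ with two leaves at every original vertex) is correct as far as it goes, but it only yields maximum degree four, and the step you sketch for getting down to three is where the argument genuinely breaks. If you split $u$ into a chain $u_1-u_2$ and attach the single predominator $f_u$ to $u_2$ only, then $u_1$ has no private dominator: in the forward direction, for $u\notin X$ nothing in your proposed dominating set touches $u_1$, so the set is not dominating; and in the backward direction the normalization ``replace $m_e\in D$ by an endpoint'' is no longer safe, because $m_e$ may be the only dominator of the chain vertex at its other end. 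Attaching a predominator to every chain vertex is impossible within degree three (a chain vertex carrying an $m_e$ and two chain edges is already full), so this is not a notational fix.

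A second, independent problem is the cost accounting. Reinterpreting ``select $u$'' as ``select the whole chain'' is not enforced by the construction: a dominating set may put only $u_1$ into $D$, covering the edges routed through $u_1$ at cost one while leaving the edge at $u_2$ to be covered from its other endpoint or by $m_e$ itself. Hence the intended identity $\gamma(G')=|V|+c\cdot\tau(G)$ does not follow from your normalization steps, and the backward direction no longer extracts a vertex cover of the right size. This all-or-nothing forcing on an expanded vertex, at a controlled additive cost, is exactly what Chen~\etal{}'s vertex expansion gadget provides, which is why the paper cites it rather than improvising the degree reduction. To repair your proof you would either have to reproduce such a forcing gadget in full detail or, as the paper does, first pass through maximum degree six and then cite the expansion operation.
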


{
\begin{proof}
	By \cref{lemma:vc-ETH}, the \textsc{Vertex Cover} problem on planar
	graphs with maximum degree three does not admit a $2^{o(n^{1/2})}$-time 
	algorithm where $n$ is the number of vertices, unless \ETHbreaks.
	\textsc{Vertex Cover} on planar graphs with maximum degree three can be 
	reduced to \dsTsc{} on planar graphs with maximum degree six using a 
	standard reduction that involves deleting isolated vertices and 
	replacing every edge with a $3$-cycle.
	This reduction only changes the size of the graph linearly.
	Hence, the same claim applies to \dsTsc{} on planar graphs with maximum degree six.
	The vertex expansion operation employed by Chen~\etal~\cite[Sect.~3.2]{Chen2009} in a similar 
	context may be used to reduce the maximum degree to three while also only changing the size of the graph linearly.
	Hence, 
	unless \ETHbreaks,
	\dsTsc{} cannot be solved in time $2^{o(n^{1/2})}$ on planar graphs with maximum degree three. where $n$ is the number of vertices.
\end{proof}
}

\noindent
We will now present the construction of a polynomial-time many-one reduction from \dsAcr{} on planar graphs with maximum degree three to \dsAcr{} restricted to graphs that are RCGs, RNGs, and GGs.

\begin{figure}[t]
	\centering
	\includegraphics{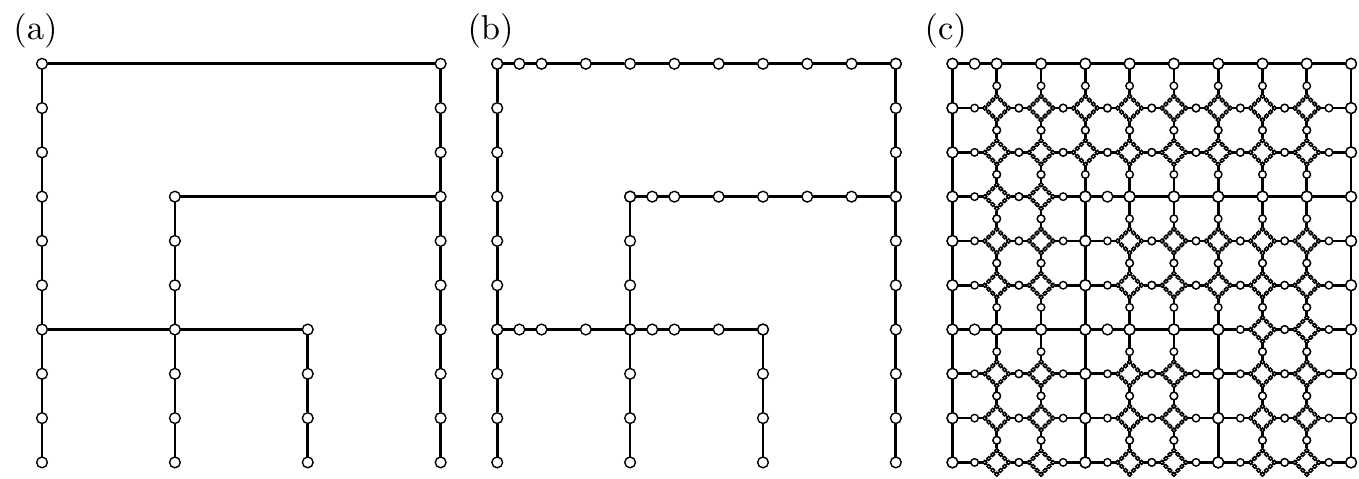}
	\caption{\cref{constr:ds} 
applied to the example graph in~\cref{fig:exgraph}.
		(a) The graph after steps 1--2.
		(b) The graph after steps 1--3.
		(c) The final graph.
	}
	\label{fig:ds}
\end{figure}

\begin{construction}
  \label{constr:ds}
	Let $G=(V,E)$ be a planar graph with maximum degree three and let~$k\in\Nzero$.
	We will construct a graph~$G'=(V',E')$ and~$k'\in\Nzero$ such that $G'$~is an RCG, 
	an RNG, 
	and a GG, 
	and $G$~contains a dominating set of size~$k$ if and only if~$G'$ contains one of size~$k'$
	(see~\cref{fig:ds} for an illustration).
	
	The vertex set~$V'$ will mostly consist of groups of vertices
	called $(x,y)$-\emph{corners}, where~$3 \leq x \leq 3n$ and $-3h_2(G) \leq y \leq 3h_1(G)$.
	Each corner consists of either a single vertex or a copy of the cycle $\ctw$.
	In the RNG-embedding of $G'$, the vertices forming the $(x,y)$-corner will be embedded roughly around the coordinate $(x,y)$.
	If $u_1,\ldots,u_{12}$ are the vertices in a copy of $\ctw$ in the order in which they appear on the cycle, then we refer to $u_1,u_4,u_7,u_{10}$ as the top, right, bottom, and left \emph{connecting vertices}.
	The connecting vertices will be the only vertices in a copy of $\ctw$ with neighbors outside of that cycle.
	If a corner consists of a single vertex, then that vertex itself simultaneously forms the top, bottom, left, and right connecting vertex of that corner.
We start with $G'\coloneqq G$ and $k' \coloneqq k$.
	
	\textbf{Step~1:} 
	Compute a 2-page book embedding of~$G$. 
	Let~$v_1,\ldots,v_n$
	be the vertices of~$G$ enumerated in the order in which they appear on the spine,
	and let~$\{E_1,E_2\}$ denote the partition of~$E$.
	
	\textbf{Step~2:} 
	Replace each vertex~$v_i$ with a path of length~$3h_1(v_i)+3h_2(v_i)$.
	The vertices on this paths form the corners~$(3i, -3h_2(v_i)),(3i, -3h_2(v_i)+1), \dots, (3i, 3h_1(v_i))$.
	Every edge~$e$ of~$G$ incident to~$v_i$ is now instead attached to the $(3i,3h(e))$-corner if~$e \in E_1$ or the $(3i,-3h(e))$-corner if~$e \in E_2$.
	Increase~$k'$ by~$h_1(v_i)+h_2(v_i)$.
	
	\textbf{Step~3:} 
	For every edge~$e=\{v_i,v_j\} \in E_1$,
	subdivide the corresponding edge in $G'$
	a total of $3\ell(e)$~times
	and increase~$k'$ by~$\ell(e)$.
	Suppose that $w_0,\ldots,w_{3\ell(e)-1}$ are the vertices introduced in the subdivision,
	where~$w_0$ is adjacent with the $(3i, 3h(e))$-corner.
	Then,
	$w_0$~is not a corner, 
	but for every~$x \in \set{3\ell(e)-1}$, 
	$w_x$ is the $(3i+x, 3h(e))$-corner.
	
	\textbf{Step~4:} 
	For every $(x,y)$ with~$3 \leq x \leq 3n$
	and~$0 \leq y \leq 3h_1(G)$, if an $(x,y)$-corner
	was not added in the previous
	two steps, then add a copy of $\ctw$ and call it the $(x,y)$-corner.
	Then add an edge connecting 
	the top connecting vertex of the $(x,y)$-corner with
	the bottom connecting vertex of the $(x,y+1)$-corner,
	the left connecting vertex of the~$(x,y)$
	corner 
	with the right connecting vertex of the $(x-1,y)$-corner, 
	and so on.
	Subdivide each of these edges once.
	
	Steps~2~to~4, 
	take only~$E_1$ into account.
	These steps must be repeated analogously for the edges in~$E_2$
	using negative $y$-coordinates.
	\cqed
\end{construction}

\begin{lemma}
	Let $G=(V,E)$ and $G'$ be graphs and let $\ds(G)$ and $\ds(G')$ denote
	the size of a smallest dominating set in each graph.
	Let $k\in \Nzero$.
	\begin{enumerate}[(i)]
		\item If $G'$ is obtained from $G$ by replacing a vertex~$v\in V$ with
		a path
consisting of $3k+1$~new vertices $v_1,\ldots,v_{3k+1}$ in that
		order and connecting each $u \in N_G(v)$ to an arbitrary vertex in~$\{v_{3i+1} \mid 0\leq i \leq k \}$,
		then $\ds(G') = \ds(G) +k$.
		\item If $G'$ is obtained from $G$ by subdividing an edge $e \in E$ a
		total of $3k$ times, then~$\ds(G') = \ds(G)+k$.
		\item If $G'$ is obtained from $G$ by adding a copy of the cycle $\ctw$
		and connecting a subset of its connecting vertices each to at most one
		existing vertex via a path of length two, then~$\ds(G') = \ds(G) +4$.
		\item If $G'$ is obtained from $G$ using \cref{constr:ds} and $n$ is
		the number of vertices in $G$, then $G'$ contains at most $\bigO(n^2)$ vertices.
	\end{enumerate}
	\label{lemma:ds}
\end{lemma}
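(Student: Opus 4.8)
The plan is to establish each of the four identities separately, with part~(i) carrying the real content and the others either reducing to it or being direct counting. Throughout I write $P$ for the replacement path $v_1,\dots,v_{3k+1}$ and call the vertices $v_{3i+1}$ (namely $v_1,v_4,\dots,v_{3k+1}$) the \emph{ports}, since these are the only path vertices with neighbours outside $P$.

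For part~(i) I would prove the two inequalities. For $\ds(G')\le\ds(G)+k$, start from an optimal dominating set $D$ of $G$. If $v\in D$, replace it by all $k+1$ ports; since consecutive ports have overlapping closed neighbourhoods, these dominate every vertex of $P$ and every former neighbour of $v$, giving a dominating set of size $|D|+k$. If $v\notin D$, then $v$ is dominated by some $u\in N_G(v)\cap D$; the port $v_{3j+1}$ to which $u$ is attached is thus dominated from outside $P$, and splitting $P$ at $v_{3j+1}$ into subpaths on $3j$ and $3(k-j)$ vertices lets me dominate the remainder of $P$ with $j+(k-j)=k$ further vertices, again totalling $|D|+k$. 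For the reverse inequality $\ds(G)\le\ds(G')-k$, the $k$ pairwise-disjoint closed neighbourhoods $N[v_{3i+2}]=\{v_{3i+1},v_{3i+2},v_{3i+3}\}$ ($0\le i\le k-1$), whose centres have no external neighbour, show that any dominating set $D'$ of $G'$ meets $P$ in at least $k$ vertices; deleting $P\cap D'$ and, when necessary, inserting $v$ to keep $v$ and its neighbours dominated yields a dominating set of $G$.

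The main obstacle is precisely the boundary bookkeeping in this last step: deleting the (at least) $k$ path vertices and naively adding $v$ gives only the bound $\ds(G')-k+1$, which is one too many. I expect to close this gap with a short case distinction. If some neighbour of $v$ already lies in $D'$, then $v$ is dominated in $G$ without adding it, so no extra vertex is spent. If no neighbour of $v$ lies in $D'$, I argue that $P$ must in fact contain at least $k+1$ dominators: assuming $P$ contributes only $k$ of them forces exactly one dominator per block $\{v_{3i+1},v_{3i+2},v_{3i+3}\}$, and this forces each block's dominator to sit at its far end, cascading inward from the endpoint used to dominate $v_{3k+1}$ and ultimately leaving the port $v_1$ uncovered unless it is dominated by an external neighbour of $v$ — contradicting the case hypothesis. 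Hence the $(k+1)$-st path vertex is unavoidable and absorbs the $+1$.

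Parts~(ii)--(iv) are then routine. Part~(ii) is the special case of~(i) in which the replaced vertex has degree two with its two neighbours attached to the two ends of the path, so subdividing an edge $3k$ times reproduces exactly the $+k$ of~(i) (equivalently, the same disjoint-neighbourhood count applies verbatim). Part~(iii) mirrors~(i): since $\ds(\ctw)=4$, attaching the four connecting vertices $u_1,u_4,u_7,u_{10}$ of a fresh $\ctw$ — which dominate the whole cycle and the intermediate vertices of the length-two connectors — proves $\ds(G')\le\ds(G)+4$, while the four disjoint closed neighbourhoods $N[u_2],N[u_5],N[u_8],N[u_{11}]$ force at least four cycle vertices into any $D'$; discarding all cycle vertices of $D'$ and repairing the at most $|D'\cap\{\text{intermediates}\}|$ existing vertices that relied on an intermediate gives the matching lower bound, the repair cost being covered by the surplus $|D'\cap V(\ctw)|-4\ge 0$. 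Finally, part~(iv) is a direct count: \cref{constr:ds} creates $\bigO(n^2)$ corners (with $x$ ranging over $\bigO(n)$ values and $y$ over $\bigO(h_1(G)+h_2(G))=\bigO(n)$ values), each of constant size, together with $\bigO(n^2)$ subdivision vertices, so $|V'|\in\bigO(n^2)$.
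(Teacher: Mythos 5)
Your overall strategy for part~(i) --- the $k$ disjoint closed neighbourhoods $N[v_{3i+2}]$ for the lower bound plus a cascading argument --- is a legitimate alternative to the paper's route (the paper proves the case $k=1$ by an explicit case analysis on which of $v_1,\dots,v_4$ lie in the dominating set and then inducts on $k$). The upper-bound direction and parts (iii) and (iv) are fine. However, your Case~A in the lower bound of (i) has a genuine hole. You delete all of $P\cap D'$ and argue that, because some $G$-neighbour $w_1$ of $v$ lies in $D'$, the vertex $v$ is dominated ``without adding it, so no extra vertex is spent.'' But $v$ being dominated is not the only thing at stake: a \emph{different} neighbour $w_2$ of $v$ may have had the port it is attached to as its \emph{only} dominator in $D'$, and after deleting $P\cap D'$ without inserting $v$, that $w_2$ is dominated by nothing in $G$. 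Concretely, take $G$ to be the path $w_2\,v\,w_1\,z$ and $k=1$; then $G'$ is a $P_7$, and $D'=\{v_1,v_3,w_1\}$ is a minimum dominating set of $G'$ satisfying your Case~A hypothesis, yet $D'\setminus P=\{w_1\}$ fails to dominate $w_2$ in $G$.

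The gap is repairable, but it needs an extra idea: split instead on whether $\abs{D'\cap P}=k$ or $\abs{D'\cap P}\ge k+1$. In the latter case you can afford to re-insert $v$ and still arrive at $\abs{D'}-k$. In the former case one must additionally show that the forced positions of the block dominators (a cascade run from $v_{3k}$ downward, analogous to your Case~B argument) place every block's dominator in $\{v_{3i+2},v_{3i+3}\}$ and hence exclude every port from $D'$, so that every neighbour of $v$ keeps a dominator outside $P$; combined with (the contrapositive of) your Case~B cascade, some neighbour of $v$ must then lie in $D'$, so $D'\setminus P$ really does dominate $G$. As a smaller point, your reduction of (ii) to (i) presumes the subdivided edge has an endpoint of degree two; the cleaner reduction (and the paper's) absorbs one endpoint $u$ of $e$ into the path, attaching $v$ to the port $v_{3k+1}$ and $u$'s remaining neighbours to the port $v_1$, which works for any degree.
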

\begin{proof}
	\begin{enumerate}[(i)]
		\item It suffices to prove the claim for $k=1$.
		The general case then follows by induction.

		Suppose that~$X\subseteq V$ is a dominating set in~$G$. If~$v \in X$,
		then~$X' \coloneqq (X \setminus \{v\}) \cup \{v_1,v_4\}$
		is a dominating set of size~$|X| + 1$ in~$G'$.
		If~$v \not \in X$, then $v$~has a neighbor~$u \in X$.
		In~$G'$,~$u$~must be adjacent to either~$v_1$ or~$v_4$.
		In the first case,~$X' \coloneqq X \cup \{v_3\}$
		is a dominating set of size~$|X| + 1$ in~$G'$.
		In the second
		case,~$X' \coloneqq X \cup \{v_2\}$ is.
		Hence,~$\ds(G') \leq \ds(G) + 1$.
		
		Now suppose that~$X'$ is a dominating set in~$G'$. First, assume
		that~$X'$ contains~$v_1$.
		Then,~$X'$ must also contain at least one of~$v_2$,~$v_3$,
		or~$v_4$ in order to dominate~$v_3$.
		Then,~$X \coloneqq (X' \setminus \{v_1,v_2,v_3,v_4\}) \cup \{v\}$
		is a dominating set in~$G$ of size at most~$|X'|-1$.
		If we assume that~$X'$ contains~$v_4$, then an analogous argument holds.
		So, assume that~$X'$ contains neither~$v_1$ nor~$v_4$.
		It must then contain one of the following:
		\begin{enumerate}[(1)]
			\item~$v_2$ and a vertex~$u \neq v_3$ that is adjacent to~$v_4$,
			\item~$v_3$ and a vertex~$u \neq v_2$ that is adjacent to~$v_1$, or
			\item both, $v_2$ and~$v_3$.
		\end{enumerate}
		In the first case, set~$X \coloneqq X' \setminus \{v_2\}$.
		In the second case, let~$X \coloneqq X' \setminus \{v_3\}$.
		In the third case, choose~$X \coloneqq (X' \setminus \{v_2, v_3\}) \cup \{v\}$.
		In each case,~$X$ is a dominating set in~$G$ of size at most~$|X'|-1$.
		Hence,~$\ds (G') \geq \ds(G) + 1$.
		\item Let~$e=\{u,v\}\in E$. 
		Subdividing~$e$ is tantamount to replacing~$u$
		with a path of length~$3k+1$, 
		while connecting~$v$ to the last vertex on this path 
		and all of~$u$'s other neighbors to the first vertex on this path. 
		Due to~(i), 
		this implies that~$\ds(G') = \ds(G) + k$.
		
		\item Suppose that $X$~is a dominating set in~$G$.
		Then, 
		$X'$ containing~$X$ and all connecting vertices
		of the copy of~$\ctw$ is a dominating set in~$G'$ and $|X'| = |X| +4$.
		Hence, 
		$\ds(G') \leq \ds(G) + 4$.
		
		Now suppose that $X'$ is a dominating set in $G'$.
		It is easy to see that $X'$ must contain at least four vertices
		in the copy of $\ctw$.
		We obtain $X$ from~$X'$ by removing all vertices of~$\ctw$
		and, if $X'$ contains any of the intermediate vertices on the
		paths from the connecting vertices to vertices in $G$, then we
		replace them by their sole neighbors in $G$.
		Then,~$X$ is a dominating set in $G$ of size at most $|X'|-4$.
		Hence,~$\ds(G') \geq \ds(G) +4$. 
		
		\item The graph $G'$ contains $\bigO (n^2)$ corners each containing $\bigO(1)$
		vertices in addition to $\bigO(n^2)$ intermediate vertices.
		\qedhere
	\end{enumerate}
\end{proof}

\noindent
We are set to prove the main result of this section.

\begin{proof}[Proof of~\cref{thm:dsNP}]
That \cref{constr:ds} is correct follows from \cref{lemma:ds}(i)-(iii).
	It is not difficult to see that
	\cref{constr:ds} may be carried out in polynomial time
	and
	outputs a graph~$G'$ with~$\O(n^2)$ vertices,
	where~$n$ denotes the number of vertices in the input graph~$G$,
	and with a maximum degree of four.
	
	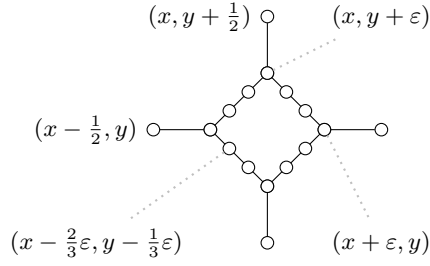
\begin{figure}[t]
		\centering
		\begin{subfigure}[t]{0.5\textwidth}
			\centering
			\begin{tikzpicture}
				\def\xr{3}
				\def\yr{3}
				\tikzpreamble{}
				
				\def\teps{0.25}
				\def\nsc{0.5}
				
				\begin{scope}
				 \draw[-] (-\teps*\xr,0) to (0,\teps*\yr) to (\teps*\xr,0) to (0,-\teps*\yr) to (-\teps*\xr,0);
				 \node (L) at (-0.5*\xr,0)[xnode,label=180:{\small$(x-\frac{1}{2},y)$}]{};
				 \node (R) at (0.5*\xr,0)[xnode]{};
				 \node (T) at (0,0.5*\xr)[xnode,label=180:{\small$(x,y+\frac{1}{2})$}]{};
				 \node (B) at (0,-0.5*\xr)[xnode]{};
				 \foreach\x in {0,1,2,3}{\node (LT\x) at (-\teps*\xr+\x*\xr*\teps/3,\x*\yr*\teps/3)[xnode]{};
          \node (LB\x) at (-\teps*\xr+\x*\xr*\teps/3,-\x*\yr*\teps/3)[xnode]{};
          \node (RB\x) at (\teps*\xr-\x*\xr*\teps/3,-\x*\yr*\teps/3)[xnode]{};
          \node (RT\x) at (\teps*\xr-\x*\xr*\teps/3,\x*\yr*\teps/3)[xnode]{};
        }
        \node (LT3l) at (0.5*\xr,0.5*\yr)[font=\small]{$(x,y+\eps)$};
        \node (RT0l) at (0.5*\xr,-0.5*\yr)[font=\small]{$(x+\eps,y)$};
        \node (LB1l) at (-0.75*\xr,-0.5*\yr)[font=\small]{$(x-\frac{2}{3}\eps,y-\frac{1}{3}\eps)$};
        \draw[thick,dotted,lightgray] (LT3) to (LT3l);
        \draw[thick,dotted,lightgray] (LB1) to (LB1l);
        \draw[thick,dotted,lightgray] (RT0) to (RT0l);
        \draw[-] (L) to (LT0);
        \draw[-] (R) to (RT0);
        \draw[-] (T) to (LT3);
        \draw[-] (B) to (LB3);
				\end{scope}

\end{tikzpicture}
		\end{subfigure}
		\caption{Embedding of a~$12$-cycle forming the $(x, y)$-corner.
		The unlabeled vertices are spread equidistantly between the vertices whose positions are given.}
		\label{fig:ds-c12-emb}
	\end{figure}
	
	It remains to show that the graph $G'$ output by \cref{constr:ds} is an RCG, 
	an RNG, 
	and a GG.
	We begin by describing an embedding of~$G'$.
	If the $(x,y)$-corner is a single vertex, then its position is $(x,y)$.
	If the $(x,y)$-corner is a copy of $\ctw$, then the vertices in the
	corner as well as the degree-2 vertices connecting the corner to 
	its surrounding corners are embedded as pictured in \cref{fig:ds-c12-emb}.
	The first vertex~$w_0$ of each path added in step~3 to replace an edge~$e=\{v_i,v_j\} \in E_1$,~$i<j$, is
	embedded at $(3i+\frac{1}{2}, 3h(e))$, and analogously for~$E_2$.
	
	\begin{figure}[t]
		\centering
		
		\begin{tikzpicture}
		 \def\xr{1}
		 \def\yr{1}
		 \tikzpreamble{}
		 \def\teps{0.25}
		 \def\nsc{0.5}
		 
		 \def\xsh{2.5}
		 
		 \newcommand{\myclip}{\clip (0.6*\xr,0.6*\yr) rectangle (2.4*\xr,2.4*\yr);}
		 
		 \newcommand{\dsgrid}{
      \Grid{G}{3}{3}{0}{0}{scale=0.1}{-,lightgray}{1}{1}
      \Grid{G}{1}{1}{1}{1}{scale=0.1}{xedge}{1}{1}
		 }
		 
		 \def\rsc{0.45}
		 \newcommand{\raute}[2]{
        \fill[white,draw=black] (-\teps*\xr+#1*\xr,0+#2*\yr) to (0+#1*\xr,\teps*\yr+#2*\yr) to (\teps*\xr+#1*\xr,0+#2*\yr) to (0+#1*\xr,-\teps*\yr+#2*\yr) to (-\teps*\xr+#1*\xr,0+#2*\yr);
				 \foreach\x in {0,1,2,3}{\node (LT\x) at (-\teps*\xr+\x*\xr*\teps/3+#1*\xr,\x*\yr*\teps/3+#2*\yr)[xnode,scale=\rsc]{};
          \node (LB\x) at (-\teps*\xr+\x*\xr*\teps/3+#1*\xr,-\x*\yr*\teps/3+#2*\yr)[xnode,scale=\rsc]{};
          \node (RB\x) at (\teps*\xr-\x*\xr*\teps/3+#1*\xr,-\x*\yr*\teps/3+#2*\yr)[xnode,scale=\rsc]{};
          \node (RT\x) at (\teps*\xr-\x*\xr*\teps/3+#1*\xr,\x*\yr*\teps/3+#2*\yr)[xnode,scale=\rsc]{};
        }
		 }
		 
		 \begin{scope}[xshift=1*\xsh*\xr cm]
		  \node at (0.5*\xr,2.5*\yr)[]{(a)};
		  \myclip{}
		  \dsgrid{}
		  \foreach[count=\i, evaluate=\i as \z using int(\i-1)] \x\y in {1/1.5,1/2,2/2,2/1,1.5/1}{
        \node (A\z) at (\x*\xr,\y*\yr)[xnode]{};
      }
      \raute{1}{1}
      \node at (A0)[anchor=east,color=red]{$d$};
      \node at (A2)[anchor=south west,color=red]{$a$};
      \node at (A4)[anchor=north,color=red]{$b$};
      \node (c) at (1.66*\xr,1.66*\yr)[inner sep=0pt,color=red]{$c$};
      \draw[thick,dotted,color=red] (c) to (1*\xr+2.35*\teps/3*\xr,1*\yr+1.35*\teps/3*\yr);
		 \end{scope}
		 
		 \begin{scope}[xshift=2*\xsh*\xr cm]
		  \node at (0.5*\xr,2.5*\yr)[]{(b)};
		  \myclip{}
		  \dsgrid{}
		  \foreach[count=\i, evaluate=\i as \z using int(\i-1)] \x\y in {1/1,1/2,1.5/2,2/2,2/1.5,1.5/1}{
        \node (A\z) at (\x*\xr,\y*\yr)[xnode]{};
      }
      \foreach\a\b in {2/1}{
        \raute{\a}{\b}
      }
		 \end{scope}
		 
		 \begin{scope}[xshift=3*\xsh*\xr cm]
		  \node at (0.5*\xr,2.5*\yr)[]{(c)};
		  \myclip{}
		  \dsgrid{}
		  \foreach[count=\i, evaluate=\i as \z using int(\i-1)] \x\y in {1/1,1/2,1.5/2,2/1.5,1.5/1}{
        \node (A\z) at (\x*\xr,\y*\yr)[xnode]{};
      }
      \foreach\a\b in {2/2,2/1}{
        \raute{\a}{\b}
      }
		 \end{scope}
		 
		 \begin{scope}[xshift=4*\xsh*\xr cm]
		  \node at (0.5*\xr,2.5*\yr)[]{(d)};
		  \myclip{}
		  \dsgrid{}
		  \foreach[count=\i, evaluate=\i as \z using int(\i-1)] \x\y in {1/1.5,1.5/2,2/1.5,2/1,1.5/1}{
        \node (A\z) at (\x*\xr,\y*\yr)[xnode]{};
      }
      \foreach\a\b in {1/1,1/2,2/2}{
        \raute{\a}{\b}
      }
		 \end{scope}
		 
		 \begin{scope}[xshift=5*\xsh*\xr cm]
		  \node at (0.5*\xr,2.5*\yr)[]{(e)};
		  \myclip{}
		  \dsgrid{}
		  \foreach[count=\i, evaluate=\i as \z using int(\i-1)] \x\y in {1/1.5,1.5/2,2/1.5,1.5/1}{
        \node (A\z) at (\x*\xr,\y*\yr)[xnode]{};
      }
      \foreach\a\b in {1/1,1/2,2/2,2/1}{
        \raute{\a}{\b}
      }
		 \end{scope}

    \end{tikzpicture}
    
\caption{The grid faces created by \cref{constr:ds}.}
		\label{fig:ds-gf}
	\end{figure}
	
	We now show that the RCG, RNG, and GG induced by the vertices of any grid face
	is in fact the subgraph of~$G'$ induced by those vertices.
	For this we must show, for any pair of vertices sharing a grid face,
	that there is no RCG blocker if they are adjacent
	and that there is a GG blocker if they are not adjacent.
	We do this by examining the grid faces individually.
	\cref{fig:ds-gf} pictures all types of grid faces that may occur in $G'$ (up to symmmetry).
	
	We start with the grid pictured in \cref{fig:ds-gf}(a).
	The vertex $b$ is a GG blocker for $\{a,c\}$, since
	\begin{align*}
		\dist(a,c)^2  = \left(1-\frac{\eps}{3} \right)^2 + \left(1-\frac{2\eps}{3} \right)^2,
		\
		\dist(a,b)^2  = 1 + \frac{1}{4},
		\ \text{and}\
		\dist(b,c)^2  = \left(\frac{1}{2} - \frac{2\eps}{3} \right)^2 + \frac{\eps^2}{9}.
	\end{align*}
	Hence, if $\eps>0$ is sufficiently small 
	(i.e., $\eps<3/8$):
	\begin{align*}
		\dist(a,c)^2 - \dist(a,b)^2 - \dist(b,c)^2 = \frac{1}{2} - \frac{4\eps}{3} > 0
	\end{align*}
	It follows that $b$ is a GG blocker for $\{a,c\}$. 
	Similarly, $c$ is a GG blocker for $\{b,d\}$, 
	since
	\begin{align*}
		\dist(b,d)^2 = \frac{1}{2},
		\
		\dist(b,c)^2  = \left(\frac{1}{2} - \frac{2\eps}{3} \right)^2 + \frac{\eps^2}{9},
		\ \text{and}\
		\dist(c,d)^2  = \left(\frac{1}{2} - \frac{\eps}{3} \right)^2 + \frac{4\eps^2}{9}.
	\end{align*}
	Hence, if $\eps>0$ is sufficiently small
	(i.e., $\eps<9/10$):
	\begin{align*}
		\dist(b,d)^2 - \dist(b,c)^2 - \dist(c,d)^2 = \eps - \frac{10\eps^2}{9} > 0.
	\end{align*}
	It follows that $c$ is a GG blocker for $\{b,d\}$.
In all other cases, 
	the claim is either obvious or follows
	along the same lines as the two cases we have proved.
	
	By \cref{lemma:planar}, there can be no edges between vertices that do not share a grid face.
	Thus, we have proven that the given embedding induces~$G'$ as its RCG, RNG, and GG.
	
	\emph{ETH-lower bound.} Follows from \cref{lem:dsPre} and \cref{lemma:ds}(iv).
\end{proof}

\noindent
It remains open whether \dsAcr{} or \isAcr{} can be solved in polynomial time 
when restricted to RCGs, RNGs, or GGs with maximum degree three.

\section{Conclusion}
\label{sec:conclusion}
We have shown that problems that are NP-hard on planar graphs typically remain NP-hard on the three proximity graph classes we study.
This suggests that the main tools of algorithm theory to attack these problems 
shall be parameterized and approximation algorithms.
\isAcr{}, \dsAcr{}, and \fvsAcr{} all admit polynomial-time approximation schemes on planar graphs~\cite{Baker1994,Kleinberg2001}.
\fvsAcr{} is fixed-parameter tractable on arbitrary graphs~\cite{Bodlaender94},
while \dsAcr{} and \isAcr{} are so on planar graphs~\cite{AlberBFKN02,downey2013fundamentals}.
We are not aware of any improvements to these results that are specific to RCGs, RNGs, or GGs.

It remains an important open question whether or not RCGs, RNGs or GGs can be recognized in polynomial time
and whether an embedding for a given graph can be computed~\cite{Bose1996,Brandenburg2004,Eades1995}.
If not, then one might suspect that the graph problems we have investigated might be easier if one is given an embedding rather than just the graph.
Our reductions, however, prove that they do not, since we also give embeddings for the output graphs.

We showed that \fvsAcr{} is NP-hard on proximity graphs with maximum degree four, and it was already known to be polynomial-time solvable on any graph with maximum degree three.
For the other problems, we did not prove tight bounds on the maximum degree and open questions remain in this respect.
We proved ETH-based lower bounds of~$2^{o(n^{1/4})}$ for each of these problems on RCGs, RNGs, and GGs (with the exceptions of \tcAcr{} and \hcAcr{} on RCGs).
On general graphs,
these problems can be solved in time $2^{\bigO(n^{1/2})}$ on planar graphs and this running time is optimal unless \ETHbreaks~\cite{Lokshtanov2011}.
However, it might be possible to solve these problems on RCGs, RNGs, and GGs with a time bound
strictly between~$2^{o(n^{1/4})}$~and~$2^{\bigO(n^{1/2})}$.

More generally,
we are not aware of any problem that is know to be easier on these graph classes than on arbitrary planar graphs
(excluding trivial cases like \tcAcr{} on RCGs).
Any such example would be of interest.

\bibliography{strings-long,probs-on-rngs-bib}

\end{document}